%% file: main.tex
\input{preamble}

\begin{document}

\title{Bandwidth Fee Mechanisms for Certified Transaction Dissemination}

\author{
Eleftheria Fassman\thanks{University of Michigan} 
\and
Mahimna Kelkar\thanks{Purdue University}
\and
Benjamin Marsh\thanks{Sei Labs and University of Portsmouth}
\and
Ke Wu\thanks{University of Michigan}
}

\maketitle
\input{abstract}

\clearpage
\tableofcontents
\clearpage

\input{Sections/all.tex}

\end{document}

%% file: preamble.tex
\input{Conferences/FullVersion/fullversion.tex}
\usepackage{xcolor}
\usepackage{url}
\usepackage{booktabs}
\usepackage{tikz}
\usetikzlibrary{arrows.meta, positioning, shapes.geometric, calc}
\usepackage{amsfonts,amsmath}
\usepackage[capitalize,noabbrev]{cleveref}
\usepackage{mdframed}
\usepackage{aliascnt}
\hypersetup{hypertexnames=false}
\usepackage{amsthm}
\usepackage{amssymb}

\newcommand{\mahimna}[1]{}
\newcommand{\ke}[1]{}
\newcommand{\ellie}[1]{}
\newcommand{\ben}[1]{}
\newcommand{\alex}[1]{}

\newcommand{\mypara}{\paragraph}

\newtheorem{theorem}{Theorem}[section]
\crefname{theorem}{theorem}{theorems}
\Crefname{theorem}{Theorem}{Theorems}

\newaliascnt{definition}{theorem}
\newtheorem{definition}[definition]{Definition}
\aliascntresetthe{definition}
\crefname{definition}{definition}{definitions}
\Crefname{definition}{Definition}{Definitions}

\newaliascnt{lemma}{theorem}
\newtheorem{lemma}[lemma]{Lemma}
\aliascntresetthe{lemma}
\crefname{lemma}{lemma}{lemmas}
\Crefname{lemma}{Lemma}{Lemmas}

\newaliascnt{claim}{theorem}
\newtheorem{claim}[claim]{Claim}
\aliascntresetthe{claim}
\crefname{claim}{claim}{claims}
\Crefname{claim}{Claim}{Claims}

\newaliascnt{assumption}{theorem}
\newtheorem{assumption}[assumption]{Assumption}
\aliascntresetthe{assumption}
\crefname{assumption}{assumption}{assumptions}
\Crefname{assumption}{Assumption}{Assumptions}

\newaliascnt{corollary}{theorem}
\newtheorem{corollary}[corollary]{Corollary}
\aliascntresetthe{corollary}
\crefname{corollary}{corollary}{Corollaries}
\Crefname{corollary}{Corollary}{Corollaries}

\newaliascnt{remark}{theorem}
\newtheorem{remark}[remark]{Remark}
\aliascntresetthe{remark}
\crefname{remark}{remark}{remarks}
\Crefname{remark}{Remark}{Remarks}

\newcommand{\C}{\ensuremath{\mathcal{C}}}
\newcommand{\assign}{\ensuremath{{\sf assign}}}
\newcommand{\price}{\ensuremath{{\sf price}}}
\newcommand{\reward}{\ensuremath{{\sf reward}}}
\newcommand{\band}{\ensuremath{{\sf cap}}}
\newcommand{\block}{\ensuremath{{\bf B}}}

\newcommand{\tx}{\ensuremath{{\sf tx}}}
\newcommand{\util}{\ensuremath{{\sf util}}}

\newcommand{\rev}{\ensuremath{\tau}}

\newcommand{\faults}{t}

\usepackage{algorithm,algpseudocode}
\usepackage{enumitem}

%% file: Conferences/FullVersion/fullversion.tex
\documentclass[11pt]{article}
\usepackage[utf8]{inputenc}
\usepackage[letterpaper,margin=1in,bmargin=1in]{geometry}

\usepackage[pdfauthor={},pdfpagelabels=true,linktocpage=true,hidelinks,linktoc=all,colorlinks]{hyperref}
\hypersetup{linkcolor = {teal}, citecolor = {magenta}, urlcolor = {black}}

\usepackage[style=alphabetic,backend=bibtex,maxbibnames=4,maxnames=4,minnames=3]{biblatex}

\date{}

%% file: abstract.tex
\begin{abstract}
We study bandwidth fee mechanisms (BFMs) for pricing threshold certified transaction dissemination before consensus. BFMs are analogous to transaction fee mechanisms (TFMs) introduced by Roughgarden (2021), except that they price blockchain communication rather than computation. Separately pricing this bandwidth is especially relevant when transaction dissemination is separated from consensus and execution. We model BFMs as a two-sided procurement auction between users who submit transactions and validators who receive and forward them. Any validator may be assigned as the source of a transaction and forward it to a threshold of attesters. The model accounts for the source validator's receipt of the user submitted transaction and the subsequent transfers between validators. It abstracts away unrestricted multi-hop gossip and network topology.

In this work, we focus on settings where transactions must be certified/attested by a threshold of network nodes before consensus, and we develop the following results.
\begin{itemize}[leftmargin = 5mm]
    \item We characterize the limits of myopic incentive compatibility for BFMs: No non-trivial transcript-based BFM can prevent profitable collusion between two validators, and no non-trivial BFM can simultaneously satisfy incentive compatibility for individual user, individual validator, and coalition of one validator and one user.
    \item In contrast, once we restrict attention to unilateral deviations, we present a greedy-route posted-price mechanism that is incentive compatible for every individual user and validator.
\item We study how the bandwidth price should evolve across blocks. Assuming validators execute honestly, we propose an EIP-1559-style price-update rule and show that its fluid approximation has a unique fixed point, which is locally stable for small enough step sizes. In large markets, prices that start near this point stay near it with high probability over any fixed time horizon.
\end{itemize}
\end{abstract}

%% file: Sections/all.tex
\input{Sections/introduction}

\input{Sections/overview}

\input{Sections/simple-model}
\input{Sections/impossibility}

\input{Sections/posted-price}

\input{Sections/ILP}
\input{Sections/multi-blocks}

\input{simulation}

\mypara{Acknowledgments.}
We thank Tim Roughgarden for useful discussions during early stages of this work.
E.~Fassman and K.~Wu were supported in part by a Stellar Academic Research Grant. 

\printbibliography

\appendix
\newpage
\input{Sections/mpc-tfm}

%% file: Sections/introduction.tex
\section{Introduction}
\label{sec:introduction}

Designing modern blockchains requires the use of techniques not only from distributed systems, but also from mechanism design. A prominent example of this is transaction fee mechanism (TFM) design---introduced in seminal work by Roughgarden~\cite{roughgarden2021_tfm, roughgarden2024_tfm}---which is used to price the scarce resource of blockchain \textit{computation}. In the past few years, there has been a rich body of work on the design of TFMs for a variety of settings and assumptions~\cite{chung2024_tfm, chung2023_tfm, wu2024_tfm, shi2023_tfm, bahrani2024_tfm, garimidi2025_tfm, ganesh2024revisiting}. 

In addition to computation, transactions also consume \emph{communication} resources.
Blockchain protocols often involve a phase of transaction \textit{gossip}, that is, the propagation of transaction data to nodes in the network so that the transaction can be included in a subsequent block proposal within blockchain consensus. This set of gossiped but pending transactions is commonly referred to as the \textit{mempool}. Today, mempool transactions are not charged unless they become part of a block proposal and are finalized and executed as part of the blockchain consensus. In other words, transactions are not properly charged for their \textit{bandwidth} usage prior to their execution. Effectively, if a transaction is not included in a block, or is included but its execution fails, it may end up paying significantly less than its resource strain on the protocol. For this paper, ``gossip'' means a threshold-style certified dissemination before consensus.

\mypara{Insufficiency of TFMs.}
In our view, the current model of execution-based pricing (rooted in TFMs) may now be insufficient with evolution in consensus protocol designs. We provide three broad protocol classes as illustrative examples where bandwidth usage of a transaction does not correspond to its execution.

First, a growing line of performance-focused modern protocols decouple transaction dissemination from ordering and consensus---notably, this is seen in popular DAG-based protocols and variants that copy their split layer design~\cite{keidar2021dagrider, danezis2022narwhal, spiegelman2022bullshark, babel2025mysticeti, giridharan2024autobahn}. In these protocols, when a transaction is being propagated through the network, it is unknown whether or not it will execute as intended. Yet, the cost of this propagation is borne by protocol nodes. 

Second, some protocols utilize \textit{multiple} leaders for creating blocks with the objective of providing better robustness-related guarantees such as censorship resistance and fairness~\cite{garimidi2025mcp,kelkar2020order}. Commonly, multiple leaders need to include the same transaction in their own proposal (parameterized by e.g., the level of robustness desired by the user). Here, the cost of transaction propagation is borne multiple times even though it only executes once in the final ordering.

Finally, similar concerns exist in recent protocols that support encrypted mempools~\cite{choudhuri2024bte}. These protocols hide transaction data before execution to prevent leakage-based adversarial behavior~\cite{daian2020flashboys,qin2022bev}---a side effect however, is that a transaction's final execution is unknown during its propagation. 

Motivated by this challenge, we initiate the theory of mechanisms for pricing \emph{communication} in transaction propagation \emph{prior to consensus}.
We focus on settings where transactions must be made available before consensus determines final inclusion, and where such availability can be certified by attestations from network nodes.

\subsection{Our Contributions}
\label{subsec:contributions}
\paragraph{Conceptual Contribution: Formal Modeling of BFMs with a Planner.}
We introduce the \emph{bandwidth fee mechanism} (BFM), a new primitive for pricing communication in decoupled-dissemination blockchain architectures. 
Consider $m$ protocol nodes called the \emph{validators}.
Before a transaction can be included in the finalized ledger, it must first obtain a proof-of-availability (PoA) certificate that contains at least a certain number of validator signatures. 
Specifically, a transaction is routed to a \emph{source} validator, who then forwards it to $k$ \emph{attester} validators for signatures and then assembles a PoA certificate for the transaction. 
This threshold-PoA abstraction captures the pricing-relevant part of many pre-consensus gossip and availability protocols.
The certificate identifies the source and attesters appearing in the public transcript, but does not by itself certify the physical path through which transaction data was transmitted.

Since bandwidth is not directly observable by the protocol, we rely on validators' self-reports. Each validator \(i\) has a private bandwidth capacity \(\mathsf{cap}_i^*\) and a private cost function $f_i^*(x) = c_i^* \cdot g_i(x)$,
where \(g_i\) is a publicly known shape function and \(c_i^* > 0\) is a private scalar. Here, \(g_i(x)\) captures the validator's public technology or resource-consumption profile, whereas \(c_i^*\) captures private cost intensity such as electricity price, opportunity cost, or operating cost. 
For example, if a validator's hardware consumes a roughly fixed amount of energy per transmitted byte, then $g_i(x) = x$ and $c^*_i$ is its private per-unit cost. More generally, $g_i$ can be an externally specified or benchmarked curve for a validator's hardware or network class.
Users submit transactions with bids indicating how much they are willing to pay for finalization.

We study BFMs in a \emph{planner model}. The planner receives users' transactions and the \(m\) validators' reports of bandwidth capacities and private costs, and assigns each accepted transaction to a \emph{route} \((s;A)\), where \(s\) is the source validator and \(A\) is a set of \(k\) attesters. 
For each accepted transaction \(tx\), the planner authenticates the intended route by signing an assignment \((tx,s;A)\) and makes it publicly auditable.
The source receives the transaction through the planner, and forwards it to $k$ attesters to collects their signatures.
This model is motivated by existing blockchain infrastructure: in practice, users commonly access blockchain networks through RPC endpoints or gateway services, which can naturally play the role of routing planners. 
Importantly, the planner neither itself takes part in validator-to-validator transaction dissemination nor forces validators to follow its instructions. It only specifies the intended routing; validators may still misreport their costs or capacities or deviate during dissemination. We do not count bandwidth usage of the planner. 

Abstractly, the planner model isolates the pricing and incentive problem created by communication scarcity while remaining close to existing gateway-based infrastructure; we abstract away arbitrary multi-hop paths, network topology, path specific latency, and link congestion.
In fact, our impossibility results hold even within this planner model.

A BFM specifies: 1.) for each transaction, whether it is disseminated and by which route, 2.) how much each finalized transaction pays, and 3.) how much each validator is rewarded. 
We define three desired incentive-compatibility notions: 
\begin{itemize}[leftmargin = 5mm, itemsep = 0.6pt, topsep = 1.2pt]
    \item \emph{User incentive compatibility} (UIC) requires that no individual user can profit from misreporting or injecting fake transactions;
    \item \emph{$c$-validator incentive compatibility} ($c$-VIC) requires that no coalition of up to $c$ validators can jointly profit by deviating;
    \item \emph{$(c,d)$-side-contract proofness} ($(c,d)$-SCP) requires that no coalition of up to $c$ validators and $d$ users can jointly profit by coordinating. 
\end{itemize}
We formulate these incentive guarantees \emph{ex post}: they must hold for every realized profile of user values, validator costs, and bandwidth capacities, rather than only in expectation over a prior distribution. 
It captures the stronger guarantee that, for any realized types, truthful behavior is optimal without relying on distributional assumptions.

\paragraph{Impossibility against coalitions.}
We next characterize two limitations of incentive compatibility of BFMs against coalitions.
The first concerns validator coalitions in \emph{transcript-based} BFMs: mechanisms whose payments and rewards depend only on the planner assignment and the finalized PoA certificates available on-chain. This abstraction reflects the standard permissionless blockchain model: the protocol can only base rewards on information that is publicly verifiable by all participants, such as bids, assignments, and PoA certificates. Information such as which machine physically sent a packet or which node stored the data are forgeable, and not part of the ledger state. 
Thus, absent additional trusted hardware or verifiable data-possession primitives, any reward rule implementable purely by the replicated ledger is naturally transcript-based.

Within such a mechanism, a coalition of two validators can share signing keys so that the source can locally compute the attester's signature without ever transmitting the transaction data and spending bandwidth. This saves the communication cost while maintaining the same validator rewards.
The same argument holds when multiple validators collude with some users since the above strategy does not change users' allocation and payment.
We call a BFM \emph{trivial} if it never routes any transaction.

\begin{theorem}[Impossibility against validator coalition] 
\label{thm:intro-2vic}
No non-trivial transcript-based BFM satisfies $2$-VIC.
No non-trivial transcript-based BFM satisfies $(2,d)$-SCP for any $d\geq 1$.
\end{theorem}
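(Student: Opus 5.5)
The plan is a direct deviation argument, using the key-sharing strategy sketched just before the statement. Fix any non-trivial transcript-based BFM. By non-triviality there is a bid/report profile under which some transaction $tx$ is routed to a route $(s;A)$ with $|A|=k\geq 1$. We may assume $s\notin A$, and in the honest execution $s$ transmits $tx$ to each attester $a\in A$. Pick one attester $a\in A$ and consider the coalition $\{s,a\}$. All other players keep their reports fixed. Since the deviation happens only during dissemination, the planner's inputs and hence its assignment are unchanged.

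\textbf{The deviation.} Validator $a$ gives its signing key to $s$, or equivalently the coalition agrees that $s$ signs on $a$'s behalf. Validator $s$ then produces $a$'s attestation on $tx$ locally and never sends the transaction data to $a$. Everything else is executed honestly, so $s$ still forwards $tx$ to the remaining attesters in $A\setminus\{a\}$ and collects their signatures. The resulting PoA certificate is bit-for-bit a valid certificate naming source $s$ and attester set $A$. Together with the unchanged planner assignment, the public transcript is therefore identical to the honest one.

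\textbf{Comparing utilities.} The mechanism is transcript-based, so every allocation, user payment and validator reward is a function of the transcript. Consequently the joint reward of $\{s,a\}$ is unchanged. Their joint cost strictly drops. Validator $s$ saves at least the bandwidth of one transmission of $tx$, and $a$ saves the bandwidth of receiving it. Under the cost model $f_i^*(x)=c_i^*g_i(x)$ with $c_i^*>0$, this saving is strictly positive provided $g_i$ is strictly increasing, or at least non-constant on the relevant load. I would state this as the standing assumption or take it from the model section.

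One bookkeeping point concerns how the coalition's utility counts bandwidth. If the model counts only sent bytes, the saving of $s$ alone already suffices. If capacity constraints bind, the deviation only loosens them, so it remains feasible. This strict gain contradicts 2-VIC.

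\textbf{Extending to $(2,d)$-SCP.} Add any $d$ users to the coalition, for instance the user who submitted $tx$ together with arbitrary others. The users behave honestly. Because the transcript is unchanged, their allocations and payments are unchanged. The coalition's joint utility therefore still increases strictly by the validators' saved cost, and $(2,d)$-SCP fails for every $d\geq1$.

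The main obstacle is making sure the model's utility definitions actually charge validators for bandwidth they physically spend rather than bandwidth the transcript attributes to them. I also need a non-degenerate route with at least one attester distinct from the source, so that the saving is strictly positive. Both points must be pinned down from the formal model section.
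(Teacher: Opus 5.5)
Your proposal is correct and follows essentially the same argument as the paper: the attester $a$ delegates its signing key to the source $s$, so the finalized transcript (and hence every reward and payment) is unchanged while $s$ and $a$ each save one unit of actual bandwidth, and adding users who behave honestly, such as the user who submitted $\tx$, yields the $(2,d)$-SCP violation. The concerns you flag are settled by the model: utility is charged on actual usage $\beta^{\rm act}_i$, $g_i$ is strictly increasing with $c_i>0$, and routes have $s\notin A$ with $k\ge 1$. For randomized BFMs, the paper makes the deviation conditional on the positive-probability event that $\tx$ is assigned to $(s;A)$, an event $s$ observes from the planner's assignment.
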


Having ruled out any mechanism resilient against two or more colluding validators, we ask, how about validator-user coalitions with only one validator? 
Can we design a BFM that achieves UIC, $1$-VIC, and $(1,d)$-SCP? 
Unfortunately, the answer is no.
We develop our second impossibility against coalitions involving only one validator and users. Via a reduction to the MPC-assisted TFM impossibility of~\cite{shi2023_tfm}, we show that

\begin{theorem}[Impossibility against validator-user coalition]
\label{thm:intro-3-prop-impossibility}
Suppose all cost functions satisfy $f^*_i(x) > 0$ for any $x > 0$ and are strictly increasing. Then no non-trivial BFM simultaneously satisfies UIC, $1$-VIC, and $(1,1)$-SCP.
\end{theorem}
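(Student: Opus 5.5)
The plan is to reduce to the impossibility of~\cite{shi2023_tfm} for MPC-assisted TFMs. That result states that no MPC-assisted TFM with non-trivial allocation and finite (bounded) block size simultaneously satisfies UIC, MIC (miner IC), and $(1,1)$-SCP; in the MPC-assisted setting, the miner's utility is its total revenue. Assume for contradiction that a non-trivial BFM $\mathcal{M}$ satisfies UIC, $1$-VIC, and $(1,1)$-SCP. From $\mathcal{M}$ we build an MPC-assisted TFM $\mathcal{T}$ that inherits all three properties and is non-trivial. This contradicts~\cite{shi2023_tfm}.

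\textbf{Construction.} Since $\mathcal{M}$ is non-trivial, there is some validator-type profile under which it routes a transaction. We pick such a profile: reported costs $c_i$ and capacities $\band_i$, with at least one transaction routed for some bid vector. We then freeze all validators except one distinguished validator $j$ at these truthful types. The TFM $\mathcal{T}$ takes a bid vector $\mathbf{b}$ and simulates the planner of $\mathcal{M}$ on $\mathbf{b}$ together with the frozen validator reports. It confirms exactly the transactions $\mathcal{M}$ routes and charges the same user payments. The miner's revenue is defined as validator $j$'s reward minus its truthful bandwidth cost $f^*_j$ for its assigned load. Alternatively, to match the TFM model in which miner utility is revenue, we may aggregate the rewards of all validators net of costs into a single ``miner'' and argue via $1$-VIC applied to $j$.

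The cleanest way to handle the aggregation is to let the TFM miner correspond to the single validator $j$ and treat the remaining validators as honest, fixed parts of the mechanism. The capacity bound $\band_j$ (or the sum of capacities) gives a finite block size. The positivity and strict monotonicity of $f^*_i$ are used here: any routed transaction imposes strictly positive load and cost, so $\mathcal{T}$'s confirmations are nontrivially constrained. These assumptions also ensure that the miner's net revenue is a well-defined function of the allocation, matching the TFM accounting.

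\textbf{Transferring the incentive properties.} UIC of $\mathcal{T}$ follows directly, since user strategy spaces and utilities coincide. For MIC, note that every deviation of the TFM miner is one of two kinds: injecting fake bids, which in $\mathcal{M}$ is realizable by validator $j$ acting as or with a fake user, or misbehaving within the MPC, which is not available in the MPC-assisted model. Both are therefore covered by $1$-VIC of $\mathcal{M}$. Similarly, a miner–user coalition in $\mathcal{T}$ maps to a coalition of validator $j$ and one user in $\mathcal{M}$, so $(1,1)$-SCP transfers. Non-triviality of $\mathcal{T}$ follows from the choice of frozen profile.

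\textbf{Main obstacle.} I expect the hardest step to be reconciling the utility notions. In a BFM, validator $j$'s utility is reward minus cost and depends on its own reported type. The~\cite{shi2023_tfm} miner, by contrast, has revenue only, with no private cost. I would fix $j$'s true cost and define the TFM's miner revenue as $j$'s net utility under truthful reporting. Then I would check that the proof in~\cite{shi2023_tfm} uses only injection and coalition deviations that $\mathcal{M}$'s properties cover, and that a possibly negative or cost-shifted revenue does not break their argument. If it does break it, I would instead replay their proof structure directly within the BFM, using the strict monotonicity of $f^*$ wherever they rely on the block-size constraint.
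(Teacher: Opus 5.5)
There is a genuine gap. The MPC-assisted result you invoke is stronger than what holds for coalitions with a single user. In the finite-block MPC-assisted model, UIC, MIC and $(1,1)$-SCP do not rule out non-trivial mechanisms. They only force the validator revenue to be zero on every bid vector; this is Theorem C.5 of \cite{shi2023_tfm}, the form this paper uses. The outright finite-block impossibility needs coalitions containing two users.

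A concrete counterexample: fix a reserve $r$, confirm a uniformly random subset of at most $\Theta$ bids that are at least $r$, charge each confirmed bid $r$, and burn everything. This mechanism is non-trivial and satisfies UIC, MIC and $(1,1)$-SCP. So your induced $\mathcal{T}$ being non-trivial with these three properties is no contradiction; it can simply be a zero-revenue mechanism.

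Otherwise your construction essentially matches the paper's reduction. Miner revenue is the reward minus the true cost $f^*_i(\beta^{\rm ass}_i)$ of the assigned load, which is non-negative by individual rationality. The finite block size $\sum_i \mathsf{cap}^*_i$ comes from the capacity part of individual rationality, not from positivity of costs. What the reduction actually yields is this: for every validator type profile and every bid vector, each validator's net utility in the honest execution is exactly zero.

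The missing step is a BFM-specific deviation with no TFM analogue: misreporting the cost scalar. By non-triviality, fix a type profile and bid vector under which some validator $i$ with true scalar $c^*_i$ gets $\beta^{\rm ass}_i > 0$ with positive probability. Zero net utility gives $\mathbb{E}[r_i] = \mathbb{E}[c^*_i g_i(\beta^{\rm ass}_i)]$.

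Now consider the world where $i$'s true scalar is $\tilde c_i \in (0, c^*_i)$ and everything else is unchanged. Apply the zero-utility conclusion to this second profile too; this is why you need it for all type profiles, not one frozen profile. Then $i$'s honest utility there is $0$. If $i$ instead reports $c^*_i$ and disseminates honestly, it reproduces the original assignment and reward distribution but pays only $\tilde c_i g_i(\beta^{\rm ass}_i)$. Its expected utility is $\mathbb{E}[(c^*_i - \tilde c_i) g_i(\beta^{\rm ass}_i)] > 0$, contradicting $1$-VIC.

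This is exactly where the hypothesis $f^*_i(x) > 0$ for $x > 0$ enters. You instead give it a role, making confirmations ``nontrivially constrained'', that it does not play. Your fallback of replaying the MPC proof inside the BFM never identifies this overreporting deviation.
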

The reduction isolates a restricted, TFM-like subgame by fixing validator types and assuming honest dissemination. We reach an intermediate conclusion that any BFM that is UIC, $1$-VIC, and $(1,1)$-SCP must have zero validators' net utility (its reward minus its reported cost). 
The full impossibility then exploits the additional BFM strategy space: if a validator is assigned positive amount of work with a positive cost, it may overreport its cost to earn more reward, which breaks $1$-VIC.
One way to circumvent this impossibility is to assign work only to those zero-cost validators --- but this is not practical in real-world contexts. 

\paragraph{Greedy-route posted-price BFM.}
Motivated by the impossibility results, we sacrifice coalition resistance and focus on unilateral incentive compatibility. 
UIC and $1$-VIC form a natural non-cooperative baseline for the myopic BFM game. Together, they imply that the prescribed honest behavior is an ex-post Nash equilibrium: no individual player can improve its utility by deviating unilaterally. 
We design a \emph{greedy-route posted-price mechanism} that achieves UIC and $1$-VIC, making it robust against any single deviating party.

    \begin{mdframed}
        \begin{center}
            Greedy-Route Posted-Price Mechanism
        \end{center}
        The mechanism fixes a base fee $b > 0$ charged to each finalized transaction and a reward parameter $0 < \tau < b$ being the reward scalar of validator's verifiable work. 
        
        Validators with reported cost scalar $c_i \geq \tau$ are excluded. The planner scans all $L = m \cdot \binom{m-1}{k}$ possible routes in a fixed lexicographic order and greedily accepts copies of each route so long as two conditions hold: 
        \begin{itemize}[leftmargin = 5mm, itemsep = 1pt, topsep = 1.2pt]
            \item \textit{Capacity:} No validator's assigned bandwidth cost exceeds its reported capacity.
            \item \textit{Budget:} The incremental reward cost of one more copy of the route does not exceed $b$. 
        \end{itemize}
        Incoming transactions are processed in a priority order that cannot be manipulated by fake-bid injection, such as the timestamp when the planner receives the transaction.
    \end{mdframed}

\begin{theorem}[Incentive compatibility of greedy-route posted-price mechanism]
\label{thm:intro-main-greedy}
For any $m \geq 3$ and $0 < \tau < b$, the greedy-route posted-price mechanism satisfies UIC and $1$-VIC.
\end{theorem}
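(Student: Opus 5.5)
We prove the two properties separately, using the structural fact that drives both: the allocation and reward side of the greedy-route mechanism depends only on validators' reports and the fixed scan order, never on user bids beyond the accept/reject threshold $b$.

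\textbf{UIC.} Fix all validator reports and all other users' bids. The greedy scan over the $L=m\binom{m-1}{k}$ routes determines a number of route slots $N$ that does not depend on any bid. This is because the capacity and budget checks involve only reported costs, capacities, $\tau$ and $b$. Transactions are then filled into these slots in the manipulation-proof priority order, e.g.\ planner arrival timestamp, and only among those whose bid is at least $b$. A user with true value $v$ who is accepted pays exactly $b$, so its utility is $v-b$ if accepted and $0$ otherwise, and its bid affects only whether it clears the threshold. Bidding truthfully is therefore optimal: if $v\ge b$ it secures the slot whenever any bid $\ge b$ would, and if $v<b$ rejection is weakly better.

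Fake transactions also cannot help. Since they receive later timestamps, they cannot push the real transaction earlier in the order; at best they consume slots the user does not need. Any fake transaction that is finalized also pays $b>0$, which the user bears. Hence injecting fakes weakly lowers utility. We will state this as a lemma: the user's allocation is monotone in its bid, the payment is the constant threshold $b$, and its priority position is independent of injections. Myerson-style reasoning then finishes the argument.

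\textbf{1-VIC.} Fix everyone else and consider validator $i$ with true $(c_i^*,\band_i^*)$. Its reward from a realized assignment is $\tau$ times its verifiable work. We define this as $\tau\, g_i$ evaluated at its assigned load, which is certified by the transcript. Its utility is therefore $(\tau-c_i^*)\,g_i(x_i)$, where $x_i$ is the honestly served load.

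\emph{Step 1 (cost misreports).} The reported cost enters only through exclusion ($c_i\ge\tau$) and possibly the budget check. If $c_i^*<\tau$, each unit of load gives positive net utility, so $i$ wants the largest feasible load. We show that reporting $c_i^*$ yields exactly the load $i$ would get from any report below $\tau$. If instead the budget test uses reported $c_i$, we must argue that lowering $c_i$ can only lead greedy to add routes containing $i$ up to the capacity $i$ declared. Routes containing $i$ then serve more transactions, but each extra unit is still profitable, so we need to verify that greedy never assigns load that is unprofitable at the true cost. If $c_i^*\ge\tau$, exclusion gives utility $0$, and any participation gives nonpositive net utility.

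\emph{Step 2 (capacity misreports).} Overreporting the capacity may produce load above $\band_i^*$, which $i$ cannot serve. This costs it the reward for unserved certificates, and work it cannot perform is infeasible. Underreporting only truncates the profitable load. The subtle point is that greedy is not monotone across validators in general: changing $i$'s capacity changes which later routes in the lexicographic order pass the checks. We must therefore show that $i$'s own load $x_i$ is monotone nondecreasing in $\band_i$ up to the true value. We will argue this by the scan: each route copy containing $i$ is accepted iff every member has residual capacity and the budget holds. Raising $\band_i$ weakly increases $i$'s residual capacity at every step, and other validators' residuals evolve identically until $i$ binds.

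\emph{Step 3 (dissemination deviations).} A single validator cannot fake an attester's signature. Skipping forwarding as a source therefore forfeits the certificate and its reward, and refusing to attest forfeits its attester reward. Both savings are bounded by costs $c_i^*g_i$ that are smaller than the forfeited rewards $\tau g_i$. Here $m\ge3$ ensures routes with distinct source and attesters exist, so these deviations are well defined.

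We expect the monotonicity argument in Step 2, together with the budget-check interaction in Step 1, to be the main obstacle. If straightforward monotonicity fails, the fallback is to show that the budget condition depends only on $\tau$ and the public $g_j$, so that cost reports affect only exclusion.
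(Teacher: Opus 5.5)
\textbf{There is a genuine gap in the 1-VIC part: fake-bid injection by the validator is never treated.} This deviation is in a validator's strategy space. It is exactly the deviation the budget check \eqref{eqn:cond2} exists to neutralize, and it is what makes the IP-based allocation fail. Your utility expression $(\tau-c_i^*)g_i(x_i)$ also omits the payments for fake bids.

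Your own UIC observation shows why this matters. The sequence of accepted routes is fixed by the reports, but how far along it the scan proceeds is the minimum of its length and the number of eligible bids. So when honest eligible demand is below the greedy capacity, validator $i$ can inject eligible fakes to extend the accepted prefix with routes containing itself, raising its load and reward. None of your Steps 1--3 rules this out.

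The missing argument runs as follows. Validator $i$'s load depends only on which routes are accepted, not on which transactions fill them. So injecting $n'$ eligible fakes changes $i$'s load only through $\Delta\le n'$ extra suffix routes. For each such route copy $\ell$, since every $g_j$ is increasing, $\tau\bigl(g_i(\beta_i^{\ell})-g_i(\beta_i^{\ell-1})\bigr)\le \tau\sum_j\bigl(g_j(\beta_j^{\ell})-g_j(\beta_j^{\ell-1})\bigr)\le b$ by \eqref{eqn:cond2}. On the other side, if $\Delta>0$ then all honest eligible transactions were already served without the fakes. Hence at least $\Delta$ fakes are assigned and finalized, each paying $b$. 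The reward gain is at most $\Delta b$ and is offset by at least $\Delta b$ in payments, plus the nonnegative extra cost of the added work.

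\textbf{Two smaller issues.} First, in Step 2 the claim that raising $\band_i$ weakly increases $i$'s residual capacity at every step fails once the two runs diverge, since their pointers and loads then differ. Use a first-divergence argument instead. At the first route copy where the runs differ, \eqref{eqn:cond2} and the other validators' capacity checks coincide. So the higher-report run accepts some $R$ containing $i$ that is rejected under $\band_i'$ because $\beta_i+w_i(R)>\band_i'$. Hence the higher run's final load exceeds $\band_i'$, which bounds the lower run's final load. The same argument shows that an overreport whose final load stays at most $\band_i^*$ reproduces the truthful allocation, a case you leave open.

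Second, your steps are argued as isolated deviations. Step 3 compares saved cost $c_i^*g_i$ with forfeited reward $\tau g_i$, which presumes $c_i^*<\tau$. That fails for a validator that misreported its cost to get included. Moreover, the reward is all-or-nothing in source work ($r_i=0$ unless $B_i(\block)=B_i^{\rm ass}$), not per certificate.

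The paper handles joint deviations by successive reduction:
\begin{itemize}
\item Any strategy not completing its source work gives utility at most $0$, which is at most the honest utility by individual rationality.
\item For the remaining strategies, $\beta_i^{\rm act}\ge\beta_i^{\rm ass}$ and $A_i(\block)\le A_i^{\rm ass}$ make honest dissemination optimal for any fixed reports and fakes, without needing $\tau>c_i^*$.
\item Fakes, the cost report, and the capacity report are then removed in turn, each for arbitrary values of the remaining components. The cost report matters only through exclusion, since \eqref{eqn:cond2} involves only $\tau$ and the public $g_j$; state this as a fact about the mechanism rather than as a fallback.
\end{itemize}
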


We also compare the greedy mechanism against the optimal integer-program allocation subject to the base fee $b$ and reward parameter $\tau$. 
In the capacity-limited regime, when the final greedy allocation is capacity-maximal so that every route is blocked by some validator's remaining capacity, the greedy-route BFM is a $(2k+1)$-approximation to the optimal throughput, with an additive slack of $\frac{(k+1)m}{2k+1}$.

\paragraph{Price update and dynamic stability.}
Having established the greedy-route posted-price mechanism in the myopic setting, we next turn to how the base fee should be updated across blocks.
Here we assume that validators report their types once and then execute their assignments honestly, and we focus on the capacity-limited regime, in which the route-level budget constraint is slack near the operating point.
We design an EIP-1559-style update rule for this setting and study its local behavior around the operating point.

Let $\Theta^{\max}$ denote the maximum aggregate bandwidth the greedy-route mechanism can disseminate under unlimited demand.
In each round $t$, let $\Theta_t$ denote the bandwidth available at the beginning of the round and $\Gamma_t$ the bandwidth consumed during it.
Used bandwidth recovers at a constant rate $\alpha$.
The base fee $B_t$ targets a usage of $\Theta_t/2$, half of the currently available bandwidth:
\[
  B_{t+1} =
  \begin{cases}
    \displaystyle
    B_t\left(1 + d\cdot\frac{\Gamma_t - \Theta_t/2}{\Theta_t/2}\right) & \text{if } \Theta_t > 0,\\
    B_t & \text{if } \Theta_t = 0,
  \end{cases}
\]
where $d \in (0,1)$ is the step size.
\cref{sec:overview-dynamic} states the same rule in terms of the normalized quantities $X_t := \Theta_t/\Theta^{\max}$ and $G_t := \Gamma_t/\Theta^{\max}$.

\begin{theorem}[Informal: local stability and finite-horizon tracking]
Suppose that in each round $\mathrm{Poisson}(\lambda\Theta^{\max})$ users arrive, with valuations drawn i.i.d.\ from a continuous, strictly increasing distribution $F$ that is differentiable with positive density near the operating price, and that the arrival rate $\lambda$ is large enough to sustain a positive price. Then:
\begin{enumerate}
  \item The fluid approximation of the dynamic BFM has a unique positive-price fixed point $(x^*, b^*)$, at which bandwidth usage equals the target and recovery exactly replenishes consumption.
  \item This fixed point is locally asymptotically stable for every step size $d < \min\{1, d_{\mathrm{crit}}\}$, where $d_{\mathrm{crit}}$ depends explicitly on the recovery rate $\alpha$ and the local demand elasticity at $b^*$.
  \item For every fixed horizon $T$ and every sufficiently small neighborhood $U$ of $(x^*, b^*)$, there is a neighborhood $V \subseteq U$ such that the stochastic process started in $V$ stays in $U$ through round $T$ with probability tending to one as $\Theta^{\max} \to \infty$. Over this horizon, its deviation from the fluid trajectory is $O_p(1/\sqrt{\Theta^{\max}})$.
\end{enumerate}
\end{theorem}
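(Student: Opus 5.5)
We treat the informal theorem as three separate claims about a two-dimensional map. First we write down the fluid approximation. In normalized variables $(X_t,B_t)$ we replace the random usage $G_t$ by its conditional mean $\bar G(x,b)=\min\{x,\lambda(1-F(b))\}$. This is the expected number of Poisson arrivals with value at least $b$, truncated at available bandwidth; one transaction corresponds to one normalized unit, and we are in the capacity-limited regime, so the greedy-route mechanism's throughput is exactly the available bandwidth. The fluid map is then
\[
\Phi(x,b)=\Bigl(\min\{1,\,x-\bar G(x,b)+\alpha(1-x)\},\;\; b\bigl(1+d\,\tfrac{\bar G(x,b)-x/2}{x/2}\bigr)\Bigr),
\]
where the recovery law is taken as in \cref{sec:overview-dynamic}; we use it in this form, with recovery proportional to the deficit.

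For \textbf{existence and uniqueness} of the fixed point we solve two equations. The price equation forces $\bar G=x/2$. The bandwidth equation gives $\bar G=\alpha(1-x)$. Together they yield $x^*=2\alpha/(1+2\alpha)$, and then $\lambda(1-F(b^*))=x^*/2$. Continuity and strict monotonicity of $F$ give a unique $b^*$, provided $\lambda>x^*/2$. This is the ``large enough $\lambda$'' hypothesis, and it ensures that the positive-price solution exists. At $b^*$ the truncation is inactive, since $\lambda(1-F(b^*))=x^*/2<x^*$, so $\Phi$ is smooth near the fixed point.

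For \textbf{local stability} we compute the Jacobian at $(x^*,b^*)$. Write $\varepsilon=\lambda b^* F'(b^*)/(x^*/2)$ for the local elasticity. The entries are
\[
\partial_x\Phi_1=1-\alpha,\qquad \partial_b\Phi_1=\lambda F'(b^*),\qquad \partial_x\Phi_2=-\tfrac{d b^*}{x^*},\qquad \partial_b\Phi_2=1-d\varepsilon.
\]
The fixed point is asymptotically stable exactly when the trace $T$ and determinant $D$ satisfy the Jury conditions $|D|<1$ and $|T|<1+D$. Each condition reduces to a linear or quadratic inequality in $d$. Combining them produces an explicit $d_{\mathrm{crit}}(\alpha,\varepsilon)$ such that all conditions hold for $d<\min\{1,d_{\mathrm{crit}}\}$. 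We expect this step to be the \textbf{main obstacle}: it is an algebraic case analysis on the sign of the discriminant, and the crucial check is that the off-diagonal product stays positive. That positivity makes $D=(1-\alpha)(1-d\varepsilon)+d\lambda F'(b^*)b^*/x^*$ controllable. Having the ordering $D<1$, $1-T+D>0$ hold for small $d$ is immediate, because at $d=0$ the eigenvalues are $1-\alpha$ and $1$. So we first do a perturbation argument showing that the unit eigenvalue moves inside the disk for small $d>0$. We then compute the exact threshold at which $1+T+D$ or $1-D$ vanishes, and take that threshold as $d_{\mathrm{crit}}$.

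For \textbf{finite-horizon tracking} we write the stochastic process as $Z_{t+1}=\Phi(Z_t)+\xi_t$. The noise is $\xi_t=(-(G_t-\bar G),\,\tfrac{2db}{x}(G_t-\bar G))$. Given $Z_t$, $G_t-\bar G$ is a normalized Poisson deviation with variance $O(1/\Theta^{\max})$. The truncation at $x$ is inactive with high probability near the fixed point, so Chebyshev bounds give $\|\xi_t\|=O_p(1/\sqrt{\Theta^{\max}})$. Let $L$ be a Lipschitz constant for $\Phi$ on a compact neighborhood $U$. A discrete Gr\"onwall induction then gives $\|Z_t-\Phi^t(Z_0)\|\le \sum_{s<t}L^{t-1-s}\|\xi_s\|$ as long as the path stays in $U$, which is $O_p(1/\sqrt{\Theta^{\max}})$ for fixed $T$. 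Local stability and continuity let us choose $V\subseteq U$ so that fluid trajectories from $V$ stay in a smaller set $U'\Subset U$ for all $t\le T$. A union bound over $T$ rounds then shows that the stochastic path stays within $\mathrm{dist}(U',U^c)$ of the fluid path, and hence in $U$, with probability tending to one.
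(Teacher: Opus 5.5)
Your proposal is correct and follows essentially the paper's route: the two fixed-point equations, the Jury conditions on the linearization, and a Lipschitz/discrete-Gr\"onwall tracking bound with a stopping-time argument, together with a choice of $V$ whose fluid orbits stay inside a compact subset of $U$. Your Jacobian in absolute coordinates is similar to the paper's relative-coordinate $J(d)$, with $\varepsilon=2\kappa$.

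A few minor corrections:
\begin{itemize}
\item \emph{The factor $q$ is missing.} In the paper's normalization each transaction consumes $q=2k+1$ bandwidth units, so the fluid load is $\min\{x,q\lambda(1-F(b))\}$. Your $\lambda$ should read $q\lambda$ throughout; this is a pure rescaling and does not affect the argument.
\item \emph{The stability step is easier than you expect.} Both $\operatorname{tr}J$ and $\det J$ are affine in $d$, and $1-\operatorname{tr}J+\det J=(1+2\alpha)d\kappa>0$ for every $d>0$. So no discriminant case analysis or perturbation argument is needed; $d_{\mathrm{crit}}$ comes directly from the two remaining linear conditions.
\item \emph{A sign slip.} The off-diagonal product $J_{12}J_{21}$ is negative. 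What is positive is $-J_{12}J_{21}$, which is exactly the term your (correct) formula for $\det J$ adds.
\end{itemize}
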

We develop this result via a fluid approximation in \cref{sec:overview-dynamic}. The guarantee is local and holds over fixed finite horizons. 

We further complement the analysis with numerical simulations in the full version. The experiments show that 1.) the stochastic process closely track the fluid fixed-point predictions, 2.) the normalized deviations shrink as market size grows, and 3.) the predicted local stability threshold is visible in simulated price trajectories.

\subsection{Additional Related Work and Future Directions}
\label{sec:related}
\smallskip\noindent{\bf Transaction Fee Mechanisms (TFMs).}
As discussed previously, TFMs study how to price blockchain computation among competing transactions. Beginning with the foundational treatments of Roughgarden~\cite{roughgarden2021_tfm,roughgarden2024_tfm} and Chung and Shi~\cite{chung2023_tfm}, a substantial literature has developed around incentive-compatible TFMs~\cite{wu2024_tfm,shi2023_tfm,chung2024_tfm,bahrani2024_tfm,garimidi2025_tfm, ganesh2025characterizing,ganesh2024revisiting, ferreira2024incentive}. The central design goals are user incentive compatibility (UIC), miner incentive compatibility (MIC), and various notions of collusion resistance such as side-contract proofness (SCP) and off-chain agreement proofness (OCA-proofness). A recurring theme is a fundamental tension between UIC and collusion resistance---e.g., ~\cite{chung2023_tfm} showed that UIC and 1-SCP are incompatible in the plain model, ~\cite{shi2023_tfm} proved UIC and 2-SCP are incompatible even with cryptography in the MPC-assisted model, and~\cite{chung2024_tfm} showed it is impossible to achieve UIC, MIC, and OCA-proofness together. 

Although BFMs borrow several incentive notions from the TFM literature, the underlying resource-allocation problem are different. In a TFM, users compete for a \emph{publicly known, fixed scalar} amount of block space, whereas in a BFM, the scarce resource being the validators' private bandwidth, and can only be accessed by the mechanism through validators' self report. 
This fundamentally different scarce resource also changes the strategy space. In MPC-assisted TFM, a miner can only inject fake bids. Once bids are submitted, the allocation and payment rules are faithfully executed. 
In a BFM, validators may strategically deviate in how they report their capacity and cost that influence the scarce resource and deviate in executing the assigned dissemination.

\smallskip\noindent{\bf Propagation incentives and fee dynamics.}
Babaioff et al.~\cite{babaioff2012balloons} study incentives to forward Bitcoin transactions and design propagation rewards in a peer to peer network. Our mechanism prices the bandwidth of an assigned source and threshold attesters, using the resulting PoA certificate for accounting. Leonardos et al.~\cite{leonardos2021dynamical} give conditions for convergence of EIP-1559's base fee and show that larger update steps can produce chaos. Their subsequent work~\cite{leonardos2024optimality} bounds average block utilization near its target even without fee convergence. Our usage target changes with available validator bandwidth, our dynamic results establish local fluid stability and finite horizon stochastic tracking.

\smallskip\noindent{\bf Two-sided and multidimensional procurement auctions.}
Our model is also related to classical procurement auctions~\cite{laffont1986cost,mcafee1986bidding}, where a buyer procures a service from agents with private costs. 
A large literature studies richer procurement settings in which suppliers differ along multiple dimensions, such as price, quality, or correlated cost components. 
Canonical examples include multidimensional procurement auctions and scoring auctions~\cite{che1993design,branco1997design,asker2008properties}, where bids specify both monetary and non-monetary attributes and the buyer evaluates offers through a scoring rule. 
More broadly, multidimensional screening is known to be substantially harder than single-parameter mechanism design~\cite{rochet1987rationalizability, rochet1998ironing}. 

BFMs share this multi-parameter flavor because validators report both bandwidth capacities and cost information. 
However, BFMs differ from classical procurement in several blockchain-specific ways. 
First, the allocation is combinatorial: each transaction must be routed through a source validator and a set of attesters, so the mechanism must reason about network routes and bandwidth constraints rather than selecting a single supplier for each transaction. 
Second, validators are not passive suppliers who simply deliver a product after being selected; they may deviate during the dissemination process (the ``service'') itself. 
Finally, the protocol observes only the public ledger transcript, such as assignments and PoA certificates, rather than the physical packet transfers that generated it. 
Thus, while classical procurement and multidimensional screening provide useful benchmarks, BFMs require new definitions and impossibility arguments tailored to certified communication in blockchain protocols.

\smallskip\noindent{\bf Future directions.}
We position our work as the first to study bandwidth pricing and leave several directions open. 
A natural first one is to remove or decentralize the planner: instead of a single planner globally assigns routes, can route selection be implemented by validators themselves in a decentralized manner or by a consensus-level committee while preserving incentive compatibility? 
Second, our dynamic analysis assumes that validators report once and then execute honestly in later rounds. Understanding farsighted validator or user incentives across multiple rounds is an important next step.
Third, our impossibility results rule out coalition robustness for transcript-based BFMs. An interesting future direction is to identify additional assumptions or primitives, such as non-delegable data receipts or verifiable data-possession challenges, under which meaningful approximate or restricted coalition incentive compatibility becomes possible.

%% file: Sections/overview.tex
\section{Technical Overview}
\label{sec:technical-overview}

This section details our model and highlights the core technical aspects of our main contributions. 
\cref{sec:overview-model} introduces a more detailed model, strategy spaces, utilities, and incentive-compatibility notions. 
\cref{sec:overview-impossibility} explains the techniques used in the two impossibility results and \cref{sec:overview-greedy} presents the greedy-route BFM.
\cref{sec:overview-dynamic} studies multi-round price dynamics.

\subsection{Model Overview}
\label{sec:overview-model}
Let $m$ be the total number of validators and $k$ be the number of attesters needed for each transaction to get a PoA certificate.
For the technical overview, we abstract away the consensus part and assume that transactions with a valid PoA certificate are finalized.
Each validator has a private bandwidth capacity $\mathsf{cap}^*_i$ (the maximum number of bandwidth units they can spend) and a private cost function $f^*_i(x) = c_i \cdot g_i(x)$, where $g_i$ is a \emph{publicly known} shape function capturing how cost scales with bandwidth usage, and $c_i > 0$ is a private scalar. We assume $g_i(0) = 0$ and that $g_i(x)$ is strictly increasing.

In each round, a stream of users arrive, each wishing to have one transaction committed. Each user $u$ has a true value $b^*_u$ of its transaction, which specifies the maximum amount it is willing to pay for finalizing the transaction. Each user submits a bid to compete with other users for PoA. We assume a \emph{centralized planner} who receives users' transactions and validators' reports on bandwidth capacities and private costs and implements a bandwidth fee mechanism (BFM) that specifies:
\begin{itemize}[leftmargin = 5mm, itemsep = 1pt]
    \item An {\bf allocation rule} that given reported validator capacities and costs and user bids, assigns each transaction to a route. A route $(s; A = \{a_1, \ldots, a_k\})$ consists of a source validator $s$, who is responsible for forwarding it to $k$ attester validators $a_1, \ldots, a_k$, each of whom stores the data and returns a signature. 
    Here, $s\notin A$ and $\{s, a_1,...,a_k\} \subseteq [m]\cup\{\bot\}$, where $\bot$ means the transaction is not assigned for dissemination this round.
    The source then assembles the $k$ signatures into a PoA certificate.
    \item A {\bf payment rule} specifying how much to charge each finalized transaction.
    \item A {\bf reward rule} that specifies how much reward each validator gets.
\end{itemize}
A BFM must satisfy the following three feasibility requirements. 
\begin{itemize}[leftmargin = 5mm, itemsep = 1pt]
    \item {\it Attestation feasibility}: 
    Each transaction gets assigned at most one route.
    If $s = \bot$, then all $a_1,...,a_k = \bot$; otherwise if $s\in[m]$, then $A\subset[m]$.
    \item {\it Budget feasibility:} the total validator rewards should not exceed total user payments.
    \item {\it Individual rationality:} each user's payment should not exceed its bid, and each validator $i$'s assigned bandwidth usage $\beta^{\rm ass}_i$ should satisfy that $\beta^{\rm ass}_i \leq \band_i$ and that $r_i \geq f_i(\beta^{\rm ass}_i)$, where $(\band_i,f_i)$ is validator $i$'s reported bandwidth capacity and cost functions, and $r_i$ is the expected reward it receives.
\end{itemize}

Without loss of generality, we normalize the bandwidth usage and assume that each transaction consumes one unit of bandwidth, and that sending signatures costs zero bandwidth. This normalization is without loss of generality for our positive results. 
Our greedy-route posted-price mechanism extends naturally to transactions of different sizes, and our lower bound holds even if every transaction has the same unit size.

\smallskip\noindent\textbf{Running Example.}
Throughout this section, we will use a running example with $m = 3$ validators and a parameter $k = 2$. There are $L = m\binom{m-1}{k} = 3 \cdot \binom{2}{2} = 3$ possible routes. We write a route as $(s;A)$ where $s \in \{1,2,3\}$ is the source validator and $A \subseteq \{1,2,3\} \setminus \{s\}$ is the set of $k = 2$ attesters. The three routes are:
\[
    R^1 = (1; \{2,3\}), \qquad R^2 = (2; \{1,3\}), \qquad R^3 = (3; \{1,2\}).
\]
For example, $R^1 = (1; \{2,3\})$ means validator $1$ is the source and validators $2$ and $3$ are the attesters.

A transaction assigned to a non-$\bot$ route consumes $q = 2k+1 = 5$ total bandwidth units: $k+1 = 3$ units from the source (one unit receiving \textsf{tx} from the planner, plus two units sending \textsf{tx} to the attesters), and one unit from each attester receiving \textsf{tx} from the source. We define the bandwidth weight $w_i(R)$ of validator $i\in[m]$ in route $R = (s; A)$ where $s\notin A$ as the number of bandwidth units $i$ spends if that route is used:
\begin{equation}
    \label{eqn:w-band}
    w_i(R) := (k+1)\mathbf{1}_{\{i = s\}} + \mathbf{1}_{\{i \in A\}}.
\end{equation}

\smallskip\noindent\textbf{Utilities and Strategy Space.}
For the incentive analysis, we focus on \emph{myopic} players who only care about their utility in a single round.
A user $u$ with true value $b^*_u$ receives utility ${\bf 1}_{\{u\text{'s transaction finalized}\}}\cdot b^*_u - p_u,$
where $p_u$ is the total payment the user paid. 
A validator $i$ who receives reward $r_i$ and spends $\beta_i$ units of bandwidth receives utility $r_i - f^*_i(\beta_i)$. A validator that uses more bandwidth than its true capacity $\mathsf{cap}^*_i$ crashes and receives utility $-\infty$.

An honest user submits a single bid equal to its true value. A strategic user may submit any number of bids, including fake bids, or may misreport its value. An honest validator reports its true cost function and capacity and follows the assigned routes for dissemination. A strategic validator may misreport its cost scalar $c_i$ or capacity, inject fake bids as a user, deviate from the assigned dissemination rules, or crash intentionally. A coalition's strategy is a combination of all coalition members' strategies based on their joint view. Under a joint view, members share all private information, including validators' signing keys. 

\smallskip\noindent\textbf{Incentive Compatibility.}
We study the following incentive properties:
\begin{itemize}[leftmargin = 5mm, itemsep = 1pt]
    \item {\bf User incentive compatible (UIC):} No individual user can increase its utility compared to honest case by deviation, assuming validators behave honestly.
    \item {\bf $c$-validator incentive compatible ($c$-VIC):} No coalition of up to $c$ validators can strategically increase their joint utility.
    \item {\bf $(c,d)$-side-contract-proof ($(c,d)$-SCP):} No coalition of up to $c$ validators and at least one and up to $d$ users can increase their joint utility by coordinating. 
\end{itemize}

\subsection{Impossibility against Coalitions}
\label{sec:overview-impossibility}

Ideally, one would hope to design a BFM achieving all three desired incentive-compatibility properties. However, we show the following impossibilities.

\smallskip\noindent\textbf{Impossibility against colluding validators (\cref{thm:intro-2vic}).}
We show that no non-trivial \emph{transcript-based} BFM can be resilient against a coalition of two validators, regardless of how the mechanism is designed. 
A BFM is \emph{transcript-based} if, after reports and bids are fixed, payments and rewards depend only on the planner assignment and the finalized PoA certificates. That is, the mechanism cannot verify which validators actually transmitted or received data, but only which signatures appear in the final certificates. 

Within a transcript-based BFM, a coalition of two validators can share their private signing keys with each other and delegate the signing. 
Now if validator $s$ is assigned to source a transaction and validator $a$ is an assigned attester, 
the source $s$ can form a coalition with the attester and perform the following strategy:
$s$ locally computes $a$'s signature using $a$'s signing key without ever sending the transaction to $a$. Otherwise, they behave honestly.
The PoA certificate that $s$ assembles looks identical to an honestly produced one, so the mechanism cannot detect the deviation. However, the coalition has saved the bandwidth cost of sending the transaction from $s$ and receiving by $a$. So, as long as the validator incurs a positive cost of transmitting one unit of data, this strategy of delegating signatures is profitable.
The same result holds for any validator-user coalition with two or more validators.

\smallskip\noindent\textbf{Impossibility against a single validator colluding with users (\Cref{thm:intro-3-prop-impossibility}).}
Ruling out coalitions with multiple validators, we will now focus on the case where there is only one validator in the coalition.
We show a separate impossibility that no non-trivial BFM can simultaneously achieve UIC, $1$-VIC, and $(1,1)$-SCP, where $(1,1)$-SCP means that any coalition of one validator and one user cannot jointly profit by deviating.

The proof reduces the problem to an impossibility result for \emph{Multi-Party Computation (MPC)-assisted-TFMs} due to Shi, Chung, and Wu~\cite{shi2023_tfm}. An MPC-assisted TFM works as follows. There is a fixed block-space limit $\Theta$. An ideal functionality $\mathcal{F}_{\mathsf{MPC}}$ (which always honestly executes as specified) receives the full bid vector from all players (possibly injected fake bids as well), selects at most $\Theta$ transactions to confirm according to the allocation rule, charges each confirmed user according to the payment rule, and transfers a total validator revenue to the validators according to the revenue rule. The key feature is that validators cannot override the allocation or payment rules. Once bids are submitted, the outcome is determined entirely by $\mathcal{F}_{\mathsf{MPC}}$. The only remaining strategic validator behavior is to submit a strategic bid vector, including fake bids.

Now we explain the reduction.
Fix the validators' true types $(\band^*_i,f^*_i)$ for each validator $i$. Suppose all validators report honestly and execute the assigned dissemination honestly. Under these conditions, the BFM $M$ induces an MPC-assisted TFM $\overline{M}$ with block size $\Theta = \sum_i \mathsf{cap}^*_i$: The functionality $\mathcal{F}_{\mathsf{MPC}}$ receives bid vector, runs the allocation rule of BFM. Any transaction with a non-$\bot$ route is confirmed, and each transaction's payment is the same as in the BFM. The revenue each validator $i$ gets is $r_i - f_i^*(\beta^{\rm ass}_i)$, where $r_i$ is the reward of validator $i$ in the reward rule of BFM, and $\beta^{\rm ass}_i$ is the assigned bandwidth usage.
Then the BFM and the induced TFM produce the same finalized transactions, payments, and players' utilities.

Since the only remaining strategy for strategic players is the ability to submit strategic bids in both the BFM and the MPC-assisted TFM, any profitable deviation for a coalition or strategic individual $\C$ in $\overline{M}$ via fake bids translates directly into a profitable deviation of $\C$ in $M$.
Therefore, if a BFM achieves UIC, $1$-VIC, and $(1,1)$-SCP, then the induced TFM also achieves UIC, $1$-VIC (called MIC in the original work), and $(1,1)$-SCP.
However, the impossibility result in \cite{shi2023_tfm} implies that such TFM must always have zero validator utility on any bid vector.

Zero validator utility in $\overline{\mathcal{M}}$ means that in the BFM $\mathcal{M}$, each validator $i$'s utility must be zero in an honest execution on any input bid vector. 
That is, $r_i = c^*_i\cdot g_i(\beta^{\mathrm{ass}}_i)$ on every bid vector. However, since the cost function is self-reported, this creates a profitable deviation. By non-triviality, there exists a validator $i$ that is assigned a positive amount of work in an honest execution ($\beta^{\mathrm{ass}}_i > 0$) with positive probability. 
Now consider a world where validator $i$'s true cost parameter is some $c_i$, where $0 < c_i < c^*_i$. 
In the honest case, validator $i$'s utility is zero. However, in the strategic case, if validator $i$ overreports its cost to $c^*_i$, it gets assigned $\beta^{\mathrm{ass}}_i$ amount of work and receives reward $c^*_i \cdot g_i(\beta^{\mathrm{ass}}_i)$ with a positive probability.
So validator $i$ gets strictly positive expected utility by over-reporting.
This contradicts 1-VIC. So no non-trivial BFM simultaneously achieves UIC, 1-VIC, and $(1,1)$-SCP when all cost functions satisfy $f^*_i(x) > 0$ for $x > 0$ and are strictly increasing.

\subsection{Greedy-Route BFM}
\label{sec:overview-greedy}

Despite the above impossibility results that rule out incentive compatibilities against coalitions, we design a greedy-route posted-price mechanism that achieves UIC and 1-VIC, so no single user or validator can
profitably deviate.

\smallskip\noindent\textbf{The Posted-Price Framework.}
Our mechanism follows a posted-price framework.
It fixes two protocol parameters: a base fee $b > 0$ and a reward parameter $\tau < b$. 
A user transaction is \emph{eligible} if and only if its bid is at least $b$. 
Every finalized transaction pays exactly $b$ to the mechanism, regardless of the user's bid. 
Validators with reports $c_i \geq \tau$ are assigned no work since their per-unit cost exceeds the per-unit reward.

Every validator $i$ that completes its full assigned \emph{source work} receives a reward of $\tau \cdot g_i(\beta_i)$, where $\beta_i = (k+1)B_i + A_i$ with $B_i$ being the number of finalized transactions sourced from $i$ and $A_i$ is the number of attestations validator $i$ provides for other transactions in the finalized ledger.
If a validator fails to complete its assigned source work, it receives nothing.
Note that we do not need to check whether a validator finished its assigned attestation because receiving that transaction already consumes its bandwidth.
If it refuses to sign that transaction, it spends one unit of bandwidth without receiving any rewards for that unit.
Therefore, a validator is always incentivized to finish its assigned attestation work.

\smallskip\noindent\textbf{A Strawman Attempt: Integer Program (IP)-Based Allocation.}
Given the above posted-price framework, a natural thought of designing the allocation rule would be to solve an integer program (or its relaxation) that maximizes the number of finalized transactions subject to each validator's capacity constraint and a budget-feasibility constraint. 

Concretely, let $y_x \geq 0$ denote the number of transactions routed along route $R^x$ for $x \in [L]$, and let $\beta^{\rm ass}_i = \sum_{x \in [L]} y_x \cdot w_i(R^x)$ denote the resulting assigned bandwidth for validator $i$, where $w_i(\cdot)$ is defined in \eqref{eqn:w-band}. 
Let $n$ denote the number of eligible transactions.
The integer program solves non-negative integer solutions $y_x$ for $x\in[L]$ to maximize $Y = \sum_{x \in [L]} y_x$ subject to the following constraints:
\begin{equation*}
    \begin{aligned}
    & \beta^{\rm ass}_i \leq \mathsf{cap}_i,\  \forall i \in [m], &\text{(each validator's assigned work no more than capacity)} \\
    & \tau \cdot \textstyle\sum_{i \in [m]} g_i(\beta^{\rm ass}_i) \leq b \cdot Y, &\text{(total rewards no more than total payment)}
    \end{aligned}
\end{equation*}
If there are multiple optimal solutions, we just pick the first solution subject to lexicographic order.
This finds the globally optimal assignment of transactions to routes given the reported types. However, this approach is vulnerable to fake-bid injection by a strategic validator, as we will explain below.

Consider the example with three possible routes $R^1=(1;\{2,3\})$, $R^2=(2;\{1,3\})$, $R^3=(3;\{1,2\})$.
Let the capacities for three validators be $\band^*_1 = 4, \band^*_2 = 4$, and $\band^*_3 = 2$.
The cost shape $g_1(x) = 1.2x^2$, $g_2(x)= g_3(x) = 0.1 x$.
The private cost scalar is $c_1^* = 0.1$, and $c_2^* = c_3^* = 0.9$.
The base fee $b = 10$ and reward $\tau = 1$.

Suppose in the honest case, there is only $n = 1$ eligible transaction. 
The IP assigns the transaction to $R^2$ because $R^1$ is budget-infeasible and $R^3$ is capacity-infeasible.
So validator $1$ gets utility $(\tau-c^*_1)\cdot g_1(1) = 1.08$.
Now imagine that validator $1$ injects a fake eligible bid and make $n = 2$, and otherwise follows the protocol honestly.
The IP can now select both $R^1$ and $R^2$, each for one transaction.
This satisfies both the capacity and budget constraints.
Now validator $1$'s utility is $(\tau-c^*_1)\cdot g_1(4) - b = 7.28$, which is strictly higher than the honest utility of validator $1$.

In general, if a validator $i$ injects fake eligible bids and inflates demand,
the IP may reoptimize globally over all routes and can shift the entire allocation in response. As a result, it may assign validator $i$ more routes than before.
Even if the validator pays the base fee $b$ for the fake transaction, it may still profit from the extra reward due to the global optimization.

\paragraph{Our Greedy-Route Mechanism.}
The above failed attempt implies that fake demand should not cause a global reoptimization that substantially increases the deviating validator's reward.
This motivates our greedy-route mechanism.

Order all $L = m\cdot \binom{m-1}{k}$ routes lexicographically as $R^1, R^2, \ldots, R^L$ (in our example $L = 3$). The planner scans this list in order, and keeps track of the current assigned bandwidth usage $\beta_i$ for validator $i$.
For each route $R^x = (s; A)$, the planner repeatedly accepts copies of $R^x$ (each copy handling one transaction) as long as two conditions hold:
\begin{itemize}[leftmargin = 5mm, itemsep = 1pt]
    \item {\bf Capacity:} Adding one more copy of $R^x$ does not push any validator's total assigned bandwidth above its reported capacity: $\beta_i + w_i(R^x) \leq \mathsf{cap}_i$ for all $i$.
    \item {\bf Budget feasibility:} The incremental reward cost of one more copy of $R^x$ does not exceed $b$: $\tau \cdot \sum_i (g_i(\beta_i + w_i(R^x)) - g_i(\beta_i)) \leq b$.
\end{itemize}
The algorithm stops either when the number of accepted routes equals the number of eligible transactions $n$, or when the algorithm scans through all routes. Incoming eligible transactions are processed in a fixed \emph{priority order} that cannot be manipulated by injection, such as the timestamp when the planner receives a transaction. When the transactions are processed in order of receipt time, the greedy-route mechanism can be used in a streaming manner.
Although the route list has $L = m\binom{m-1}{k}$ entries, the scan never needs to enumerate it. A dynamic program over candidate attesters finds each next feasible route in $O(m^2k)$ time, so the allocation runs in polynomial time and can still be computed online. We present this algorithm in the full version.

The greedy mechanism avoids the vulnerability shown in the IP example through a key structural property: it accepts route $R$ only when:
\[
\tau\sum_{i}\bigl(g_i(\beta_i+w_i(R))-g_i(\beta_i)\bigr) \le b.
\]
This bounds the total new reward generated across all validators by one more route to at most $b$. 
Thus, by injecting a fake transaction, one validator can get at most $b$ reward in total, but it then needs to pay for the base fee $b$, which makes injection non-profitable.
In our earlier example where validator $1$ profits from injecting a fake transaction, now with the greedy-route mechanism, $R^2$ will be chosen for the first transaction, and the algorithm stops. 
Regardless of whether this transaction is validator $1$'s fake transaction, validator $1$ cannot increase its profit. The formal proof that this mechanism satisfies UIC and 1-VIC for any $m \geq 3$ and $0 < \tau < b$ is given as Theorem~\ref{thm:intro-main-greedy}.

\paragraph{Incentive Properties of Greedy-Route Posted-Price Mechanism}
We now sketch the proof of \cref{thm:intro-main-greedy}.

\smallskip\noindent{\bf UIC}. 
Fix a user $u$ with true valuation $b^*_u$. 
We first show that injecting fake bids does not help.
The planner assigns a fixed priority to incoming bids before running the greedy algorithm. Under the priority rule (e.g., planner-receipt time or random permutation with a fixed random seed), injecting additional fake bids cannot improve the priority position of $u$'s real transaction. 
Moreover, the user may need to pay for its fake transactions that gets committed. Therefore, injecting fake bids is not helpful.
Then we show that misreporting does not help. 
Every finalized transaction pays the fixed price $b$ regardless of its stated bid. The bid affects only eligibility.
A user with $b^*_u \ge b$ is finalized whenever the greedy reaches its slot, and overbidding cannot displace a higher-priority transaction. A user with $b^*_u < b$ cannot gain by bidding $\geq b$, since it would pay $b > b^*_u$ and receive negative utility.
In both cases deviating is not profitable.

\smallskip\noindent{\bf 1-VIC}. A validator's strategy is a combination of 1.) deviating in dissemination; 2.) fake bid injection; and 3.) misreporting cost; 4.) misreporting bandwidth.
In the proof, we rule out these components sequentially.
It should be understood as eliminating components of an arbitrary joint deviation, rather than treating each component as isolated deviations. Since crashing results in $-\infty$ utility, we will only consider strategies without crashing.

\smallskip\noindent{\it 1.) Deviating in dissemination is not profitable.}
A validator only gets paid if it finishes all the assigned source work, and zero otherwise.
Therefore, a validator is always incentivized to finish its source work, i.e., sending its assigned transaction to the corresponding attesters for signatures.
Meanwhile, its reward strictly increases based on the number of attestations it provides, but sending a signature costs no bandwidth.
Therefore, a validator is not incentivized to refuse to attest for other validators.
Finally, any extra spam messages will only consume its bandwidth without creating any reward. 
Therefore, deviating in dissemination is not profitable, and we only need to consider strategies where validators behave honestly during dissemination.

\smallskip\noindent{\it 2.) Fake-bid injection is not profitable.}
The planner sorts incoming bids by a fixed priority rule before running the greedy algorithm. The rule is chosen so that injecting fake bids cannot improve the priority position of any real transaction. Since the greedy mechanism's route order and acceptance conditions depend only on the bandwidth vector ($\beta_i$ for each validator $i$), the set of accepted routes is the same prefix regardless of how many fake bids are present.

Suppose in the honest case the algorithm accepts $|\mathcal{R}^*|$ routes. After injection it accepts at most $|\mathcal{R}^*|+\Delta$ routes for some $\Delta \leq n'$, since each fake bid triggers at most one additional route. By the budget condition, the total incremental reward of each new route is at most $b$, and since every validator on a route earns strictly positive incremental reward, validator $i$'s share is strictly less than $b$ per route. Since each of the $\Delta$ fake bid costs $b$, the net gain is strictly negative.

\smallskip\noindent{\it 3.) Cost misreporting is not profitable.}
The reward parameter $\tau$ creates an eligibility threshold as the posted-price's role to users. 
A validator with $c^*_i < \tau$ earns a positive per-unit surplus $\tau - c^*_i > 0$ on every unit of assigned work, so it wants to be included and should report honestly. If it instead reports $c_i \geq \tau$, it is excluded and earns zero, and a validator with $c^*_i \geq \tau$ earns a non-positive surplus on every assigned unit; that is, reporting $c_i < \tau$ to gain inclusion can only decrease its utility. We show in both cases misreporting the cost scalar does not improve the validator's utility.

\smallskip\noindent{\it 4.) Bandwidth misreporting is not profitable.}
A structural property of the greedy algorithm is its monotonicity. If validator $i$ reports a higher capacity, its assigned bandwidth $\beta^{\rm ass}_i(\mathsf{cap}_i)$ cannot decrease. Underreporting $\mathsf{cap}_i < \mathsf{cap}^*_i$ therefore weakly reduces $\beta_i$, and since the per-unit surplus $\tau - c^*_i > 0$ is positive for included validators, this strictly reduces their earned reward. Overreporting $\mathsf{cap}_i > \mathsf{cap}^*_i$ either causes the validator to be assigned more work than it can complete, in which case it crashes and earns $-\infty$ utility, or the extra reported capacity is never used by the greedy algorithm, in which case the allocation is identical to truthful reporting. In either case, overreporting does not help.

\paragraph{Approximation Guarantee}
The greedy mechanism is not optimal in general, since it commits to a fixed route order. We compare it against the best possible allocation of the IP for a fixed base fee $b$ and reward parameter $\tau$.
Let $\mathrm{OPT}_\tau$  be the optimal value of $Y$ returned by the IP and $G_{\tau}$ be the number of committed transactions in greedy-route mechanism when there are infinitely many eligible bids.
Now, if every route in the optimal solution is blocked by a validator at capacity, then
\[
    G_\tau \geq \frac{1}{2k+1}\mathrm{OPT}_\tau - \frac{(k+1)m}{2k+1}.
\]
This holds because each validator $i$ can block at most $\beta^G_i + k + 1$ optimal transactions, where $\beta^G_i$ is its final assigned workload under the greedy mechanism.
Summing over validators and using the fact that every greedy transaction consumes exactly $q = 2k+1$ bandwidth units gives the bound above. The additive slack $\frac{(k+1)m}{2k+1}$ arises because each validator can have at most $k+1$ units of capacity unused at the boundary when it first becomes the blocking constraint. 

We do not model the approximation in the
cost-congested regime because a fixed route order is not enough to guarantee a meaningful approximation independent of the cost profiles. 
An unlucky ordering can allocate early transactions to routes whose marginal rewards are high, therefore making later routes fail the marginal budget check, even though a different ordering of the routes might accept more transactions if some lower-cost validators appeared first.
Obtaining an approximation guarantee in this regime requires cost-aware route ordering so as to incentivize truthful validator reporting; we leave this to future work.

\subsection{Dynamic Price Update}
\label{sec:overview-dynamic}

\paragraph{The Multi-Round Model.}
We next design a multi-round base-fee update rule and analyze its local behavior around an operating point.
Validators report their types once at $t = 0$, the reward parameter \(\tau\) remains fixed, and validators execute the greedy-route assignments honestly in all subsequent rounds. 
For the dynamic analysis, we focus on the capacity-limited regime in which the route-level budget constraint is slack in a neighborhood of the operating point.
Thus, after validator reports and the reward parameter \(\tau\) are fixed, the mechanism can be summarized by a fixed maximum aggregate bandwidth
\(\Theta^{\max}\), and the base fee affects demand but not the set of feasible routes near the operating region.

Let \(\Theta_t\in[0,\Theta^{\max}]\) be the available bandwidth at the beginning of round \(t\), and define the normalized supply \(X_t:=\Theta_t/\Theta^{\max}\). If normalized bandwidth consumption in round \(t\) is \(G_t\in[0,1]\), then supply evolves as
\[
X_{t+1}=X_t-G_t+\alpha(1-X_t),
\]
where \(\alpha\in(0,1]\) is the recovery rate. Thus used bandwidth is subtracted immediately, while a fraction \(\alpha\) of the gap to maximum capacity recovers before the next round.

\smallskip\noindent {\bf Base-Fee Update and Stochastic Demand.}
The base fee \(B_t\) is updated similar to EIP-1559, except that the target changes every round rather than a fixed block size:
\[
  B_{t+1} =
  \begin{cases}
    \displaystyle
    B_t\!\left(1 + d\cdot\frac{G_t - X_t/2}{X_t/2}\right) & \text{if } X_t > 0,\\[6pt]
    B_t & \text{if } X_t = 0,
  \end{cases}
\]
where $d\in(0,1)$ is the step size. Hence the price increases when consumption exceeds $X_t/2$ and decreases when consumption is below $X_t/2$.

In each round, we assume $N_t\sim \mathrm{Poisson}(\lambda\Theta^{\max})$
users arrive independently, and each user has valuation drawn independently from a distribution with CDF $F$. A user is eligible at price $B_t$ if its valuation is at least $B_t$. By Poisson thinning, the number of eligible users satisfies $S_t\sim \mathrm{Poisson}\bigl(\lambda\Theta^{\max}(1-F(B_t))\bigr).$
The normalized bandwidth consumption is $G_t=\min\left\{X_t,qS_t/\Theta^{\max}\right\}.$
Thus the state \(Z_t=(X_t,B_t)\) is a time-homogeneous Markov chain.

\smallskip\noindent{\bf Fluid Approximation.}
To analyze the large-market behavior, we replace the random eligible demand $S_t$ by its conditional mean before applying the capacity constraint. The resulting fluid consumption is $g(x,p):=\min\{x,q\lambda(1-F(p))\},$
and the deterministic fluid dynamics are
\[x_{t+1} = x_t - g(x_t, b_t) + \alpha(1 - x_t), \qquad
b_{t+1} =
\begin{cases}
  b_t\!\left(1 + d \cdot \frac{g(x_t, b_t) - x_t/2}{x_t/2}\right)
    & \text{if } x_t > 0, \\
  b_t & \text{if } x_t = 0.
\end{cases}\]
This approximation is asymptotically accurate in large markets. The bias comes from applying the concave truncation \(\min\) before expectation.

\smallskip
\noindent{\it Fluid fixed point.}
The fluid system has an operating point where consumption equals the target ($g(x^*,b^*)=x^*/2$) and recovered bandwidth exactly replenishes the amount consumed ($\alpha(1-x^*)=g(x^*,b^*)$). 
Solving them gives
\[
x^*=\frac{2\alpha}{2\alpha+1},
\qquad
F(b^*)=
1-\frac{\alpha}{q\lambda(2\alpha+1)}.
\]
Thus, whenever the arrival rate is large enough to sustain a positive price, the system has a unique positive-price fixed point: supply stabilizes at $x^*$, and the price $b^*$ is the value threshold that induces the correct eligible demand. 
We develop this formally in the full version.

\smallskip
\noindent{\it Local Stability.}
It remains to show that this operating point is stable rather than merely stationary. Around the fixed point, the capacity constraint is slack, so the fluid map is differentiable. 
We linearize the two-dimensional update in relative supply and price deviations and apply the Jury stability criterion for discrete-time systems. 

This yields an explicit upper bound on the step size $d$, based on the recovery rate $\alpha$ and the local demand elasticity at $b^*$, such that every sufficiently small perturbation decays back to the fixed point. If the step size is too large, the linearized system has spectral radius exceeding one, and the price-supply dynamics locally oscillate away from the fixed point. 
Combined with a finite-horizon tracking result, this shows that in large markets the stochastic process $Z_t$, started near the fixed point, remains near it over any fixed horizon with high probability, with deviations from the fluid trajectory of order \(1/\sqrt{\Theta^{\max}}\).

%% file: Sections/simple-model.tex
\section{Background and Preliminaries}
\label{sec:background}

\paragraph{Notations.} We use $[n]$ to denote the set of integers $\{1,...,n\}$. 
Given a vector ${\bf b}$, we use ${\bf b}[i,j]$ to denote the sub-vector of ${\bf b}$ from the $i$-th to $j$-th entry, both ends included.
We use ${\bf 1}_{E}$ to represent the indicator of whether an event $E$ happens.

\subsection{Consensus and Proof-of-Availability}
The bandwidth fee mechanism in this work does not depend on the internal details of a particular consensus protocol, and thus is applicable to the family of split layer consensus protocols. In this work the \emph{proof-of-availability layer} certifies that transaction data has been disseminated to enough validators to make the data retrievable. The \emph{consensus layer} consumes availability certified proposals and decides which certified transactions become part of the finalized ledger.

Fix a validator set $V=[m]$ and an adversarial set $F\subseteq V$ with $|F|\le \faults$. Validators have public verification keys, signatures are unforgeable, and honest validators follow the attestation rule stated below. We work in an authenticated partially synchronous network~\cite{DLS88}, meaning there is an unknown global stabilization time (GST) and a finite delay bound $\Delta$ such that every message sent between honest validators after GST is delivered within $\Delta$.  Before GST, message delays are controlled by the adversary.

The safety claims of the PoA layer do not rely on GST or on a known network bound. They rely only on unforgeability and on the fact that honest validators sign availability statements only after receiving and storing the corresponding data. The post-GST delivery assumption is used only for liveness-style claims: retrieving data from an honest attester, collecting signatures from honest attesters, and the consensus layer's inclusion liveness. A synchronous network is the special case GST$=0$ with a known delivery bound, and is not needed for the baseline centralized planner model unless the mechanism also introduces deadline based penalties for missed assigned work.

The proof-of-availability threshold is an integer $k$ satisfying
\[
    \faults < k \le m-1.
\]
The common choice $k=\faults+1$ is enough for availability soundness where every valid
certificate contains at least one honest attester. Stronger quorum intersection
properties, if needed, belong to the consensus layer and should not be inferred
from the PoA certificate alone.

\begin{definition}[PoA certificate]
\label{def:poa-certificate}
A proposal header is a tuple
\[
    h=(s,\ell,\mathsf{parent},D),
\]
where $s\in V$ is the source validator, $\ell\in\mathbb{N}$ is the source lane position, $\mathsf{parent}$ is the parent proposal identifier, and $D$ commits to the proposal payload.  For a transaction $\tx$ contained in that payload, a PoA certificate is a tuple
\[
    \pi=(h,\tx,A,\{\sigma_j\}_{j\in A}),
\]
where $A\subseteq V\setminus\{s\}$ and $|A|\ge k$.  The certificate is valid, written $\mathsf{ValidPoA}(\pi)=1$, if for every $j\in A$,
\[
    \mathsf{Verify}_{j}\!\bigl(\sigma_j,
    (\mathsf{PoA},h,H(\tx))\bigr)=1.
\]
The validators in $A$ are called the \emph{attesters} for $\tx$ under header $h$.
\end{definition}

\begin{definition}[PoA certified proposal]
\label{def:poa-certified-proposal}
A proposal
\[
    P=(h,T,\{\pi_{\tx}\}_{\tx\in T})
\]
with source $s$ is PoA certified if every transaction $\tx\in T$ has a valid PoA certificate $\pi_{\tx}$ for the same header $h$.  We write $\mathsf{ValidProp}(P)=1$ for this predicate.
\end{definition}

An honest validator signs $(\mathsf{PoA},h,H(\tx))$ only after receiving the transaction data committed by $h$ and storing it for the data availability window. Moreover, for each source lane position $(s,\ell)$, an honest validator signs messages for at most one proposal header $h$. This rule makes honest attestations meaningful as availability evidence, it does not by itself make
PoA a consensus certificate.

\begin{definition}[PoA layer guarantees]
\label{def:poa-layer-guarantees}
The PoA layer provides the following guarantees.
\begin{enumerate}
    \item \emph{Public verifiability.} Anyone can check
    $\mathsf{ValidPoA}(\pi)$ and $\mathsf{ValidProp}(P)$ using public keys and
    the transaction data.

    \item \emph{Availability soundness.} If $\mathsf{ValidPoA}(\pi)=1$ and
    $k>\faults$, then at least one honest validator attested to the corresponding
    transaction.  Hence, after GST and during the availability window, every
    honest validator can retrieve the transaction from some honest attester.

    \item \emph{Accountability of work.} A valid certificate publicly identifies
    the source validator and the actual attesters.  Thus the mechanism can base
    rewards on observed certified work rather than on the planned route alone.

    \item \emph{Honest completion.} If, after GST, an honest source sends a
    transaction to a set containing at least $k$ honest attesters with sufficient
    available bandwidth, then those attesters eventually return enough valid PoA
    signatures to certify the transaction.
\end{enumerate}
\end{definition}

\begin{lemma}[Soundness of the relaxed PoA interface]
\label{lem:relaxed-poa-soundness}
Assume $|F|\le \faults$, $k>\faults$, unforgeable signatures, and the honest attestation
rule.  Then the PoA relaxation used in this paper satisfies the availability
claims in \Cref{def:poa-layer-guarantees}, even when different transactions in
the same proposal use different attester sets and a source sends a transaction
only to the attesters used in that transaction's certificate. In particular every valid PoA certificate for a transaction identifies at least one honest validator that received and stored that transaction during the availability window, and after GST, if an honest source sends a transaction to a set containing at least $k$ honest attesters with sufficient available bandwidth, then the source eventually obtains a valid PoA certificate for that transaction.
\end{lemma}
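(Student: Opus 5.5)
The plan is to prove the two halves of \Cref{lem:relaxed-poa-soundness} separately: the availability (safety) claim, and then the completion (liveness) claim. The only assumptions used are unforgeability, the honest attestation rule, $|F|\le\faults<k$, and the post-GST delivery bound. Throughout, I treat each transaction's certificate $\pi=(h,\tx,A,\{\sigma_j\}_{j\in A})$ as an independent object. The per-transaction attester sets $A$ within one proposal can then differ without affecting either argument. This observation is what makes the ``relaxed'' interface harmless, and I will state it explicitly at the start.

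\emph{Safety.} Fix any $\pi$ with $\mathsf{ValidPoA}(\pi)=1$. By \Cref{def:poa-certificate}, $|A|\ge k>\faults\ge|F|$, so $A\setminus F\neq\emptyset$; pick an honest $j\in A\setminus F$. Validity gives $\mathsf{Verify}_j(\sigma_j,(\mathsf{PoA},h,H(\tx)))=1$. Since the adversary does not hold $j$'s signing key, unforgeability implies that, except with negligible probability, $j$ itself produced $\sigma_j$ on exactly this message. The honest attestation rule then says $j$ signed only after receiving the data for $\tx$ committed by $h$ and storing it for the availability window.

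Three points need care here:
\begin{enumerate}
\item The signed message binds $H(\tx)$ and $h$ together. Collision resistance of $H$ therefore pins down the data $j$ stored as $\tx$, so I will assume $H$ collision resistant alongside unforgeability.
\item Nothing in this argument depends on which other transactions in $P$ used which attesters, or on whether the source sent $\tx$ to anyone outside $A$. This gives the ``even when'' clause.
\item Public verifiability and accountability of work are immediate: $\mathsf{ValidPoA}$ and $\mathsf{ValidProp}$ are computable from public keys and the data, and $\pi$ names $s$ and $A$ explicitly.
\end{enumerate}
The retrieval claim follows by combining the existence of the honest stored copy at $j$ with post-GST delivery: any honest requester's request to $j$ and $j$'s reply each arrive within $\Delta$, inside the availability window.

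\emph{Liveness.} Let an honest source $s$ send $\tx$, after GST, to a set $S$ containing at least $k$ honest attesters with sufficient bandwidth. Each honest $a\in S$ receives $\tx$ within $\Delta$ and stores it. The lane rule, which allows an honest validator to sign for at most one header per $(s,\ell)$, is not violated: an honest source issues a single header $h$ per lane position, so each honest $a$ has not signed a conflicting header. Hence $a$ signs $(\mathsf{PoA},h,H(\tx))$ and returns the signature, which reaches $s$ within another $\Delta$. The source then holds at least $k$ valid signatures from distinct validators in $V\setminus\{s\}$ and assembles a valid $\pi$.

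I expect the main obstacle to be this liveness step rather than the counting. One must argue that ``sufficient available bandwidth'' and the one-header-per-lane-position rule together guarantee that an honest attester actually signs. In particular, it must be checked that no earlier adversarial header for the same $(s,\ell)$ can pre-empt it. This holds because headers carry source $s$, and only the honest $s$ can authenticate headers in its own lane. I would make that authentication assumption explicit if the paper's header format does not already imply it.
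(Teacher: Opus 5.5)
Your proposal is correct and follows essentially the same route as the paper's sketch: you pick an honest attester in $A\setminus F$ via $|A|\ge k>t\ge|F|$, use unforgeability and the honest attestation rule to obtain an honest stored copy, invoke post-GST delivery both for retrieval and for collecting $k$ honest signatures, and note that the argument is per transaction, which gives the ``even when'' clause. Your extra points, namely collision resistance of $H$ and an authentication requirement on headers so that the one-header-per-$(s,\ell)$ rule cannot be used to pre-empt honest attesters, are assumptions the paper's sketch relies on only implicitly (it simply asserts that honest attesters sign). Making them explicit strengthens the argument without changing it.
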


\begin{proof}[Sketch]
Let $\pi=(h,\tx,A,\{\sigma_j\}_{j\in A})$ be a valid PoA certificate. Since $|A|\ge k>\faults$ and at most $t$ validators are Byzantine, there exists some honest validator $j\in A\setminus F$.  By unforgeability, $\sigma_j$ was produced by $j$. By the honest attestation rule, $j$ signs $(\mathsf{PoA},h,H(\tx))$ only after receiving the transaction data committed by $h$ and storing it for the availability window. Thus the certificate contains an honest data holder. If an honest validator queries $j$ after GST while the availability window is still open, the request and $j$'s response are delivered within the post-GST network bound, so the transaction is retrievable. For liveness, suppose an honest source sends the transaction after GST to a set containing at least $k$ honest attesters and those attesters have sufficient available bandwidth. Partial synchrony implies that each honest attester in the set eventually receives the data.  By the honest attestation rule, each stores the data and returns a valid signature, and the source can assemble $k$ such signatures into a valid PoA certificate. The proof is per transaction and therefore does not require a common attester set for all transactions in a proposal, nor does it require all-to-all broadcast. It proves availability only; agreement, lane consistency, and finality are the responsibilities of the consensus layer in \Cref{def:consensus-layer}.
\end{proof}

\begin{definition}[Consensus layer]
\label{def:consensus-layer}
A consensus layer for PoA certified proposals is an abstract functionality that receives submitted proposals $P$ and outputs an increasing sequence of finalized ledger prefixes
\[
    \mathcal{L}_0 \preceq \mathcal{L}_1 \preceq \cdots .
\]
It satisfies:
\begin{enumerate}
    \item \emph{Agreement.} Any two honest validators' finalized
    ledgers are prefix comparable, and a finalized transaction is never removed.

    \item \emph{PoA validity.} Every finalized transaction appears in some
    proposal $P$ with $\mathsf{ValidProp}(P)=1$.

    \item \emph{Lane consistency.} For each source validator $s$, the finalized
    proposals from lane $s$ form a chain: at most one proposal is finalized for
    each pair $(s,\ell)$, and finalized proposals respect parent pointers.

    \item \emph{Inclusion liveness.} After GST, if an honest validator submits a
    valid PoA-certified proposal that is non-conflicting and whose transactions
    are not already finalized, then the proposal is eventually included in the
    finalized ledger.
\end{enumerate}
\end{definition}

A transaction is \emph{committed} only when it appears in the finalized ledger
of the consensus layer.  Being PoA certified means that the transaction is
available and admissible for consensus, it is not the same as being committed.
If the same transaction appears in several certified proposals, the ledger counts
only its first finalized occurrence for payment and reward accounting.

For a finalized ledger slice $L$, let ${\sf src}(\tx)$ denote the source validator of transaction $\tx$, and let ${\sf att}(\tx)$ denote the attester set appearing in the PoA certificate of the first finalized occurrence of $\tx$.  Define
\[
    B_i(L) := \#\{\tx\in L : {\sf src}(\tx)=i\}
\]
and
\[
    A_i(L) := \#\{\tx\in L : i\in {\sf att}(\tx)\}.
\]
The verifiable bandwidth work of validator $i$ in $L$ is
\[
    \beta_i(L) := (k+1)B_i(L)+A_i(L).
\]
The term $(k+1)B_i(L)$ accounts for receiving each sourced transaction from the planner and sending it to $k$ attesters in the exactly $k$ baseline; $A_i(L)$ accounts for the transactions that $i$ actually attested for other sources.  If a variant uses redundant fanout $q>k$ for liveness, then either the source term must be changed to $(q+1)B_i(L)$ with a verifiable receipt rule, or rewards must explicitly cover only the certified sends.  Payments should be charged only to committed transactions, and validator rewards should be functions of $\beta_i(L)$, not merely of the ex ante assignment $\mathcal{R}$, unless the mechanism also includes a separate accountability/slashing rule for missed assigned attestations.

The baseline mechanism is non-slashing. A validator that refuses to sign an assigned transaction is simply absent from the resulting PoA certificate and receives no reward for that attestation. If a source cannot collect enough signatures, the corresponding transaction is not counted as committed and is not charged. 
Thus the baseline incentive analysis does not need to distinguish a strategic refusal to attest from a pre-GST network delay.

In the myopic mechanism, we can use the following simplifying assumption. After GST and by the end of the accounting round, the consensus layer finalizes every non-conflicting PoA certified proposal produced by an honest source, and finalizes no proposal that fails $\mathsf{ValidProp}$. This is an assumption about the consensus functionality, not a consequence of the PoA certificate alone. Under this assumption, the mechanism can reason about the finalized ledger slice without modeling the internal consensus messages. When convenient, we identify that ledger slice with the tuple of finalized source lane proposals $\block_1,\ldots,\block_m$.

\subsection{Preliminary on Markov Process}

\begin{definition}[Markov Process]
A stochastic process $\{Z_t\}_{t \geq 0}$ is Markov if for all $t \geq 0$ and all states $z$, for all state $z_0,...,z_t$ such that $\Pr[Z_i = z_i \text{ for }i=0,...,t]>0$, we have
\[
    \Pr[Z_{t+1} = z \mid Z_0=z_0, \ldots, Z_t=z_t] = \Pr[Z_{t+1} = z \mid Z_t=z_t].
\]
That is, the future depends only on the current state, not on the history.
\end{definition}

\begin{definition}[Time-homogeneous]
A Markov chain is time-homogeneous if its transition probabilities do not depend on the time step $t$. For any state $z$ and $z'$ in the support,
\[
    \Pr[Z_{t+1} = z' \mid Z_t = z] = \Pr[Z_1 = z' \mid Z_0 = z], \forall t
\]
\end{definition}

\begin{definition}[Fixed Point]
\label{def:fixedpoint}
A state $z^* \in \mathcal{S}$ is a fixed point of a discrete-time dynamical system $z_{t+1} = H(z_t)$ if $H(z^*) = z^*$, i.e., applying the update rule leaves the state unchanged. 
\end{definition}

\begin{definition}[Local stability]
\label{def:local-stability}
A fixed point \(z^*\) of \(z_{t+1}=H(z_t)\) is locally asymptotically stable if there exists a neighborhood \(U\) of \(z^*\) such that every trajectory starting in \(U\) remains near \(z^*\) and converges to \(z^*\).
\end{definition}

\section{Models and Definitions}
\label{sec:models-definitions}
\subsection{Single-Round Bandwidth Fee Game with Centralized Planner}
Unlike the traditional leader-based consensus, 
in decoupled dissemination type of consensus, transactions first compete for validators' bandwidth of send/receive messages and generate PoA. 
However, validator's bandwidth capacity and their cost of spending the bandwidths are not directly observable by the protocol itself; therefore, we will need to rely on self-reports from the validators.

In this work, we focus on a \emph{central planner model}: A trusted planner receives user transactions and validators' reports of bandwidth capacity and costs and assigns users' transactions to routes.
We start with a single-round bandwidth fee game where we discretize the time into rounds and focus on the ledger slice generated in this round.

\paragraph{Players.} 
We consider \emph{myopic} players, who maximize their own immediate utility in a single round and ignore the effect of their actions on future gains.
There are two types of roles: \emph{users} who wish to have their transactions committed, and a set of $m$ \emph{validators} $V = \{1,...,m\}$ who spend their bandwidth to disseminate users' transactions in exchange for payments. 
For simplicity, we consider a normalization and assume that each transaction has the same size and costs one unit of bandwidth. 
Moreover, signatures are relatively small in size and we ignore the bandwidth cost of sending/receiving only signatures.  

Each user $u$ has a private valuation $b^*_u \ge 0$ that measures the maximum amount they are willing to pay to get their transactions committed.
Each validator $i\in[m]$ has a private bandwidth capacity $\band^*_i$ and a private cost function $f^*_i: [0,\band^*_i] \rightarrow \mathbb{R}_{\ge 0}$ that measures how much it costs validator $i$ to use $x$ units of bandwidth. 
We assume that each validator's cost function is a single-parameter cost $f^*_i(x) = c_i\cdot g_i(x)$ where $g_i(x)$ is the ``shape'' of the cost function and is \emph{publicly known}, while the scalar $c_i > 0$ is private and is only known to validator $i$. 
For simplicity, we assume that the shape function $g_i(x)$ is continuous and strictly increasing over $[0,\band_i]$. Moreover, $g_i(0) = 0$.

The function $g_i$ describes the shape of how validator $i$'s total resource usage grows with its workload, while the private scalar $c_i$ converts this resource usage into the validator's local utility cost. The canonical choice $g_i(x) = x$ gives a linear cost, in which $c_i$ is validator $i$'s private per-unit cost, such as its electricity price or opportunity cost. This arises, for example, when a validator's hardware consumes approximately a fixed amount of incremental energy per transmitted byte. 
More generally, $g_i$ can be an externally specified or benchmarked curve associated with a hardware or network class, for instance, one published by the protocol for each class or measured on reference hardware. Different hardware classes may therefore have different public shapes $g_i$, while $c_i$ captures validator-specific local costs such as the price of electricity or bandwidth where the validator operates. Our incentive analysis thus treats $g_i$ as public.

\paragraph{Bandwidth Fee Mechanism Game.}
Let $\C$ denote a strategic individual or coalition. 
Specifically, $\C$ can be a user, (a set of) validators, or a coalition of some validators and one or more users. 
Coalition members share all information, including their private types and private signing keys.
Below, we may use bid or transaction interchangeably.
A bandwidth fee mechanism (BFM) describes the following game:
\begin{enumerate}
    \item Honest validators report their private cost functions (or equivalently, the private scalar $c_i$) and capacity. 
    \item Validators in $\C$, if any, submit their private costs and capacity based on honest validators' reports. Denote the reported cost functions of all validators as $\boldsymbol{f} = (f_1,...,f_m)$ and bandwidths $\boldsymbol{\band} = (\band_1,...,\band_m)$. 
    \item Honest users submit bids to the centralized planner, each represented by a single non-negative real value.
    \item The coalition $\C$ submits its bids to the centralized planner based on validators' reported cost functions $\boldsymbol{f}$ and capacities $\boldsymbol{\band}$, and honest users bids. 
    \item The planner sends to each validator $i$ the transactions allocated to $i$, and which $k$ other validators to attest the transaction. 
    \item Each validator forms a proposal and interact with other validators.
    At the end of the game, each validator $i\in[m]$ produces a finalized source lane proposal $\block_i$. 
\end{enumerate}
To specify a bandwidth fee mechanism in this model, we only need to specify the following rules which can be possibly randomized algorithms.
Given any reported cost functions $\boldsymbol{f}$ and capacities $\boldsymbol{\band}$ from the validators and users' bids
$\boldsymbol{b}$, and the finalized proposals $\block_1,...,\block_m$ produced at the end of this round, a BFM consists of:
\begin{itemize}
    \item {\bf Allocation} rule $\mathcal{R}\leftarrow \assign(\boldsymbol{f},\boldsymbol{\band}, \boldsymbol{b})$ specifies the assigned route for each transaction.
    Here, $\mathcal{R} = (R(b_1),...,R(b_n))$, where each $R(b) = (s(b);A(b))$ represent a route $(s;A)$ for the transaction represented by bid $b$.
    Here, $\{s\}\cup A\subseteq [m]\cup\{\bot\}$, and $s(b) = \bot$ means a transaction will not be disseminated this round.

    \item {\bf Payment} rule $(p_1,...,p_n) \leftarrow \price(\mathcal{R},\block_1,...,\block_m)$ outputs the non-negative payment charged to each user. For simplicity, we denote $p_u = \price_u(\block_1,...,\block_m)$ for $u\in[n]$. 
    \item {\bf Reward} rule $(r_1,...,r_m) \leftarrow \reward(\mathcal{R}, \block_1,...,\block_m)$ outputs the reward paid to the validators. For simplicity, we use $r_i = \reward_i(\mathcal{R}, \block_1,...,\block_m)$.
    We allow the reward to be negative to capture a general model that supports penalty, although our mechanism achieves incentive compatibility without slashing.
\end{itemize}
We assume that the (honest) BFM achieves the following properties: 
\begin{itemize}
    \item \emph{Attestation feasibility:} 
    Each transaction gets assigned at most one route.
    If $s = \bot$, then all $a_1,...,a_k = \bot$; otherwise if $s\in[m]$, then $A\subset[m]$.
    \item \emph{Individual rationality}: a user $u$ only pays if its transaction is committed, and the payment should not exceed its bid, i.e., $p_u \leq b_u$; moreover, a validator's reward should cover its cost on \emph{assigned} bandwidth usage based on the reports, i.e., $r_i\geq f_i(\beta^{\rm ass}_i)$, where $\beta^{\rm ass}_i = \sum_{R\in\mathcal{R}} w_i(R)$ is the assigned bandwidth usage of validator $i$.
    \[
    w_i(R) := 
    \begin{cases} 
        k+1 & \text{if } i = s, \\ 1 & \text{if } i \in A, \\ 0 & \text{otherwise.}
    \end{cases}
    \]
    \item \emph{Budget feasibility}: the total validator rewards cannot exceed the total user payments. Any surplus is burnt.
\end{itemize}

\paragraph{Strategies.}
We focus on direct-revelation mechanisms. 
For a user with true value $b^*$, the honest strategy is to submit a single bid representing its true value.
A strategic user $u$ may choose to submit a bid vector $\boldsymbol{b}_u$ that contains zero to multiple bids, which do not necessarily reflect their true values. We call all the additional bids that the user injects as \emph{fake bids}, and we equivalently view the true value of the fake bids as zero.
In practice, a user may also bypass the planner and send its transaction privately to some validators. 
However, these transactions do not carry the planner's signature, so validators will not get reward for routing these transactions.
Therefore, in the rest of the paper, we would ignore such strategies without loss of generality.

An honest validator $i$ reports its true private cost function and capacity, taking the assigned user transactions for which it is the source, and send the transactions to corresponding attesters according to $\mathcal{R}$.
A strategic validator may misreport its cost function and capacity, inject fake bids as a user, does not follow $\mathcal{R}$ for attestation, or refuse to provide attestation to other validators. 
We do note that when an attester receives the transaction from the source, the receive bandwidth cost is incurred whether or not it later returns a signature.

A coalition shares all the information and coordinate the strategies together. 
A coalition's strategy space is the combination of all members' strategies.
Importantly, if there are multiple validators in the coalition, one can sign transactions on behalf of other validators in the coalition using their signing keys.
\paragraph{Utilities.}
A user $u$ with true value $b^*_u$ gets utility 
\[b^*_u\cdot {\bf 1}_{\{u\text{'s real transaction finalized}\}}-\sum_{\text{all bids $x$ submitted by }u}p_x,\]
A validator $i$ gets utility
\[r_i - f^*_i(\beta^{\rm act}_i)-\sum_{\text{all fake bids $x$ submitted by }i}p_x\]
if it gets reward $r_i$ and spent $\beta^{\rm act}_i$ unit of bandwidth in the actual work. 
If a validator runs out of bandwidth, it crashes and gets $-\infty$ utility. A coalition's utility is the sum of utility of all coalition members.

\subsection{Incentive Compatibilities}
Fix a strategic coalition (or individual) $\C$.
Let $H_{\mathcal{C}}$ denote the coalition's honest strategy. Let $\boldsymbol{b}^*, \boldsymbol{\band}^*, \boldsymbol{f}^*$ be the true valuation of users, bandwidth capacity, and cost function of validators, respectively. 
Let $\text{util}_{\C}(\boldsymbol{b}^*, \boldsymbol{\band}^*, \boldsymbol{f}^*; \mathcal{S}_{\mathcal{C}})$ denote the expected utility of $\C$ in a randomized experiment where players' true types are represented by $\boldsymbol{b}^*, \boldsymbol{\band}^*, \boldsymbol{f}^*$, and players in $\mathcal{C}$ jointly adopt strategy $\mathcal{S}_{\mathcal{C}}$ while all other players act honestly. 
The expectation is taken over the randomness in strategy $\mathcal{S}_{\mathcal{C}}$ and in the mechanism.
\begin{definition}[Incentive Compatibility]
\label{defn:ic}
    Let $\C$ be arbitrary strategic coalitions or individual.
    We say that a BFM is incentive compatible against a coalition $\mathcal{C}$ if for any true types $\boldsymbol{b}^*, \boldsymbol{\band}^*, \boldsymbol{f}^*$, for any strategy $S_{\mathcal{C}}$ of coalition $\mathcal{C}$, we have:
\[
{\sf util}_{\mathcal{C}}(\boldsymbol{b}^*, \boldsymbol{\band}^*, \boldsymbol{f}^*; \mathcal{S}_{\mathcal{C}}) \leq {\sf util}_{\mathcal{C}}(\boldsymbol{b}^*, \boldsymbol{\band}^*, \boldsymbol{f}^*; H_{\mathcal{C}}).
\]
\end{definition}
\noindent Specifically, we say that a BFM is 
\begin{itemize}
    \item \emph{User incentive compatible (UIC)} if \Cref{defn:ic} holds for any $\C$ containing an individual user.
    \item $c$-\emph{validator incentive compatible ($c$-VIC)} if \Cref{defn:ic} holds for any for any $\C$ containing up to $c$ number of validators. 
    \item \emph{$(c,d)$-side contract proofness ($(c,d)$-SCP)} if \Cref{defn:ic} holds for any $\C$ containing at least one and up to $c$ number validators and at least one and at most $d$ users.
    When there is only one validator in the coalition, we omit the $c$ parameter and refer to $d$-SCP.
\end{itemize}

%% file: Sections/impossibility.tex
\section{Impossibility of Coalition Resilience}
\label{sec:impossibility}
In this section, we present the barriers for BFM against coalitions.
First, once validators can delegate signing keys within a coalition, and the mechanism cannot certify that the intended data transfer actually happened. In this case, coalition of validators can save their bandwidth costs by signing on behalf of other coalition members.
Second, even after ruling out coalitions with two or more validators and focus on coalitions with one validator with users, we still suffer from restrictions due to the competition for finite amount of bandwidth.
\subsection{Validator-Coalition Impossibility under Delegated Signing.}

We start with the impossibility against validator coalitions. The impossibility is broad in the sense that if rewards are computed only from the public transcript, then two validators can internally simulate the transcript of an honest transfer while avoiding the underlying bandwidth cost.

We say that a BFM is \emph{transcript-based} if, after the reports and bids are fixed, payments and rewards depend only on the planner assignment $\mathcal{R}$ and the finalized PoA-certified proposals $(\block_1,...,\block_m)$.
In particular, the mechanism cannot observe which physical validator or IP address actually sent, received, or stored transaction data except through the public signatures appearing in the finalized PoA certificates. 

This abstraction captures the usual permissionless blockchain model, where a protocol can only base rewards and slashing on information that is deterministically available to all validators such as bids, assignments, blocks, hashes, signatures, and certificates. By contrast, facts such as which physical machine sent a packet, which peer received the full transaction body, or which disk stored the data are not part of the replicated ledger state.  Network metadata such as IP addresses, connection logs, timing information, or peer-to-peer routing paths is local, forgeable or relayable, and generally not consensus verifiable. Moreover, in practice a validator identity is a public key rather than a physical machine as validators often use remote signers, sentry nodes, load balancers, custody services, or threshold signing infrastructure.  Thus, absent additional trusted hardware or a separate verifiable data possession primitive, the protocol cannot distinguish ``validator \(a\) received the data and signed'' from ``another coalition member obtained signing authority for \(a\)'s key and produced the same signature.'' Accordingly, any reward rule implementable by the base protocol is naturally a
function of the public transcript. We therefore model coalitions as being able to delegate signing authority internally such that a coalition member may either share its signing key, run a threshold signing protocol, or give another coalition member access to a remote signer. Since all these cases produce the same public signature, a transcript based mechanism cannot distinguish
them.

\begin{theorem}
\label{thm:no-2vic}
No non-trivial transcript-based BFM is $2$-VIC.
\end{theorem}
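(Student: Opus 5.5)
The plan is to exhibit, for any non-trivial transcript-based BFM, a true type profile, a two-validator coalition, and a deviation strategy that strictly increases the coalition's joint utility. First fix the profile. By non-triviality there exist reports/bids, which we take as true types $\boldsymbol{b}^*,\boldsymbol{\band}^*,\boldsymbol{f}^*$ with all validators honest, under which, with positive probability over the mechanism's coins, the assignment $\mathcal{R}$ routes some transaction $\tx$ on a route $(s;A)$ with $s\in[m]$. Since $|A|=k\ge 1$ (recall $\faults<k$), pick an attester $a\in A$ and set $\C=\{s,a\}$. The coalition's deviation leaves every report, every bid and the randomness of the assignment unchanged, so $\mathcal{R}$ has the same distribution as in the honest execution.

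The deviation is defined conditionally on the realized $\mathcal{R}$. For every transaction in $\mathcal{R}$ with source $s$ and $a$ among its attesters, $s$ does not transmit the transaction to $a$. Instead it computes $\sigma_a$ on $(\mathsf{PoA},h,H(\tx))$ itself using $a$'s shared signing key. In all other respects both members behave exactly as honest validators: $s$ still receives $\tx$ from the planner and sends it to $A\setminus\{a\}$, and $a$ still stores nothing extra. The main step is to show that the finalized PoA-certified proposals $(\block_1,\dots,\block_m)$ have exactly the same joint distribution with $\mathcal{R}$ as under honest play. The certificate $\pi$ contains the same header, transaction, attester set and signature bytes: for deterministic signatures the bytes are identical, and for randomized signatures they are identically distributed and verify under $\mathsf{Verify}_a$. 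Honest validators' views differ from the honest execution only through the absence of a message between two coalition members, and no honest party observes that message. Therefore the consensus layer, being a functionality of submitted proposals, finalizes the same proposals. This step is the main obstacle: one must argue that nothing visible to honest parties or to the ledger changes, including timing under partial synchrony. I would handle timing by noting that the deviation only removes a transfer, so $s$ can release the certificate at the same moment it would have honestly, or by invoking the myopic assumption that every certified proposal from the source is finalized by the round's end.

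Given identical $(\mathcal{R},\block_1,\dots,\block_m)$, transcript-basedness implies that the rewards $r_s,r_a$ and all payments are identically distributed, and no fake bids are involved, so the only change in $\C$'s utility comes from actual bandwidth. The actual usages satisfy $\beta^{\rm act}_s$ and $\beta^{\rm act}_a$ each decreasing by $N\ge 1$ units on the positive-probability event, where $N$ is the number of transactions that source from $s$ and attest at $a$. They never increase, so neither member crashes. Since $g_i$ is strictly increasing and $c_i>0$, we have $f^*_s(\beta-N)<f^*_s(\beta)$, and similarly for $a$. Taking expectations, the joint utility strictly increases, which contradicts \Cref{defn:ic} for $|\C|=2$. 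The same strategy, with any users added passively to $\C$, also gives the $(2,d)$-SCP part of \Cref{thm:intro-2vic}.
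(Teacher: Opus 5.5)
Your proposal is correct and follows essentially the same argument as the paper. Non-triviality gives a positive-probability routing of some transaction via $(s;A)$. The coalition $\{s,a\}$ with $a\in A$ has $s$ compute $a$'s signature instead of transmitting the data. Transcript-basedness leaves rewards and payments unchanged, and strict monotonicity of $f^*_s$ and $f^*_a$ turns the saved bandwidth into a strict gain. Your extra care only tightens points the paper's proof leaves implicit: applying the deviation to every $s\to a$ transfer, handling randomized signatures and timing, and noting that no crash can occur since usage only decreases.
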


\begin{proof}
Suppose, for the sake of contradiction, that $\mathcal{M}$ is a $2$-VIC non-trivial transcript-based BFM. Then there exists validators' types $(\boldsymbol f,\boldsymbol{\band})$ and users' bids $\boldsymbol b$, such that some transaction ${\sf tx}$ in $\boldsymbol b$ is committed with a positive probability. Then there is an event $E$ such that some source $s$ and attester set $A$ is assigned for ${\sf tx}$ and ${\sf tx}$ is committed in an honest execution.

Consider a validator coalition $\C = \{s,a\}$ for some $a\in A$.
In the honest case, $s$ sends the full transaction $\tx$ to $a$, after which $a$ signs $(\mathsf{PoA},h,H(tx))$ with a valid signature $\sigma_a$.
Now imagine a strategy $S_{\C}$ as follows:
If $E$ happens, the validator $s$ does not send the transaction, but instead, locally computes $\sigma_a$ using signing key of $a$. Otherwise, the coalition behaves honestly. 

Under this strategy, the finalized PoA-certified proposals $(\block_1,...,\block_m)$ is the same as in honest case.
Since $\mathcal{M}$ is transcript-based, the public transcript relevant to payments and rewards is identical as in the honest execution and in the deviating execution under $S_{\C}$. Therefore, all validator rewards are the same in the two executions.

Let $\beta_s^H,\beta_a^H$ denote the actual bandwidth usages of $s$ and $a$ in the honest execution, and let $\beta_s^S,\beta_a^S$ denote their bandwidth usages under strategy $S$. Then conditioned on event $E$, we have
\begin{align*}
    &\beta_s^S=\beta_s^H-1, \qquad \beta_a^S=\beta_a^H-1,\\
    &r_s^S=r_s^H, \qquad r_a^S=r_a^H.
\end{align*}
Therefore, conditioned on event $E$, the coalition's utility under strategy $S_{\C}$ is strictly higher than the honest utility by the strict monotonicity of cost functions. 
Since event $E$ happens with a positive probability, the coalition has a strictly profitable deviation, contradicting 2-VIC.
\end{proof}
The implication of \Cref{thm:no-2vic} is that any non-trivial design whose rewards depend only on publicly available transcripts cannot hope for $2$-VIC against validator coalitions.  
To circumvent this impossibility one would need additional assumptions/tools that either prevents key sharing within the coalition, or design reward rule based on some physical data receipts that cannot be delegated.

\begin{corollary}
\label{cor:no-2scp}
No non-trivial transcript-based BFM is $(2,d)$-SCP for any $d\geq 1$.    
\end{corollary}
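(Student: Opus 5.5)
The plan is to reuse the delegated-signing deviation from the proof of \Cref{thm:no-2vic}, now performed by a coalition that contains two validators together with $d\ge 1$ users. The user members will simply behave honestly and go along. Suppose, for contradiction, that $\mathcal{M}$ is a non-trivial transcript-based BFM that is $(2,d)$-SCP for some $d\ge 1$.

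First, by non-triviality, fix validator types $(\boldsymbol f,\boldsymbol{\band})$ and bids $\boldsymbol b$ under which some transaction $\tx$ is committed with positive probability. Let $E$ be the positive-probability event, in the honest execution, that $\tx$ is assigned source $s$ and attester set $A$ and is then committed. Pick $a\in A$.

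Second, fix any user $u$ whose bid appears in $\boldsymbol b$. If $\boldsymbol b$ happens to contain no users other than those involved, I can take $u$ to be the owner of $\tx$. Form the coalition $\C=\{s,a,u\}$, which has two validators and one user, hence at most $d$ users. The coalition's strategy is the one from \Cref{thm:no-2vic}. The user $u$ submits its true bid exactly as an honest user would. On event $E$, validator $s$ does not send $\tx$ to $a$ and instead computes $\sigma_a$ locally with $a$'s signing key. Otherwise everyone acts honestly.

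Third, I compare the outcomes. The planner assignment $\mathcal{R}$ is unchanged, because all reports and bids are unchanged. The finalized PoA-certified proposals $(\block_1,\ldots,\block_m)$ are also identical, since the certificate assembled by $s$ is bit-for-bit the same. Because $\mathcal{M}$ is transcript-based, all payments and rewards coincide with the honest execution. In particular:
\begin{itemize}
\item $u$'s allocation and payment are unchanged, so $u$'s utility is unchanged;
\item the rewards $r_s$ and $r_a$ are unchanged.
\end{itemize}
Conditioned on $E$, $\beta^{\rm act}_s$ and $\beta^{\rm act}_a$ each drop by one unit, so strict monotonicity of $f^*_s$ and $f^*_a$ strictly lowers their costs. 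No bandwidth is over-used, so there is no crash. Off $E$, the two executions coincide. Since $\Pr[E]>0$, the coalition's expected joint utility strictly increases, which contradicts $(2,d)$-SCP.

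The only point needing care is that the definition of $(c,d)$-SCP requires at least one user in the coalition. I must check that adding an honest-acting user neither changes the transcript nor alters its own utility, and the transcript-based property gives exactly this. I expect no real obstacle beyond this bookkeeping.
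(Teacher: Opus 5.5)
Your proposal is correct and follows the paper's proof: the paper reuses the delegated-signing deviation from \Cref{thm:no-2vic}, adds honest-acting users to the coalition, and notes that their utilities are unchanged because the transcript, and hence every payment, is unchanged. Your choice of a single user $u$ (e.g., the owner of $\tx$) is, if anything, slightly cleaner than the paper's ``$d$ users,'' since $(c,d)$-SCP ranges over coalitions with at least one and at most $d$ users, so one user suffices.
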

\begin{proof}
    Suppose, for the sake of contradiction, that $\mathcal{M}$ is a $(2,d)$-SCP non-trivial transcript-based BFM for some $d\geq 1$. Then there exists validators' types $(\boldsymbol f,\boldsymbol{\band})$ and users' bids $\boldsymbol b$, such that some transaction ${\sf tx}$ in $\boldsymbol b$ is committed with a positive probability. 
    Thus, there exist an event $E$ such that some source $s$ and attester set $A$ is assigned for ${\sf tx}$. Consider a coalition of validators $s$ and $a$, and $d$ number of users who adopt the same delegate signing strategy as in \Cref{thm:no-2vic}.
    Note that this strategy does not change the distribution of committed transactions, and therefore does not change the colluding users' expected utility.
    By a similar argument as in \Cref{thm:no-2vic}, the coalition has a strictly profitable deviations, contradicting $(2,d)$-SCP.
\end{proof}

\subsection{Impossibility under Validator-User Coalitions}
Due to Theorem ~\ref{thm:no-2vic} and Corollary ~\ref{cor:no-2scp} in the remainder of the impossibility analysis we focus on validator-user coalitions that involve only one validator. We prove this impossibility through a reduction to the MPC-assisted TFMs proposed in ~\cite{shi2023_tfm}.

In the MPC-assisted model, a TFM auctions off the finite block space to users. In each round at most $\Theta$ transactions can be included in the block, where $\Theta$ is the block size. 
An ideal functionality $\mathcal{F}_{\mathrm{MPC}}$ receives transactions, determines the allocation, users' payments, and validators' net revenue from the mechanism. For completeness we give a more formal description of the MPC-assisted model in ~\cref{app:mpc-assisted-tfm}.

At a high level, the reduction works as follows. We fix validators' true types $(\boldsymbol{\band}^*, \boldsymbol{f}^*)$ and assume that all validators report honestly and behave honestly during the dissemination.
The resulting BFM can thus be viewed as a finite block MPC-assisted TFM with $\Theta = \sum_i \mathrm{cap}_i$: 
Because the strategy space and the utility in both the induced MPC-assisted TFM and the BFM are the same, so the incentive compatibility of BFM transfers to the induced
MPC-assisted TFM.

\begin{lemma}
\label{lem:reduction}
Fix a non-trivial BFM $\mathcal{M} = (\assign,\price,\reward)$ in the planner model. There exists a non-trivial MPC-assisted TFM $\overline{\mathcal{M}}$ such that for any coalition or individual $\C$, if $\mathcal{M}$ is incentive compatible against $\C$, then $\overline{\mathcal{M}}$ is incentive compatible against $\C$.
\end{lemma}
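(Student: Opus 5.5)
The plan is to construct $\overline{\mathcal{M}}$ explicitly from $\mathcal{M}$ by freezing the validator side, and then transfer any profitable deviation in $\overline{\mathcal{M}}$ back to a profitable deviation in $\mathcal{M}$.

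\emph{Construction.} Fix the true validator types $(\boldsymbol{\band}^*,\boldsymbol{f}^*)$, at which $\mathcal{M}$ is non-trivial; such types exist by non-triviality. Set the block size to $\Theta=\sum_i\band^*_i$. On input a bid vector $\boldsymbol b$, the functionality $\mathcal{F}_{\mathsf{MPC}}$ proceeds in three steps.
\begin{enumerate}
\item It samples $\mathcal{R}\leftarrow\assign(\boldsymbol f^*,\boldsymbol{\band}^*,\boldsymbol b)$.
\item It simulates an honest dissemination of $\mathcal{R}$ to obtain $\block_1,\ldots,\block_m$. In the myopic model with the consensus assumption of \Cref{sec:background}, this simply certifies every routed transaction.
\item It confirms exactly the transactions with a non-$\bot$ route, charges $\price(\mathcal{R},\block_1,\ldots,\block_m)$, and gives validator $i$ net revenue $r_i-f^*_i(\beta^{\rm ass}_i)$, where $r_i=\reward_i(\mathcal{R},\block_1,\ldots,\block_m)$.
\end{enumerate}

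We then check that $\overline{\mathcal{M}}$ is a valid MPC-assisted TFM.
\begin{itemize}
\item The number of confirmed transactions is at most $\sum_i\beta^{\rm ass}_i/(2k+1)\le\Theta$, since each routed transaction uses $2k+1$ units and $\beta^{\rm ass}_i\le\band^*_i$.
\item User individual rationality is inherited directly from $\mathcal{M}$.
\item Validator revenue is non-negative, by validator individual rationality $r_i\ge f^*_i(\beta^{\rm ass}_i)$.
\item Total revenue satisfies $\sum_i r_i-\sum_i f^*_i\le\sum_u p_u$, by budget feasibility of $\mathcal{M}$. If the TFM formalism requires that costs not be subtracted from the budget, we instead treat $\sum_i f^*_i(\beta^{\rm ass}_i)$ as burnt.
\item $\overline{\mathcal{M}}$ is non-trivial, because $\mathcal{M}$ routes some transaction with positive probability at these types.
\end{itemize}

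\emph{Utility equivalence.} The key claim is that for every coalition $\C$ and every bid strategy $S$ available in $\overline{\mathcal{M}}$, there is a strategy $S'$ in $\mathcal{M}$ with $\util^{\mathcal{M}}_{\C}(S')=\util^{\overline{\mathcal{M}}}_{\C}(S)$. In $\overline{\mathcal{M}}$ the coalition can only choose which bids to submit, including fake ones. We take $S'$ to be the following: coalition validators report their true types, execute dissemination honestly, and submit exactly the bids prescribed by $S$ to the planner. The two executions then share the same distribution over $(\mathcal{R},\block_1,\ldots,\block_m)$. Consequently users' allocations and payments coincide, and each validator's utility $r_i-f^*_i(\beta^{\rm act}_i)-\sum p_x$ equals its TFM utility, because honest execution gives $\beta^{\rm act}_i=\beta^{\rm ass}_i$. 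Applying the same identity to the honest strategies gives $H'_\C=H_\C$. Therefore, if $S$ were profitable in $\overline{\mathcal{M}}$, then $S'$ would be profitable in $\mathcal{M}$, contradicting incentive compatibility of $\mathcal{M}$ against $\C$.

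\emph{Main obstacle.} The delicate step is making the reduction valid \emph{for all} validator types, as the TFM definitions quantify over users' values only. Here $\overline{\mathcal{M}}$ is indexed by the fixed validator types, and IC of $\mathcal{M}$ holds for every $(\boldsymbol{\band}^*,\boldsymbol{f}^*)$, so IC of $\overline{\mathcal{M}}$ holds for each fixed choice. A related check is that MIC in the MPC model, where validators inject bids after seeing others' bids, matches the BFM ordering: in the BFM game, $\C$ submits bids after seeing honest bids. Finally, we must handle fake bids whose payments are charged to the injecting validator consistently in both models, and make sure the role of which validator receives each revenue share is preserved. These points are bookkeeping, but they have to match the formalism in \cref{app:mpc-assisted-tfm}.
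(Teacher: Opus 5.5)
Your proposal is correct and follows essentially the same route as the paper. Like the paper, you fix validator types at which $\mathcal{M}$ is non-trivial and set $\Theta=\sum_i \band^*_i$. You let $\mathcal{F}_{\mathsf{MPC}}$ run the allocation under truthful reports and honest dissemination, confirm exactly the non-$\bot$-routed transactions, reuse the BFM payments, and give validator $i$ net revenue $r_i-f^*_i(\beta^{\rm ass}_i)$. You then map each bid strategy in $\overline{\mathcal{M}}$ to the BFM strategy with truthful reports, the same bids, and honest dissemination, so that utilities coincide. Your explicit count of at most $\sum_i\beta^{\rm ass}_i/(2k+1)\le\Theta$ confirmed transactions just spells out the paper's one-line appeal to validator individual rationality.
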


\begin{proof}
Fix an arbitrary validator type profile $(\boldsymbol{\band}^*,\boldsymbol f^*)$ such that under truthful validator reports and honest dissemination, there
exists a bid vector $\boldsymbol{b}$ for which some bid is confirmed with positive probability.
Throughout the proof, we fix the validators' reports in $\mathcal M$ to be
$(\boldsymbol{\band}^*,\boldsymbol f^*)$, and we assume that validators execute the assigned dissemination honestly.

Let $\Theta := \sum_{i\in[m]} \band_i^*$.
We now construct an MPC-assisted TFM $\overline{\mathcal M}$ with the $m$ validators in $V$ for the block-space setting with block size $\Theta$.
On input bid vector $\boldsymbol b$, the ideal functionality
$\mathcal F_{\rm MPC}$ computes $\mathcal{R}
        \leftarrow
    \assign(\boldsymbol f^*,\boldsymbol{\band}^*,\boldsymbol b)$.
Let $\boldsymbol B^H
        =
    (\block_1,\ldots,\block_m)$
denote the finalized source-lane proposals when all validators execute
the assignment $\mathcal{R}$ honestly. Note that under honest dissemination, the committed transactions are exactly the transactions with $s(\tx)\neq \bot$.  The MPC-assisted TFM $\overline{\mathcal{M}}$ includes all transactions with non-bot routes. 
We use $\boldsymbol X$ to denote the set of included transactions $\overline{\mathcal{M}}$.
By individual rationality of the validators in $\mathcal{M}$, we have $|\boldsymbol X|\leq \Theta$.
Thus the block-size constraint is satisfied.

The user payments of $\overline{\mathcal M}$ are defined by $(p_1,\ldots,p_n)
        \leftarrow
    \price(\mathcal{R}, \boldsymbol B^H)$.
The validator revenues $(\overline r_1,\ldots,\overline r_m)$ of $\overline{\mathcal M}$ are defined as the net validator utility in the honest dissemination execution: 
\[\overline r_i:= r_i - f_i^*(\beta_i^{\rm ass}),\]
where $(r_1,\ldots,r_m)\leftarrow\reward(\mathcal{R},\boldsymbol B^H)$, and $\beta_i^{\rm ass} :=\sum_{R\in \mathcal{R}} w_i(R)$ is the assigned bandwidth usage for validator $i$ in $\mathcal{M}$.

The induced MPC-assisted TFM is well-defined. Individual rationality and budget feasibility follows from that of $\mathcal M$. Moreover, the validator's revenue is non-negative in $\overline{\mathcal M}$ by individual rationality.
Finally, $\overline{\mathcal M}$ is non-trivial by the choice of
$(\boldsymbol{\band}^*,\boldsymbol f^*)$.

Now fix a coalition $\mathcal C$.  In the MPC-assisted model, users may
strategically bid, and validators may inject fake bids.  Therefore, for any strategy $\overline S_{\mathcal C}$ admissible for $\mathcal C$ in $\overline{\mathcal M}$, there is a corresponding strategy $S_{\mathcal C}$ in $\mathcal M$ defined as follows:  The validator members of $\mathcal C$, if any, report their private types honestly. The coalition submits exactly the same strategic bid vector as prescribed by $\overline S_{\mathcal C}$.  
All validators then execute the dissemination honestly.

By construction, for every bid vector $\boldsymbol b$ and every realization of
the mechanism's randomness, the two executions produce the same committed transactions, the same user payments, and the same validator utility.
Therefore,
\[
    {\sf util}_{\mathcal C}^{\mathcal M}
        (\boldsymbol{b}, \boldsymbol{\band}^*, \boldsymbol{f}^*; S_{\mathcal C})
    =
    {\sf util}_{\mathcal C}^{\overline{\mathcal M}}
        (\Theta,\boldsymbol b;\overline S_{\mathcal C}),
\]
where ${\sf util}_{\mathcal C}^{\overline{\mathcal M}}
        (\Theta,\boldsymbol b;\overline S_{\mathcal C})$ denotes $\C$'s expected utility in $\overline{M}$ when $\C$ adopts strategy $\overline{S}_{\C}$ on input bid vector $\boldsymbol{b}$.
The same equality holds when $\mathcal C$ follows the honest strategy.

Suppose $\mathcal M$ is incentive compatible against $\mathcal C$.  If
$\overline{\mathcal M}$ were not incentive compatible against $\mathcal C$, then
there would exist a true value vector $\boldsymbol b$ and a strategy
$\overline S_{\mathcal C}$ such that
\[
    {\sf util}_{\mathcal C}^{\overline{\mathcal M}}
        (\Theta,\boldsymbol b;\overline S_{\mathcal C})
    >
    {\sf util}_{\mathcal C}^{\overline{\mathcal M}}
        (\Theta,\boldsymbol b;H_{\mathcal C}).
\]
By the utility equality above, the corresponding strategy $S_{\mathcal C}$ would
satisfy
\[
    {\sf util}_{\mathcal C}^{\mathcal M}
        (\boldsymbol{b}, \boldsymbol{\band}^*, \boldsymbol{f}^*;S_{\mathcal C})
    >
    {\sf util}_{\mathcal C}^{\mathcal M}
        (\boldsymbol{b}, \boldsymbol{\band}^*, \boldsymbol{f}^*;H_{\mathcal C}),
\]
contradicting incentive compatibility of $\mathcal M$ against $\mathcal C$.
Hence $\overline{\mathcal M}$ is incentive compatible against $\mathcal C$.
\end{proof}

\begin{theorem}
\label{thm:data-pricing-uic-2scp-impossible}
No non-trivial BFM $\mathcal{M} = (\assign,\price,\reward)$ in the planner model achieves UIC, $1$-VIC and $1$-SCP.
\end{theorem}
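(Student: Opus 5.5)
The plan is to argue by contradiction, following the reduction sketched in \cref{sec:overview-impossibility}. Suppose $\mathcal{M}$ is a non-trivial BFM in the planner model that achieves UIC, $1$-VIC and $1$-SCP; here $1$-SCP means $(1,1)$-SCP, a coalition of one validator and one user. By non-triviality, fix a validator type profile $(\boldsymbol{\band}^*,\boldsymbol f^*)$ and a bid vector under which some transaction is routed with positive probability. Applying \Cref{lem:reduction} to the individual user, to each single validator, and to each one-validator-one-user coalition gives an induced MPC-assisted TFM $\overline{\mathcal M}$ with block size $\Theta=\sum_i \band^*_i$. This $\overline{\mathcal M}$ is non-trivial and satisfies UIC, $1$-MIC, and $1$-SCP in the MPC-assisted model.

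Next I would invoke the MPC-assisted impossibility of \cite{shi2023_tfm} for finite block size. It states that any MPC-assisted TFM that is UIC, MIC and $(1,1)$-SCP must have total validator revenue $0$ on every bid vector. In $\overline{\mathcal M}$, revenues are $\overline r_i=r_i-f^*_i(\beta^{\rm ass}_i)\ge 0$ by individual rationality. So each $\overline r_i=0$ under honest execution, that is, $r_i=c^*_i g_i(\beta^{\rm ass}_i)$ on every bid vector. This holds for every true profile at which $\mathcal M$ is non-trivial, with $c^*_i$ the scalar used in the reduction.

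The last step uses the extra strategy space of the BFM, namely cost misreporting. Choose a validator $i$ and a bid vector $\boldsymbol b$ such that, under truthful reports $(\boldsymbol{\band}^*,\boldsymbol f^*)$, we have $\beta^{\rm ass}_i>0$ with positive probability; non-triviality provides such an $i$. Now consider an alternative world where $i$'s true scalar is $c_i$ with $0<c_i<c^*_i$, and everything else is unchanged.

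If $\mathcal M$ remains non-trivial in this world, the previous paragraph applies there as well, so $i$'s honest utility is $0$. If instead $\mathcal M$ becomes trivial there, $i$'s honest utility is again $0$, since a reward that is positive while assigned work is zero could itself be handled separately using individual rationality and budget balance.

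Suppose $i$ now deviates by reporting $c^*_i$ and disseminating honestly. The planner then sees exactly the profile $(\boldsymbol{\band}^*,\boldsymbol f^*)$. Its reward is $c^*_i g_i(\beta^{\rm ass}_i)$ while its true cost is $c_i g_i(\beta^{\rm ass}_i)$. The gain $(c^*_i-c_i)g_i(\beta^{\rm ass}_i)$ is strictly positive on an event of positive probability, because $g_i$ is strictly increasing with $g_i(0)=0$. So $i$'s expected utility strictly exceeds $0$, which contradicts $1$-VIC.

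The main obstacle is making the application of \cite{shi2023_tfm} precise. Their result is stated for their model: finite block size, validator revenue as a function of the bids, and a particular notion of $(1,1)$-SCP and MIC. I would need to check three things. First, that revenue in $\overline{\mathcal M}$ may depend on the realized randomness. Second, that their conclusion yields per-validator zero revenue and not just zero total revenue; nonnegativity of each $\overline r_i$ settles this. Third, that the zero-utility conclusion holds for every bid vector, including the specific $\boldsymbol b$ used in the over-reporting step. A second subtle point is that the zero-utility identity must be established at the lowered-cost world as well, which the case split above is meant to handle.
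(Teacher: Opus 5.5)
Your proposal is correct and follows the paper's proof essentially step for step. You pass through \Cref{lem:reduction} to an induced MPC-assisted TFM, use \Cref{thm:mpc-zero-validator-revenue} with validator individual rationality to force zero honest validator utility, and contradict $1$-VIC by having a validator with lower true scalar $0<c_i<c^*_i$ over-report $c^*_i$ on an input where it gets positive work with positive probability; your explicit case split for the lowered-cost world is slightly more careful than the paper, which simply asserts zero honest utility for every validator type profile (justified because the cited zero-revenue theorem does not need the induced TFM to be non-trivial).
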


\begin{proof}
Suppose for the sake of contradiction that a non-trivial BFM $\mathcal{M}$ that achieves UIC, $1$-VIC, and $1$-SCP.
By \Cref{lem:reduction}, for any fixed validator's true types $(\boldsymbol{\band}^*,\boldsymbol f^*)$, the BFM induces an MPC-assisted TFM that satisfies UIC, $1$-VIC, and $1$-SCP.
By \Cref{thm:mpc-zero-validator-revenue}, we have that for any validator's true types $(\boldsymbol{\band}^*,\boldsymbol f^*)$, for any bid vector $\boldsymbol{b}^*$, in the honest execution, validator's utility in BFM must always be zero:
\[\util_i^{\mathcal{M}}(\boldsymbol{b}^*, \boldsymbol{\band}^*, \boldsymbol{f}^*) = 0.\]

Now fix an arbitrary validator type profile $(\boldsymbol{\band}^*,\boldsymbol f^*)$ such that under truthful validator reports and honest dissemination, there
exists a bid vector $\boldsymbol{b}$ for which some bid is confirmed with positive probability.
Let $E$ denote a positive probability event that some validator $i$ with cost function $c_i^*g_i(x)$ and capacity $\band_i$ gets assigned a transaction in BFM.
Such an event must exist by non-triviality of BFM and the choice of $(\boldsymbol{\band}^*,\boldsymbol f^*)$ and bid vector $\boldsymbol{b}$.
Since all validators' utility must be zero in the honest execution, we have
\[
  \mathbb E[r^*_i - c^*_ig_i(\beta^{\rm ass}_i)] = 0 \text{ for all } i\in[m],
\]
where $r^*_i$ is validator $i$'s reward in $\mathcal{M}$ and the expectation is taken over mechanism's randomness.
Moreover, $\beta^{\rm ass}_i > 0$ on event $E$.

Now consider a world that fixes other validators' true types $(\boldsymbol{\band}^*_{-i},\boldsymbol f^*_{-i})$, and validator $i$'s true type $(\band_i, \widetilde{c}_ig_i(x))$ for $0 < \widetilde{c}_i < c^*_i$.
In the honest case, validator $i$'s utility is zero.
However, if validator $i$ overreport its cost to be $c^*_i$ and then behave honestly otherwise,
it gets utility 
\[\mathbb E[r^*_i - \widetilde{c}_ig_i(\beta^{\rm ass}_i)] > \mathbb E[r^*_i - c^*_ig_i(\beta^{\rm ass}_i)] = 0,\]
where the strict inequality comes from the cost assumption that $g_i(x)>0$ for any $x>0$ and that $\beta^{\rm ass}_i > 0$ on the positive-probability event $E$.
This violates VIC, and we reach a contradiction.

\end{proof}

%% file: Sections/posted-price.tex
\section{Greedy-Route Posted-Price Mechanism}
\label{sec:greedy}

In this section, we introduce a mechanism based on posted-price, that achieves UIC and $1$-VIC. 
The mechanism specifies two parameters: the base fee $b > 0$ and reward parameter $0< \rev < b$. Here, $b$ is the price charged to each committed transaction, and $\rev$ is the per-unit reward for validator's verifiable work. These parameters are determined by the protocol.

\subsection{The Mechanism}
\label{sec:greedy-mechanism}
At a high level, the mechanism greedily assigns attestation routes based on a predetermined order.
Let $R^1,R^2,...R^L$ be an ordered list of $L = m\cdot \binom{m-1}{k}$ attestation routes, where each route $R^x = (s; A)$ specifies one source validator $s\in[m]$ and a size-$k$ set of attesters $A = \{a_1,...,a_k\}\subseteq [m]\setminus \{s\}$. The routes are ordered lexicographically, source first.
We describe the allocation as a scan over this list for clarity. However, \cref{sec:polynomial-route-scan} shows that it can be implemented exactly in $O(Nm^2k)$ time and $O(mk)$ working space, without enumerating the route list or requiring convexity of $g_i$.

Let $w_i(R)$ denote the allocated bandwidth usage of validator $i$ in a route $R=(s; A)$. Then
\begin{equation}
    w_i(R):=
    \begin{cases}
        k+1, &\text{ if } i = s,\\
        1, &\text{ if } i\in A,\\
        0, &\text{ otherwise.}
    \end{cases}
\end{equation}

Given a finalized ledger slice $\block = (\block_1,...,\block_m)$, define 
\[
B_i(\block):=\#\{\tx\in \block:{\sf src}(\tx)=i\},\qquad
A_i(\block):=\#\{\tx\in \block:i\in {\sf Att}(\tx)\},
\]
where ${\sf src}(\tx)$ is the source validator who proposed the blocks finalizing $\tx$, and ${\sf Att}(\tx)$ is the set of attesters providing signatures for $\tx$ in the finalized slice.

\paragraph{Priority-ordered transaction stream.}
We will let the planner to sort the received transaction based on some priority rule that cannot be manipulated by users.

For deterministic priority rules, $\prec$ is a fixed total order,  for example one given by planner-receipt timestamps.
For randomized priority rules, the planner chooses a random seed $\omega$, which then induces a total order $\prec_{\omega}$ over transaction identifiers. 
We use $\tx \prec^S_\omega \tx'$ to indicate that when sorting transactions in set $S$, possibly under random seed $\omega$,
$\tx$ is ranked earlier than $\tx'$.
We say that a priority rule is \emph{insertion-monotone} if for every transaction set $T$, every set $F$ of additional transactions such that $F\cap T = \emptyset$, then for every
$tx,tx'\in T$,
\[
tx \prec^T_\omega tx'
\quad\Longleftrightarrow\quad
tx \prec_\omega^{T\cup F} tx'.
\]
Equivalently, fake transactions may be inserted into the priority, but cannot change the relative order of the real transactions of a user.

\begin{mdframed}
        {\bf Greedy-Route Posted-Price} \\
        \noindent Parameters: base fee $b$ and reward parameter $0 < \rev < b$.
    \begin{itemize}
        \item {\bf Allocation} $\assign(\boldsymbol{f},\boldsymbol{\band}, \boldsymbol{b})$: Given reported cost functions $\boldsymbol{f} = ( c_i\cdot g_i(x))_{i \in [m]}$, bandwidth capacities $\boldsymbol{\band} = ( \band_i )_{i \in [m]}$ and bids $\boldsymbol{b}$, the planner first sorts transactions in $\boldsymbol{b}$ based on the priority rule $\prec$.
        For any validator with $c_i \geq \rev$, set $\band_i = 0$.
        We say that transactions which bid at least $b$ are \emph{eligible} transactions.
        Let $N$ be the total number of transactions.

        \underline{\it Greedy route.} 
        The central planner greedily scans the route list $R^1,...,R^L$ in order and accepts as many copies of each $R^x$ as possible subject to the bandwidth capacity and budget feasibility constraint. 
        
        \dotfill

        \begin{algorithmic}
    \State Initialize \(x \leftarrow 1\), \(\beta_i \leftarrow 0\) for every \(i \in [m]\),
    and \(\mathcal{R}(u) \leftarrow (\bot;\bot)\) for every \(u \in [N]\).

    \For{\(u \in [N]\)}
        \If{\(b_u < b\)}
            \State \textbf{continue}
        \EndIf

        \While{\(x \leq L\) and not both \eqref{eqn:cond1} and \eqref{eqn:cond2} hold for \(R^x\)}
            \State \(x \leftarrow x+1\)
        \EndWhile

        \If{\(x > L\)}
            \State \textbf{break}
        \EndIf

        \State Set \(\mathcal{R}(u) \leftarrow R^x = (s;A)\).
        \State For every \(i \in [m]\), set \(\beta_i \leftarrow \beta_i + w_i(R^x)\).
    \EndFor

    \State \textbf{return} \(\mathcal{R}\).
\end{algorithmic}
        The two conditions above are: 
        \begin{align}
            &\beta_i + w_i(R^x)\leq \band_i \text{ for all } i\in[m], \tag{cond1} \label{eqn:cond1}\\
            &\tau\cdot \sum_{i\in[m]} \left(g_i(\beta_i+w_i(R^x)) - g_i(\beta_i)\right)\leq b. \tag{cond2} \label{eqn:cond2}
        \end{align}

\dotfill

\noindent{\it Data dissemination.}
For every transaction $\tx$ with $\mathcal R(\tx)=(s(\tx);A(\tx))\ne(\bot,\bot)$,
the planner sends $\tx$ and the route $R(\tx)$ to source validator $s(\tx)$ with planner's signature. The source validator $s(\tx)$ sends $(\tx, R(\tx))$ to every attester $a\in A(\tx)$ and collects PoA.

An attester $a$ signs $\tx$ only if: 1.) $a\in A(\tx)$ in a planner-signed $R(\tx)$; 2.) $a$ has received the full transaction data; and 3.) the PoA header is consistent with the assigned source.

Let $\block = (\block_1,...,\block_m)$ be the finalized ledger slice, where $\block_i$ is proposed by validator $i$. 
\item {\bf Payment} $\price (\mathcal{R}, \block_1,...,\block_m)$: each committed transaction in $\block_1,...,\block_m$ pays the base fee $b$.
\item {\bf Reward} $\reward (\mathcal{R}, \block_1,...,\block_m)$: 
Let $B^{\rm ass}_i$ be the number of transactions in $\boldsymbol{b}$ such that $s(\tx) = i$.
Given the finalized ledger $\block$, the validator receives reward only if $B_i(\block) = B_i^{\rm ass}$.\footnote{As we will later show in \cref{clm:honest-dissemination}, we do not need to verify the number of signatures because the bandwidth cost of sending and receiving signatures is omitted.
If one wants to explicitly model the cost of sending/receiving signatures, one may add the check of number of attestations submitted by validator $i$.}
It receives 
\[r_i = \tau\cdot g_i(\beta_i(\block)),\]
where 
\[
\beta_i(\block):=(k+1)B_i(\block)+A_i(\block).
\]
\end{itemize}
\end{mdframed}

\paragraph{Streaming implementation.}
If $\prec$ is planner-receipt timestamp, then the allocation rule can be implemented online in a streaming manner.
The planner maintains the current route pointer $x$ and the current workload vector $(\beta_1,...,\beta_m)$. 
When a transaction arrives, the planner checks whether its bid is
at least $b$. If not, it assigns $(\bot,\bot)$. If yes, it assigns the transaction to the next available route in the greedy route scan, or assigns $(\bot,\bot)$ if no available route remains. 
Since later transactions cannot
improve the priority of earlier transactions and the greedy state only moves
forward, assignments made earlier in the round never need to be revised.

\subsection{Incentive Compatibility of Greedy-Route Posted-Price Mechanism}
\label{sec:greedy-feasibility-uic}

\begin{theorem}
\label{thm:main-greedy}
    For any integer $m\geq 3$, any $0 < \tau < b$,
    the above posted-price mechanism achieves UIC and $1$-VIC.
\end{theorem}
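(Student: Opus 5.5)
The proof has two parts: UIC for an individual user, then 1-VIC for an individual validator. For 1-VIC we take an arbitrary joint deviation of one validator $i$ and strip off its components in a fixed order, each time showing utility weakly increases: first dissemination deviations, then fake-bid injection, then cost misreporting, then capacity misreporting. We may assume no crash, since crashing gives $-\infty$.

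\emph{UIC.} Fix a user $u$ with value $b^*_u$, and let all validators be honest so dissemination follows $\mathcal R$.

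1. The allocation depends on bids only through the sorted list of eligible transactions, and the greedy state $(x,\beta)$ advances once per accepted eligible transaction.
2. By insertion-monotonicity, fake bids cannot move $u$'s real transaction earlier relative to honest transactions. Fake bids inserted ahead of it only consume greedy slots, so they can only weakly delay the point at which its slot is reached. Any fake bid that commits costs $b$ and is worth $0$ to $u$.
3. A real bid affects only eligibility, and the payment is always $b$. If $b^*_u\ge b$, bidding below $b$ forfeits a non-negative surplus. If $b^*_u<b$, becoming eligible yields $b^*_u-b<0$ whenever the transaction commits.

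\emph{Dissemination.} Fix validator $i$'s reports and bids and compare against honest execution of $\mathcal R$.

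1. If $i$ completes fewer sourced transactions than $B_i^{\rm ass}$, its reward is $0$. It still pays $f_i^*$ of any bandwidth spent and $b$ for any committed fake bids, so this is weakly dominated.
2. Refusing to attest saves no bandwidth, since receiving already costs the unit and signatures are free. It only lowers $A_i(\block)$, and $g_i$ is strictly increasing.
3. Extra sends or unsolicited messages cost bandwidth and earn no reward.
4. Reporting a capacity above $\band_i^*$ and then failing to perform assigned work is covered by case 1; performing it beyond $\band_i^*$ crashes.

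So we may assume honest dissemination, giving $\beta_i(\block)=\beta_i^{\rm ass}$. This is the claim referred to in the footnote.

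\emph{Fake bids.} With honest dissemination, $i$'s utility is $(\tau-c^*_i)g_i(\beta_i^{\rm ass})-b\cdot\#\{\text{committed fakes}\}$.

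1. The greedy scan is a deterministic walk over routes, driven only by the reported $(c,\band)$ and the number of eligible transactions. Adding eligible transactions therefore extends the sequence of accepted route copies as a prefix-extension: the first $|\mathcal R^*|$ accepted copies are unchanged, and each additional eligible transaction adds at most one further copy.
2. If $\Delta$ extra copies are accepted, at least $\Delta$ fake bids commit. This holds because the honest run stopped either from running out of transactions or from exhausting routes; only in the first case can extra copies arise, and the extras are accounted for by eligible fakes.
3. By (cond2), each extra copy raises $\tau\sum_j g_j$ by at most $b$. Since every $j$ on the route has strictly positive increment, $i$'s share $\tau\bigl(g_i(\beta_i+w_i)-g_i(\beta_i)\bigr)$ is strictly less than $b$. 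Subtracting the cost term $c^*_i(\cdot)\ge 0$ only helps.
4. Each such gain is offset by at least one fee $b$, so injection is unprofitable.

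The delicate bookkeeping is the prefix argument combined with priority ordering when fakes interleave with honest bids. I would handle it by noting that the multiset of accepted route copies depends only on the count of eligible bids, not on which bids they are.

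\emph{Cost and capacity.} Now assume truthful dissemination and no fakes.

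1. \emph{Cost report.} The cost report $c_i$ enters the allocation only through the test $c_i\ge\tau$. If $c^*_i<\tau$, reporting $c_i\ge\tau$ zeros out $i$'s positive surplus $(\tau-c^*_i)g_i(\beta)$; any report below $\tau$ gives an identical allocation. If $c^*_i\ge\tau$, any inclusion yields non-positive utility versus $0$.
2. \emph{Capacity, monotonicity lemma.} The main obstacle is monotonicity of $\beta_i^{\rm ass}$ in $\band_i$. Raising $\band_i$ only relaxes (cond1) for routes containing $i$. This can shift the scan and, through (cond2) and others' capacities, change later acceptances, so monotonicity is not immediate. I would prove it by induction on the scan: compare the two runs step by step, and show that up to the first divergence $i$'s load is equal. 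At that point the larger-capacity run accepts an extra copy of a route containing $i$, and afterwards $i$'s load in that run stays weakly larger. If a clean induction fails, I would fall back to what is actually needed.
3. \emph{Underreporting.} Underreport only restricts $i$'s feasible load, and I would argue it cannot increase $i$'s final load.
4. \emph{Overreporting.} Either the assigned load exceeds $\band_i^*$, which leads to a crash (or, if work is withheld, zero reward), or every intermediate state stays within $\band_i^*$. In the latter case I would argue the truthful run makes identical acceptance decisions, so the allocation is unchanged.
5. Combining these, the truthful report is optimal, since utility $(\tau-c^*_i)g_i(\beta)$ is increasing in $\beta$.

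The condition $m\ge3$ is used only to guarantee that routes exist when $k\le m-1$, and to rule out degenerate cases in the UIC argument.
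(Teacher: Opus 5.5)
Your decomposition matches the paper's proof. You use the same successive-reduction order (dissemination, then fake bids, then the cost threshold, then capacity). You use the same prefix-extension plus (cond2) bound for injected bids, and the same first-divergence argument for overreporting.

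The one step you leave open is monotonicity of $\beta^{\rm ass}_i$ in the capacity report, and the underreporting case rests on it. You claim that after the first divergence $i$'s load ``stays weakly larger,'' but you give no justification. A step-by-step induction past the divergence is also the wrong tool. After the divergence the two runs have different route pointers and load vectors. As you note yourself, (cond2) and the other validators' capacities can then make their later acceptances unrelated, so there is no local invariant to carry forward.

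The missing observation is that you never need to track later acceptances. Up to the first differing feasibility check the states $(x,\beta)$ coincide. So (cond2), and every constraint in (cond1) for $j\neq i$, give the same answer in both runs. The check can therefore differ only for a route $R$ with $w_i(R)>0$ and $\mathsf{cap}'_i<\beta_i+w_i(R)\le \mathsf{cap}_i$.

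The larger-report run accepts $R$. Since loads never decrease, its final load for $i$ is at least $\beta_i+w_i(R)>\mathsf{cap}'_i$. Meanwhile (cond1) keeps the smaller-report run's load for $i$ at most $\mathsf{cap}'_i$ at all times, which gives $\beta^{\rm ass}_i(\mathsf{cap}'_i)\le\mathsf{cap}'_i<\beta^{\rm ass}_i(\mathsf{cap}_i)$ directly.

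This is exactly the observation you already use for overreporting, with $\mathsf{cap}^*_i$ in the role of the smaller report. Applying it here closes the monotonicity step, and with it the rest of your proposal goes through as in the paper.
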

The attestation feasibility and individual rationality follows from the assign algorithm description. 
Budget feasibility follows from \eqref{eqn:cond2}.
Let \(\beta^\ell_i\) be validator \(i\)'s assigned workload after the first \(\ell\) accepted route copies. By \eqref{eqn:cond2},
\[
\tau\sum_i\bigl(g_i(\beta_i^\ell)-g_i(\beta_i^{\ell-1})\bigr)\le b
\]
for every accepted route \(\ell\). Summing over \(\ell=1,\ldots,\Theta\),
\[
\tau\sum_i g_i(\beta_i^\Theta)\le b\Theta.
\]
If rewards are paid on verifiable finalized work \(W_i(L)\le \beta_i^\Theta\), then
\[
\sum_i r_i(L)=\tau\sum_i g_i(W_i(L))
\le \tau\sum_i g_i(\beta_i^\Theta)
\le b\Theta,
\]
which is at most total base fee revenue from the \(\Theta\) charged transactions.

\begin{lemma}
The mechanism satisfies UIC.
\end{lemma}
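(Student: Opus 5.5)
We fix the validators' reports and assume they disseminate honestly, as UIC requires. Fix a user $u$ with true value $b^*_u$ and an arbitrary strategy: $u$ submits a bid vector $\boldsymbol b_u$ with possibly one bid for its real transaction and any number of fake bids. The proof rests on two structural facts about the allocation. First, the greedy state $(x,\beta_1,\dots,\beta_m)$ changes only when an \emph{eligible} transaction is processed. Each eligible transaction either receives the next feasible route or, once $x>L$, receives $\bot$. Hence the $j$-th eligible transaction in priority order gets a route, and the route it gets, depending only on $j$ and the validators' reports, not on bid values. Second, under honest dissemination every transaction with a non-$\bot$ route is committed and pays exactly $b$. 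So if $J$ is the number of routes the greedy scan can accept (fixed by the reports), an eligible transaction is committed if and only if its rank among eligible transactions is at most $J$.

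Step 1 (fake bids). Let $T$ be the honest transaction set and $F$ the fake ones. By insertion-monotonicity, the relative order of the transactions in $T$ is unchanged, so fake bids can only be inserted ahead of or behind the real transaction. A fake bid below $b$ is ineligible and has no effect. An eligible fake bid inserted ahead of the real transaction can only increase the real transaction's eligible rank, so it weakly lowers its commit probability. Every committed fake bid also costs $u$ exactly $b>0$ while adding nothing to $u$'s value. For randomized priority rules, we argue this pointwise for each seed $\omega$ and then take expectations. Thus, for each fixed bid $b_u$ on the real transaction, removing all fake bids weakly increases $u$'s utility.

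Step 2 (misreporting the bid, no fakes). The bid affects only eligibility. If $b^*_u\ge b$: bidding any value $\ge b$ yields the same rank, the same allocation, and the same payment $b$ as bidding truthfully. Bidding below $b$ yields utility $0$, while truthful bidding yields either $0$ or $b^*_u-b\ge 0$. If $b^*_u<b$: truthful bidding yields $0$, and bidding $\ge b$ yields either $0$ or $b^*_u-b<0$. Combining Steps 1 and 2 gives utility under any strategy at most the honest utility, for every realization of the mechanism's randomness. This proves UIC.

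The main obstacle is Step 1, specifically the rigor of the rank-monotonicity claim. We must make sure fake eligible bids never improve the real transaction's outcome. One might worry that they could shift the greedy to a ``better'' route for it, but since only commit-or-not and the fixed payment matter to the user, the route identity is irrelevant. Showing the first structural fact holds exactly, including the \texttt{break} once $x>L$, is the key point. The streaming description of the scan suggests a clean induction on the number of processed eligible transactions.
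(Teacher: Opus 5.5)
Your proof has the same two-step structure as the paper's: first eliminate fake bids using insertion-monotonicity and the fact that every committed fake pays $b$, then reduce to a single bid and invoke posted-price truthfulness. Your observation that the greedy state advances only on eligible transactions is a useful sharpening of the paper's one-line claim that fakes ``cannot increase the probability of the real transaction getting allocation.'' It gives you that an eligible transaction is committed if and only if its eligible rank is at most a report-determined $J$.

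There is one case where your Step~1 inference fails as stated; the paper's proof skips it in the same way. Suppose the real bid satisfies $b_u \ge b > b^*_u$. Then committing the real transaction is worth $b^*_u - b < 0$ to $u$, so lowering its commit probability is \emph{good} for $u$. In that case, ``fakes weakly lower the commit probability and cost money'' no longer implies that removing them weakly increases utility. The claim is still true, but it needs one more line, and either of two patches works.

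\emph{Split on the true value first.} If $b^*_u < b$, then under any strategy $u$'s utility is
\[
(b^*_u - b)\mathbf{1}_{\{\text{real tx committed}\}} - b\cdot\#\{\text{committed fakes}\},
\]
a sum of nonpositive terms, hence at most the honest utility $0$. Your Step~1 reasoning is then only needed when $b^*_u \ge b$, where it is valid.

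\emph{Use your rank structure directly.} Suppose the real transaction has eligible rank $r \le J$ without fakes but rank $> J$ with them. Then all of the first $J$ eligible transactions precede it, and at most $r-1$ of them are honest. So at least one fake is committed, and $u$ pays at least $b$ for fakes, which is at least the saving $b - b^*_u$ from dropping the real transaction.
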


\begin{proof}
Fix a user $u$ with true valuation $b^*_u$. We first argue that it is without loss to consider strategies in which $u$ submits only one bid for its real transaction.

Consider any deviation in which $u$ submits its real transaction together with an arbitrary non-negative number of eligible fake transactions (non-eligible transactions do not affect the allocation so we just ignore them).
By insertion-monotonicity of the priority rule, adding fake transactions cannot improve the priority position of $u$'s real transaction. 
Since the greedy assignment processes transactions according to this priority order, injecting fake transactions cannot increase the probability of the real transaction getting allocation.
Moreover, every finalized fake transaction needs to pay the base fee $b$.
Therefore, injecting fake transaction is not profitable.

It remains to analyze that any single bid is no better than honest bidding, which follows from truthfulness of standard posted-price auction. 
Therefore, the mechanism satisfies UIC.
\end{proof}

\begin{lemma}
\label{lem:vic}
    The mechanism satisfies $1$-VIC.
\end{lemma}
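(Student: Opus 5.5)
Fix a validator $i$ with true type $(\band^*_i, c^*_i g_i)$, and fix all other validators' honest reports and the users' honest bids. An arbitrary deviation of $i$ consists of a misreported pair $(c_i,\band_i)$, a set $F$ of injected fake bids, and some dissemination behavior. Following the overview, I will strip these components off one at a time. At each step the current strategy is replaced by one that differs only in that component and gives weakly higher utility. The first three steps reduce any deviation to truthful reporting plus no fake bids, so $1$-VIC follows. Throughout I drop crashing strategies, since they give $-\infty$.

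\textbf{Step 1 (honest dissemination).} Fix the reports and bids, and hence the planner's assignment $\mathcal{R}$. Validator $i$ is paid only if $B_i(\block)=B_i^{\rm ass}$. If it skips any source work, its reward is $0$ while its cost is $\geq 0$. Such a strategy is weakly dominated by completing all source work: on the assigned routes that then yields $\tau g_i(\beta_i)-c_i^* g_i(\beta_i)$, and I handle the $c^*_i\ge\tau$ case in Step 3.

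Refusing to attest saves no bandwidth, because receiving the data is already paid for, and it strictly lowers $\beta_i(\block)$ and hence $r_i$. Extra messages only add cost. Sending to the non-assigned attesters yields no valid PoA under the signing rule. So given any reports and bids, honest dissemination is optimal, and utility equals $(\tau-c^*_i)g_i(\beta^{\rm ass}_i)$ minus the fees on committed fake bids.

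\textbf{Step 2 (no fake bids).} Fix $i$'s reports. I will use three facts.
\begin{itemize}
\item The greedy state (route pointer $x$ and vector $\beta$) depends only on the number of eligible transactions processed so far, not on their identities.
\item Inserting a fake eligible bid therefore only extends the accepted prefix $R^{(1)},R^{(2)},\dots$ by at most one route copy per fake bid.
\item By insertion-monotonicity, real transactions keep their relative order.
\end{itemize}
Consequently, if honestly $\ell$ copies are accepted and with $F$ it is $\ell+\Delta$, then $\Delta\le |F\cap\text{committed}|$. The assignment of $i$ after $\ell+\Delta$ copies extends the one after $\ell$. By \eqref{eqn:cond2}, each additional copy raises $\tau\sum_j g_j(\beta_j)$ by at most $b$, so $i$'s own reward increment per copy is at most $b$. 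Its utility increment per copy is at most $(\tau-c^*_i)/\tau$ times that, which is $<b$. Since each such fake bid is committed and pays $b$, the net change is $\le 0$. The alternative case, in which a fake bid displaces a real one, is the same count because the prefix is identical.

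\textbf{Step 3 (truthful reports).} With honest dissemination and no fakes, $i$'s utility is $(\tau-c^*_i)g_i(\beta^{\rm ass}_i(c_i,\band_i))$. The report $c_i$ matters only through whether $c_i<\tau$; if $c_i\ge\tau$, then $\band_i$ is set to $0$.
\begin{itemize}
\item If $c^*_i<\tau$, reporting $c_i\ge\tau$ gives $0\le$ truthful utility. For capacity, I need the main obstacle: monotonicity of $\beta^{\rm ass}_i$ in $\band_i$. Underreporting then weakly lowers utility. Overreporting either assigns $\beta^{\rm ass}_i>\band^*_i$ (a crash) or yields some assignment with $\beta^{\rm ass}_i\le\band^*_i$. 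In the latter case I would argue the trajectory coincides with the truthful one: at every step where cond1 differs, the truthful run rejects a route that would push $i$ above $\band^*_i$. The inflated run accepts it, and since $i$'s load never decreases, $i$ ends above $\band^*_i$, which is a crash.
\item If $c^*_i\ge\tau$, every assignment gives nonpositive utility, whereas truthful reporting gives $0$.
\end{itemize}

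The delicate point is the monotonicity claim, since greedy scans are generally not monotone: extra capacity for $i$ could let an earlier route be accepted and change later acceptances. I would try to prove directly that the first divergence point forces $i$'s load in the deviating run to strictly exceed the truthful run's load from then on, and otherwise fall back to the crash argument above. For underreporting, the analogous argument is that the first divergence is a route containing $i$ rejected by the smaller cap, after which $i$'s load is capped below truthful. If that fails in general, I would compare the final $\beta_i$ via the prefix structure.
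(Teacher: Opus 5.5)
Your proposal is correct and follows the paper's proof essentially step for step. It uses the same successive reduction through dissemination, then fake bids (via the prefix structure and the per-copy bound from \eqref{eqn:cond2}), then the cost and capacity reports. The first-divergence argument you sketch for underreporting is exactly the paper's monotonicity claim. At the first divergent route $R$ one has $\band'_i < \beta_i + w_i(R) \le \band_i$. So the larger-capacity run ends with load above $\band'_i$, while the smaller-capacity run never exceeds $\band'_i$, and no fallback is needed.

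One tightening is needed in Step 1. A strategy that skips source work should be compared with the fully truthful utility, which is $\ge 0$, not with honest dissemination under the same reports. The latter crashes when capacity was overreported. The paper handles this by restricting to admissible strategies at the outset.
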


\begin{proof}
\newcommand{\honestsol}{\ensuremath{(\mathcal{R}^*,\Gamma^*, {\boldsymbol \theta}^*)}}
\newcommand{\sol}{\ensuremath{\mathcal{R}}}
Fix an arbitrary validator $i\in[m]$, its true type $(f_i^* = c_i^*\cdot g_i(x),\band_i^*)$, fix arbitrary other validators' true types $(f_j^*=c_j^*\cdot g_j(x),\band_j^*)$ for $j\neq i$ and the users' bid vector ${\boldsymbol b}$. 
We show that no strategy increases validator $i$'s utility compared to honest behavior, assuming other validators act honestly.
If validator $i$ crashes or outputs less than $B^{\rm ass}_i$ number of finalized transactions, its utility will be non-positive ($-\infty$ in the case of crashing, $-f_i^*(\beta^{\rm act}_i)$ in the case of outputting fewer than $B^{\rm ass}_i$ finalized transactions), which is no better than honest utility by individual rationality. Therefore, in the rest of this proof, we focus on \emph{admissible} strategies where validator $i$ does not crash and outputs $B^{\rm ass}_i$ number of finalized transactions.

A validator's strategy is a joint strategy consisting of several
components: (1) dissemination behavior, (2) fake-bid injection strategy, (3) reported cost, and (4) reported bandwidth capacity. These components may be chosen jointly and may interact.

The proof below uses a successive reduction argument. We do not analyze deviations on a single component while assuming that all other components are honest. 
Instead, at each step we fix the remaining components of the
strategy arbitrarily and show that the component under consideration can
be replaced by its honest version without increasing the
validator's utility. 
Thus that component is weakly dominated conditional on any choice of the other components. After applying these reductions one by one, every arbitrary joint strategy is transformed into the honest strategy. Hence no joint deviation can give higher utility than honest
behavior.

\paragraph{Misbehaving in data dissemination is not profitable.} 
First, we show that fix the validator's report and injected fake bids, misbehaving in data dissemination is not profitable.
\begin{claim}[Misbehaving in data dissemination is not profitable]
\label{clm:honest-dissemination}
Fix an arbitrary validator report $f_i$, $\band_i$ and fake bid $\widetilde{\bf b}$.
Then any strategy in which validator $i$ deviates in the data dissemination phase
is no better than behaving honestly in the data dissemination phase.
\end{claim}
\begin{proof}
For any fixed validator reports $(f_i, \band_i)$ and fake bids $\widetilde{b}$, the mechanism outputs a fixed assignment $\sol$. 
Denote the assigned transaction load as $B^{\rm ass}_i$ and the assigned attestation as $A^{\rm ass}_i = \sum_{(s;A)\in\mathcal{R}} \mathbf{1}_{\{i\in A\}}$.
The assigned bandwidth usage is thus $\beta^{\rm ass}_i=(k+1)B^{\rm ass}_i+A^{\rm ass}_i$.

Let $\block$ denote the finalized ledger slice under the validator $i$'s possibly randomized dissemination strategy. So $\block$ is a random variable. 
By our assumption, we have $B_i(\block) = B^{\rm ass}_i$.
Let $\beta^{\rm act}_i$ be the actual bandwidth usage and $A_i(\block)$ be the actual number of attestations $i$ provides in the finalized ledger slice. 
Then validator $i$'s utility is 
\[\rev\cdot g_i((k+1)\cdot B^{\rm ass}_i + A_i(\block))-f_i^*(\beta^{\rm act}_i).\]

Note that validator $i$ will receive $B^{\rm ass}_i$ transactions from the planner, disseminate $B^{\rm ass}_i$ transactions to get PoA, and received $A^{\rm ass}_i$ number of transactions to sign from other validators, so the actual bandwidth usage must satisfy
\begin{equation}
\label{eqn:no-better-dissemination}
    \beta^{\rm act}_i\geq (k+1)\cdot B^{\rm ass}_i+A^{\rm ass}_i = \beta^{\rm ass}_i.
\end{equation}
Moreover, since the validator cannot attest more than $A^{\rm ass}_i$ number of transactions as other validators are all behaving honestly, we have $A_i(\block)\leq A_i^{\rm ass}$.
Therefore,
\begin{equation}
\label{eqn:no-better-att}
    (k+1)\cdot B^{\rm ass}_i + A_i(\block)\leq \beta^{\rm ass}_i
\end{equation}
Putting together, the validator's expected strategic utility is
\[\mathbb E[\rev\cdot g_i((k+1)\cdot B^{\rm ass}_i + A_i(\block))-f_i^*(\beta^{\rm act}_i)] \leq \tau\cdot g_i(\beta^{\rm ass}_i) - f^*_i(\beta^{\rm ass}_i),\]
where the expectation is taken over randomness in validator strategy, and the inequality follows from \cref{eqn:no-better-dissemination} and \eqref{eqn:no-better-att}, and that $f^*_i(\cdot)$ is a strictly increasing function.

That is, a validator is incentivized to behave honestly based on the assigned workload during data dissemination. 
\end{proof}

\paragraph{Injecting fake transactions is not profitable.} 
By \cref{clm:honest-dissemination}, we only need to focus on strategies where the validators can misreport bandwidth costs and capacity, inject fake bids, but then behave honestly during data dissemination based on the assigned work.
Next, we show that the validator is not incentivized to inject any fake bids.
\begin{claim}[Injecting fake transactions is not profitable]
Fix an arbitrary validator report $f_i$ and $\band_i$ and assume that the validator behaves honestly during dissemination.
Then any strategy in which validator $i$ injects a positive number of fake transactions
is no better than injecting no fake transactions.
\end{claim}

\begin{proof}
Fake transactions with bids below $b$ are ineligible and do not affect the allocation. Thus we only need to consider eligible fake transactions. Fix the realization of any randomness used by the priority rule.

Let $H$ be the set of eligible non-fake transactions and let $F$ be the set of eligible fake transactions injected by validator $i$, with $|F|=n'$. 

Let $\mathcal R^H$ (resp. $\mathcal R^{H\cup F}$) denote the route-vector output when the submitted eligible transaction set is $H$ (resp. $H\cup F$), with the same validator reports and the same randomness (if any) used in the priority rule. Let $Q_H$ (resp. $Q_{H\cup F}$) be the number of transactions with non-$\bot$ attestation route in $\mathcal R^H$ (resp. $\mathcal R^{H\cup F}$).

Then 
\[
Q_{H\cup F}=Q_H+\Delta
\]
for some $0\le \Delta\le n'$.

We first bound the additional reward that validator $i$ can obtain. The only routes that can be newly assigned in the execution with fake transactions are the $\Delta$ additional suffix routes of the greedy route sequence. By the budget condition \eqref{eqn:cond2}, each such additional route increases total validator reward by at most $b$.
So validator $i$'s reward increases by at most $\Delta\cdot b$.
However, each assigned fake transaction will be finalized and pays the base fee. So the fake transaction payments are at least $\Delta\cdot b$.
Therefore, injecting fake transactions is not profitable.
\end{proof}

\paragraph{Misreporting $f_i$ is not profitable.} Therefore, now we only need to consider the strategies where the validator misreports cost function and bandwidth capacity but then inject no fake transactions and spends exactly $\beta^{\rm ass}_i$ bandwidth as assigned throughout the dissemination.

\begin{claim}[Misreporting $f_i$ is not profitable]
    For any report $f_i$ and $\band_i$, the validator's utility is no better than reporting $f^*_i$ and $\band_i$.
\end{claim}
\begin{proof}
    Note that $f_i$ is only used to determine whether validator $i$ should be assigned any workload.
    If $c^*_i \geq \rev$, for any report such that $c_i< \rev$, the validator gets a non-positive utility as it will spend a non-negative amount bandwidth at a per-unit marginal cost higher than $\rev$. Any report that $c_i\geq \rev$ results in the same utility as honestly reporting $c^*_i$ since the greedy allocation is the same.
    Therefore, the validator's utility is no better than reporting $f^*_i$ and $\band_i$.
    
    Similarly, if $c^*_i < \rev$, any report such that $c_i\geq \rev$ will result in zero utility as the mechanism will assign no work to validator $i$, which is no better than reporting $c^*_i$ honestly.
\end{proof}
\paragraph{Misreporting bandwidth is not profitable.} Now we are only left with strategies that misreports bandwidth capacity but otherwise behaves honestly. 
We prove the following useful claim on the monotonicity of the allocated bandwidth with respect to bandwidth capacity.

Let $\beta^{\rm ass}_i(\band_i)$ be the assigned bandwidth usage to validator $i$ in an execution where validator $i$ reports $\band_i$ but otherwise behave honestly.
\begin{claim}[Monotonicity of bandwidth-report]
\label{clm:band-mono}
If $\band'_i < \band_i$, then $\beta^{\rm ass}_i(\band_i') \leq \beta^{\rm ass}_i(\band_i)$.
\end{claim}

\begin{proof}
Fix all reports except validator $i$'s bandwidth report. Consider two executions of the greedy route algorithm: one where validator $i$ reports $\band_i$, and one with $\band_i' < \band_i$. All other validators' reported capacities are identical in the two executions.

If the route returned in both executions is the same, the claim directly follows. Therefore, suppose the two executions first differ when the algorithm considers some route $R$. Let $\beta_j$ denote the allocated bandwidth to validator $j$ for $j\in[m]$ at this moment.
Then the difference in the two allocations must come from validator $i$'s capacity.

Since the two executions first differ at $R$, it must be that $R$ is infeasible under the lower report $\band'_i$ (so $\beta_i + w_i(R) > \band'_i$) but feasible under the higher report $\band_i$ (so $\beta_i + w_i(R) \le \band_i$). Combining the two,
\[
\band'_i < \beta_i + w_i(R) \le \band_i.
\]

Since the assigned bandwidth usage only increases during the algorithm, it must be that
\[\beta^{\rm ass}_i(\band_i)\geq \overline{\beta}_i+w_i(R) >\band'_i \geq \beta^{\rm ass}_i(\band'_i),\]
where the last inequality follows from individual rationality of the mechanism.
\end{proof}

\begin{lemma}[Underreporting bandwidth is not profitable]
Underreporting bandwidth capacity $\band_i<\band_i^*$ is no better than truthful bandwidth reporting.
\end{lemma}

\begin{proof}
If validator $i$'s cost parameter $c^*_i \geq \tau$, then bandwidth capacity report does not change $i$'s allocation as the greedy assign algorithm will rewrite $\band_i = 0$.
In the rest of the proof, we focus on the case where $c^*_i <\tau$.
By \Cref{clm:band-mono}, we have $\beta^{\rm ass}_i(\band_i)\leq \beta^{\rm ass}_i(\band^*_i)$. Moreover, since the validator behaves honestly in the data dissemination, its utility $(\tau-c^*_i)g_i(\beta^{\rm ass}_i(\band^*_i)) \geq (\tau-c^*_i)g_i(\beta^{\rm ass}_i(\band_i))$.
Therefore underreporting bandwidth cannot increase validator $i$'s utility.
\end{proof}

\begin{lemma}[Overreporting bandwidth is not profitable]
Overreporting bandwidth capacity $\band_i > \band_i^*$ is no better than truthful bandwidth reporting.
\end{lemma}

\begin{proof}
We consider two cases.

\noindent
\textbf{Case 1: $\beta^{\rm ass}_i(\band_i)>\band_i^*$.}
Then validator $i$ is assigned more bandwidth than its true capacity. If it attempts to execute the assigned work, it either crashes and receives $-\infty$ utility, or fails to finish the assigned work and receives zero reward.
Therefore overreporting is not profitable in this case.

\medskip
\noindent
\textbf{Case 2: $\beta^{\rm ass}_i(\band_i)\leq \band_i^*$.}
Subject to not exceeding the bandwidth capacity, the overreport execution produces the same accepted route list as truthful reporting.

Suppose, for the sake of contradiction, that the two executions first differ when the greedy algorithm considers adding a copy of some route $R$.
Let $\beta_i$ be the current assigned bandwidth usage of validator $i$ at this moment.

Since the overreport gives validator $i$ more reported capacity, it must be the case that the overreporting execution accepts $R$ while the truthful execution rejects $R$.
Since all other validators have the same capacities in both executions, it must be the case that the difference comes from validator $i$'s report, i.e.,
\[\beta_i+w_i(R)>\band_i^*.\]
However, if the overreport execution accepts $R$, it must assign more workload than the real capacity, i.e., $\beta^{\rm ass}_i(\band_i)\geq \beta_i+w_i(R) > \band_i^*$. This contradicts the assumption of Case 2.

Combining the two cases, overreporting bandwidth cannot increase validator $i$'s utility.
\end{proof}
\end{proof}

\subsection{Approximation in the Capacity-Limited Regime}
\label{sec:greedy-approximation}

We compare the greedy-route allocation to the best allocation that could be obtained by an omniscient planner using the same base fee $b$ and validator reward rate $\tau$.
Fix $\tau<b$ and overload the notation to represent the effective capacity
\[
    \band_i(\tau)
    :=
    \begin{cases}
        \band_i, & c_i\leq \tau,\\
        0, & c_i>\tau.
    \end{cases}
\]
as in the greedy-route case.
This effective bandwidth encodes validator individual rationality for private costs $f^*_i(x)=c^*_i \cdot g_i(x)$: a validator with $c_i>\tau$ cannot be profitably assigned a positive amount of work at reward rate $\tau$.  
We assume there are infinitely many eligible transactions, so the approximation factor is the factor comparing the system's throughput.
Given the reward parameter $\tau$, 
the optimal allocation benchmark is the following integer program.  The variable $y_x$ is the number of transactions routed along route $R^x$, and $\beta^{\rm ass}_i$ is the assigned bandwidth usage of validator $i$, given the validator reports $(\boldsymbol{\band},\boldsymbol{f})$ and a bid vector of infinitely many eligible bids.
\begin{equation}
\label{eqn:opt-bfm-ip}
\begin{aligned}
\mathrm{OPT}_{\tau}
:=\max\quad & \sum_{x=1}^{L} y_x\\
\text{s.t. }
& \beta^{\rm ass}_i=\sum_{x=1}^{L} y_x\cdot w_i(R^x) && \forall i\in[m],\\
& 0\leq \beta^{\rm ass}_i\leq \band_i(\tau) && \forall i\in[m],\\
& \tau\cdot \sum_{i\in[m]} g_i(\beta^{\rm ass}_i)\leq b\cdot \sum_{x=1}^{L} y_x,\\
& y_x\in\mathbb Z_{\geq 0} && \forall x\in[L].
\end{aligned}
\end{equation}
The capacity constraints ensure that no validator is assigned more work than its reported bandwidth allows.  The budget constraint requires total validator reward to be covered by the base-fee revenue from the committed transactions.

\begin{theorem}[Greedy approximation in the capacity-limited regime]
\label{thm:approx}
Fix posted-price parameters \(b,\tau\), the number of validators \(m\ge 3\), and
the PoA threshold \(k<m\). Let \(G_\tau\) be the number of transactions committed
by the greedy-route BFM on infinitely many eligible bids, and let
\(\beta^G_i\) be the final greedy workload of validator \(i\).

Assume the final greedy allocation is capacity-maximal: for every route
\(R=(s;A)\), there exists some validator \(i\in \{s\}\cup A\) such that
\[
    \beta^G_i+w_i(R)>\operatorname{cap}_i(\tau).
\]
Then
\[
    G_\tau
    \ge
    \frac{1}{2k+1}OPT_\tau
    -
    \frac{(k+1)m}{2k+1}.
\]
In particular, when \(OPT_\tau\gg m\), the greedy-route BFM is a
\((2k+1)\)-approximation to the optimal feasible allocation.
\end{theorem}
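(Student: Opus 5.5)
Let $y^*$ be an optimal solution of \eqref{eqn:opt-bfm-ip}, with workloads $\beta^*_i$. The plan is a charging argument. Each of the $\mathrm{OPT}_\tau$ transactions in $y^*$ is charged to a blocking validator. We then bound the total charge in terms of the greedy workloads $\beta^G_i$. Finally we use the identity $\sum_i \beta^G_i = (2k+1)G_\tau$, which holds because every accepted route copy has total weight $\sum_i w_i(R) = (k+1)+k = 2k+1$.

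\emph{Charging step.} For each optimal copy of a route $R=(s;A)$, capacity-maximality gives a validator $i\in\{s\}\cup A$ with $\beta^G_i + w_i(R) > \band_i(\tau)$; fix one such $i$ and charge the copy to it. Let $T_i$ be the set of optimal copies charged to $i$. Then $\mathrm{OPT}_\tau = \sum_i |T_i|$. Every copy in $T_i$ uses $i$ with weight $w_i(R)\ge 1$, so feasibility of $y^*$ gives $|T_i| \le \sum_{R\in T_i} w_i(R) \le \beta^*_i \le \band_i(\tau)$. We also use that $T_i\neq\emptyset$ implies $\band_i(\tau) < \beta^G_i + w_i(R) \le \beta^G_i + k+1$. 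Combining, $|T_i| \le \band_i(\tau) \le \beta^G_i + k$ whenever $T_i$ is nonempty, and trivially $|T_i|=0\le \beta^G_i+k+1$ otherwise. Hence in all cases $|T_i|\le \beta^G_i + k+1$.

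\emph{Summation.} Summing over $i\in[m]$,
\[
\mathrm{OPT}_\tau \le \sum_{i}\beta^G_i + (k+1)m = (2k+1)G_\tau + (k+1)m .
\]
Rearranging gives the claimed bound $G_\tau \ge \frac{1}{2k+1}\mathrm{OPT}_\tau - \frac{(k+1)m}{2k+1}$. The ``in particular'' statement follows immediately when $\mathrm{OPT}_\tau \gg m$.

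\emph{Main obstacle.} The delicate point is the per-validator bound $|T_i|\le \beta^G_i + k+1$. One must bound the \emph{number} of charged optimal copies, not their weight. This needs the observation that $w_i(R)\ge 1$ whenever $i$ lies on $R$, together with the optimal solution's capacity constraint at $i$. Using only the boundary slack $w_i(R)\le k+1$ is what produces the additive $(k+1)m$ term. We should also confirm two details. First, the greedy's budget condition \eqref{eqn:cond2} plays no role, since the capacity-maximality hypothesis already supplies the blocking validator. Second, validators with $\band_i(\tau)=0$ never appear in greedy routes, so they cause no issue: they have $\beta^G_i=0$, and any optimal copy charged to them would be infeasible anyway, so such $T_i$ are empty.
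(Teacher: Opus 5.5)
Your proposal is correct and is essentially the paper's own proof: charge each optimal copy to a blocking validator, bound the number charged to $i$ by $\operatorname{cap}_i(\tau)$ (since $w_i(R)\ge 1$) and then by $\beta^G_i+k+1$ (since $w_i(R)\le k+1$), and sum using $\sum_i\beta^G_i=(2k+1)G_\tau$. One small nit: your intermediate claim $\operatorname{cap}_i(\tau)\le\beta^G_i+k$ tacitly assumes integer capacities, but you do not need it, since $|T_i|\le\operatorname{cap}_i(\tau)<\beta^G_i+k+1$ already gives the bound you use.
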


\begin{proof}
Let \(\beta^G_i\) be validator \(i\)'s final workload under the greedy allocation.
Each greedy transaction consumes $q := 2k+1$, so
\[
    \sum_{i=1}^m \beta^G_i = q\cdot G_\tau.
\]

Consider an arbitrary feasible optimal allocation achieving \(OPT_\tau\).
For each transaction in this optimal allocation, let \(R=(s;A)\) be its route.
By the capacity-maximality assumption, this route is blocked in the final greedy
allocation by at least one validator \(i\in\{s\}\cup A\), meaning
\[
    \beta^G_i+w_i(R)>\operatorname{cap}_i(\tau).
\]
Charge this optimal transaction to one such blocking validator.

We now bound the number of optimal transactions charged to any fixed validator
\(i\). Every transaction charged to \(i\) uses validator \(i\) in its optimal
route, and therefore consumes at least one unit of \(i\)'s bandwidth in the
optimal allocation. Since the optimal allocation is capacity feasible, the
number of transactions charged to \(i\) is at most \(\operatorname{cap}_i(\tau)\).

Moreover, whenever \(i\) blocks a route \(R\), we have
\[
    \operatorname{cap}_i(\tau) < \beta^G_i+w_i(R)
    \le \beta^G_i+k+1,
\]
because \(w_i(R)\le k+1\) for every route. Hence the number of optimal
transactions charged to validator \(i\) is at most \(\beta^G_i+k+1\).

Summing over all validators gives
\[
    OPT_\tau
    \le
    \sum_{i=1}^m(\beta^G_i+k+1)
    =
    qG_\tau+(k+1)m.
\]
The theorem follows by rearranging the above inequality.
\end{proof}

\Cref{thm:approx} focuses the approximation analysis on the capacity-congestion regime. The benchmark in \eqref{eqn:opt-bfm-ip} remains the correct optimal comparison even when the budget constraint binds, but the simple greedy-route analysis no longer has the blocking structure used above: a route rejected for budget reasons need not be blocked by any validator’s residual capacity. 
In the fully cost-congested regime, a fixed route order is not enough to guarantee a meaningful approximation independent of the cost profiles. 
An unlucky ordering can allocate early transactions to routes whose marginal rewards are high, therefore making later routes fail the marginal budget check, even though a different ordering of the routes might accept more transactions if some lower-cost validators appeared first. Obtaining an approximation guarantee in this regime will require designing a cost-aware route ordering that incentivizes the validators to report truthfully.

\begin{remark}
    We assumed for simplicity that every transaction consumes one unit of bandwidth. But our greedy-route mechanism works for arbitrary size transactions as follows.
    One can scale the workload of $w_i(R)$ and the base fee $b$ based on its size when deciding whether this transaction can be served.
    The mechanism's incentive properties are unaffected. This is because the UIC and $1$-VIC proofs rely only on the workload increments being nonnegative and additive, and on $g_i$ being strictly increasing, not on the increments being of unit size. However, allowing arbitrary-size does influence the approximation guarantee of Theorem~\ref{thm:approx}, where the constant $2k+1$ would instead depend on the size of the largest transaction.
\end{remark}

\subsection{Polynomial-Time Implementation of the Greedy Route Scan}
\label{sec:polynomial-route-scan}

The allocation rule can be implemented without enumerating the $L = m\binom{m-1}{k}$ routes. We keep the transaction priority, the eligibility condition, and the forward-only route pointer of the greedy-route mechanism, and as before, set $\mathrm{cap}_i = 0$ whenever $c_i \ge \tau$.

\paragraph{Route order.}
We write every attester set in increasing order, $A = \{a_1 < \cdots < a_k\}$, and represent a route $(s;A)$ by its source and this sorted list. The route order used by the greedy scan is source-major and lexicographic in the attesters: $(s;A)$ precedes $(s';A')$ if and only if $s < s'$, or $s = s'$ and $(a_1,\dots,a_k)$ lexicographically precedes $(a'_1,\dots,a'_k)$. Specifically, $R^1 = (1;\{2,\dots,k+1\})$. 

\paragraph{Marginal rewards for a fixed source.}
Fix the current workloads $(\beta_1,\dots,\beta_m)$ and a candidate source $s$. If $\beta_s + k + 1 > \mathrm{cap}_s$, then no route with source $s$ satisfies~\eqref{eqn:cond1}, and $s$ is skipped. Otherwise, let
\[
  \Delta^{\mathrm{src}}_s := \tau\bigl(g_s(\beta_s + k + 1) - g_s(\beta_s)\bigr),
\]
and, for each $j \in [m]$,
\[
  \Delta_j :=
  \begin{cases}
    \tau\bigl(g_j(\beta_j + 1) - g_j(\beta_j)\bigr), & j \ne s \text{ and } \beta_j + 1 \le \mathrm{cap}_j,\\[2pt]
    +\infty, & \text{otherwise.}
  \end{cases}
\]
Since $w_i(s;A) = 0$ for every $i \notin \{s\} \cup A$, the sum in~\eqref{eqn:cond2} reduces to the marginal rewards of the source and the attesters. So a size-$k$ set $A \subseteq [m]$ yields a route $(s;A)$ satisfying both~\eqref{eqn:cond1} and~\eqref{eqn:cond2} if and only if $\beta_s + k + 1 \le \mathrm{cap}_s$ and 

\begin{equation}
\label{eqn:route-feasible}
  \Delta^{\mathrm{src}}_s + \sum_{j \in A} \Delta_j \le b .
\end{equation}
The workloads, and therefore all marginal rewards, are held fixed during the search for a single route.

\paragraph{Dynamic program over attesters.}
For $j \in \{1,\dots,m+1\}$ and $r \in \{0,\dots,k\}$, let $D[j,r]$ be the minimum of $\sum_{\ell \in S}\Delta_\ell$ over all $S \subseteq \{j,\dots,m\}$ with $|S| = r$, with $D[j,r] = +\infty$ if every such sum is infinite or no such $S$ exists. Set $D[j,0] = 0$ for $1 \le j \le m+1$ and $D[m+1,r] = +\infty$ for $1 \le r \le k$, and for $j = m, m - 1, \dots, 1$ and
$r = 1, \dots, k$ compute

\begin{equation}
\label{eqn:dp}
  D[j,r] = \min\bigl\{\, D[j+1,r],\ \Delta_j + D[j+1,r-1] \,\bigr\},
\end{equation}
where the two terms correspond to skipping and selecting validator $j$. The table has $O(mk)$ entries and is filled in $O(mk)$ time. Its indices are validator positions and attester counts. This means there is no enumeration of routes and no discretization of capacities or budgets.

\paragraph{Lexicographically first completion.}
Given $j \in [m+1]$, $r \in \{0,\dots,k\}$, and a remaining budget $B_{\mathrm{rem}}$ with $D[j,r] \le B_{\mathrm{rem}}$, the procedure $\textsc{Complete}(j, r, B_{\mathrm{rem}})$ scans $\ell = j, j+1, \dots$ in increasing order and selects $\ell$ whenever

\begin{equation}
\label{eqn:complete-test}
  \Delta_\ell + D[\ell+1, r-1] \le B_{\mathrm{rem}},
\end{equation}
in which case it sets $B_{\mathrm{rem}} \gets B_{\mathrm{rem}} - \Delta_\ell$ and $r \gets r - 1$. Otherwise it skips $\ell$. It stops when $r = 0$. Test~\eqref{eqn:complete-test} asks whether selecting $\ell$ still leaves a feasible completion, rather than just asking whether $\ell$ fits the budget. The procedure runs in $O(m)$ time.

\paragraph{Successor search from the route pointer.}
Let the current pointer be $R^x = (s_0; \{a_1 < \cdots < a_k\})$. The procedure $\textsc{Next}(R^x)$ scans sources $s = s_0, s_0 + 1, \dots, m$, skipping any $s$ with $\beta_s + k + 1 > \mathrm{cap}_s$, and for each remaining $s$ computes $\Delta^{\mathrm{src}}_s$, the marginal rewards $(\Delta_j)_{j\in[m]}$, and the table $D$.
\begin{itemize}
  \item For $s > s_0$: if $\Delta^{\mathrm{src}}_s + D[1,k] \le b$, return $(s; \textsc{Complete}(1, k, b - \Delta^{\mathrm{src}}_s))$. Otherwise continue with the next source.
  \item For $s = s_0$: if $\{a_1,\dots,a_k\}$ satisfies~\eqref{eqn:route-feasible}, return $R^x$. Otherwise compute the
  prefix sums $P_0 = 0$ and $P_t = \sum_{\ell=1}^{t} \Delta_{a_\ell}$ for $t \in [k]$, and examine pairs $(t, j)$ in the order $t = k, k-1, \dots, 1$ and for each $t$, $j = a_t + 1, \dots, m$. For the first pair satisfying
  \begin{equation}
  \label{eqn:prefix-test}
    \Delta^{\mathrm{src}}_{s_0} + P_{t-1} + \Delta_j + D[j+1, k-t] \le b,
  \end{equation}
  return the route with source $s_0$ and attesters $\{a_1,\dots,a_{t-1}, j\} \cup \textsc{Complete}\bigl(j+1,\, k-t,\, b - \Delta^{\mathrm{src}}_{s_0} - P_{t-1} - \Delta_j\bigr)$. If no pair passes, continue with source $s_0 + 1$.
\end{itemize}
If no source yields a route, $\textsc{Next}(R^x)$ returns $\bot$. The search never enumerates the $\binom{m-1}{k}$ attester sets of any source.

\paragraph{Full allocation rule.}
Initialize $\beta_i \gets 0$ for all $i$, $R(u) \gets (\bot;\bot)$ for all $u$, and the pointer to $R^1$. Process transactions in priority order, skipping those with $b_u < b$. For each eligible transaction $u$, call $\textsc{Next}$ on the current pointer and workloads. If it returns $\bot$, terminate the scan, leaving all remaining transactions unassigned. Otherwise set $R(u)$ to the returned route, move the pointer to that same route, and update $\beta_i \gets \beta_i + w_i(R(u))$ for all $i$. The pointer is not advanced after an acceptance, so the next transaction may reuse the same route if it remains feasible. All marginal rewards and tables are recomputed at the updated workloads.

\begin{lemma}
\label{lem:complete-correct}
Fix the marginal rewards and suppose $D[j,r] \le B_{\mathrm{rem}}$. Then $\textsc{Complete}(j,r,B_{\mathrm{rem}})$ returns the lexicographically first set $S \subseteq \{j,\dots,m\}$ with $|S| = r$ and $\sum_{\ell\in S}\Delta_\ell \le B_{\mathrm{rem}}$.
\end{lemma}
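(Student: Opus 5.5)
We argue by induction on $r$, maintaining the invariant that at every moment of the scan the current residual budget admits a feasible completion. Concretely, we assert that whenever the procedure is at position $\ell$ with $r$ attesters still to choose and remaining budget $B_{\mathrm{rem}}$, we have $D[\ell,r]\le B_{\mathrm{rem}}$. Initially this is exactly the hypothesis $D[j,r]\le B_{\mathrm{rem}}$. We use throughout the defining property of $D$: by the recurrence \eqref{eqn:dp}, $D[\ell,r]$ is the minimum of $\sum_{\ell'\in S}\Delta_{\ell'}$ over size-$r$ subsets $S\subseteq\{\ell,\dots,m\}$. We first verify this by backward induction on $\ell$, splitting according to whether $\ell\in S$. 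We also note that $\Delta_\ell=+\infty$ entries are never selected when $B_{\mathrm{rem}}$ is finite.

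\emph{Invariant preservation and termination.} At position $\ell$ with $r\ge1$ and $D[\ell,r]\le B_{\mathrm{rem}}$, the recurrence gives
\[
\min\{D[\ell+1,r],\ \Delta_\ell+D[\ell+1,r-1]\}\le B_{\mathrm{rem}}.
\]
If test \eqref{eqn:complete-test} passes, the procedure selects $\ell$. The new state is $(\ell+1,r-1,B_{\mathrm{rem}}-\Delta_\ell)$, and the invariant holds by the test itself. If the test fails, then necessarily $D[\ell+1,r]\le B_{\mathrm{rem}}$, so the invariant holds at $(\ell+1,r,B_{\mathrm{rem}})$. Since $D[m+1,r]=+\infty$ for $r\ge1$, the invariant can never reach position $m+1$ with $r\ge1$. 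Hence the procedure terminates with $r=0$ after at most $m-j+1$ steps. The returned set $S$ has size $r$, and its total weight is at most the original $B_{\mathrm{rem}}$, because the residual budget stays $\ge D[\cdot,0]=0$. So $S$ is feasible.

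\emph{Lexicographic minimality.} This is the main step. Let $S^\dagger$ be the lexicographically first feasible set, with elements listed in increasing order. We compare the sorted lists of $S$ and $S^\dagger$ and suppose they first differ at index $p$. Let $\ell$ be the smaller of the two $p$-th elements.

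First suppose $\ell\in S^\dagger\setminus S$. Then at position $\ell$ the procedure had the same residual budget $B'$ and the same count $r'$ as the prefix of $S^\dagger$ would give. Removing the common prefix from $S^\dagger$ leaves a set containing $\ell$ and $r'-1$ elements from $\{\ell+1,\dots,m\}$, of total weight at most $B'$. This witnesses $\Delta_\ell+D[\ell+1,r'-1]\le B'$, so the test would have selected $\ell$, a contradiction.

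Otherwise $\ell\in S\setminus S^\dagger$. Then $S$ precedes $S^\dagger$ lexicographically while being feasible, which contradicts the minimality of $S^\dagger$. Hence $S=S^\dagger$.

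The only delicate point is bookkeeping. We must make sure that the residual budget in the procedure at position $\ell$ equals $B_{\mathrm{rem}}$ minus the weight of the common prefix, which holds because prior selections coincide. We must also handle $+\infty$ weights consistently, so that infinite sums never pass a finite budget test.
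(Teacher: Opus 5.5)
Your proof is correct, but it is organized differently from the paper's. The paper inducts on $r$ using a greedy-choice characterization. It lets $\ell^*$ be the first index passing \eqref{eqn:complete-test} and shows that the smallest element of every feasible set passes the test, so the lexicographically first feasible set begins with $\ell^*$. It then shows that the remainder of that set is the lexicographically first feasible $(r-1)$-subset of $\{\ell^*+1,\dots,m\}$ under budget $B_{\mathrm{rem}}-\Delta_{\ell^*}$. Finally, it observes that the scan skips exactly $j,\dots,\ell^*-1$ and recurses with the hypothesis intact.

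You split the claim into two independent parts instead. The first is the loop invariant $D[\ell,r]\le B_{\mathrm{rem}}$, which gives termination and feasibility of the output directly. The second is a first-difference contradiction against the lexicographically first feasible set $S^\dagger$. Its key step is that the tail of $S^\dagger$ witnesses $\Delta_\ell+D[\ell+1,r'-1]\le B'$. This is the same observation the paper makes, but you apply it at an arbitrary point of the scan rather than only at the first selection.

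Your route makes two things explicit that the paper leaves implicit: that the scan terminates, and that it never reaches position $m+1$ with $r\ge 1$. It also avoids the paper's unproved substructure claim about the remainder of the lexicographically first set. The paper's route is shorter and exposes the recursive structure of the lexicographically first solution.

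Two small points need fixing. First, you open by announcing ``induction on $r$'', but the argument never inducts on $r$; it is a loop invariant plus a direct contradiction, so that sentence should go. Second, in your first case, say explicitly why the scan actually reaches position $\ell$: the $p$-th element of $S$ exceeds $\ell$ and $p\le r$, so the procedure has not stopped. This is what lets you identify its state at $\ell$ with the common prefix.
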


\begin{proof}
By induction on $j$ from $m+1$ downward, recurrence~\eqref{eqn:dp} gives the minimum, since every $r$-subset of $\{j,\dots,m\}$ either omits or contains $j$. We now induct on $r$. The case $r = 0$ is trivial. A feasible $r$-subset exists because $D[j,r] \le B_{\mathrm{rem}}$. Let $\ell^*$ be the smallest $\ell \ge j$ passing~\eqref{eqn:complete-test}. If $S$ is any feasible subset with smallest element $\ell''$, then $\Delta_{\ell''} + D[\ell''+1, r-1] \le \sum_{\ell \in S}\Delta_\ell \le B_{\mathrm{rem}}$, so $\ell''$ passes the test and $\ell'' \ge \ell^*$. Conversely, $\ell^*$ extends to a feasible subset by the definition of $D$. Since the lexicographic order compares smallest elements first, the lexicographically first feasible subset has smallest element $\ell^*$, and its remainder is the lexicographically first feasible $(r-1)$-subset of $\{\ell^*+1,\dots,m\}$ under budget $B_{\mathrm{rem}} - \Delta_{\ell^*}$. The scan skips exactly the elements $j,\dots,\ell^*-1$, selects $\ell^*$, and continues from $\ell^* + 1$ with this budget, which satisfies the hypothesis of the lemma, so the induction hypothesis applies.
\end{proof}

\begin{lemma}
\label{lem:next-correct}
For any workloads $(\beta_i)_{i\in[m]}$ where $0\leq \beta_i\leq {\sf cap}_i$ for $i\in[m]$ and pointer $R^x$, $\textsc{Next}(R^x)$ returns $R^y$ for the smallest $y \ge x$ such that $R^y$ satisfies~\eqref{eqn:cond1} and~\eqref{eqn:cond2}, and returns $\bot$ if no such $y \le L$ exists.
\end{lemma}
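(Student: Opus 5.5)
The plan is to reduce the correctness of $\textsc{Next}$ to \Cref{lem:complete-correct} together with the characterization~\eqref{eqn:route-feasible}. The first step is to verify that, for a fixed source $s$ with $\beta_s+k+1\le \mathrm{cap}_s$, a route $(s;A)$ satisfies \eqref{eqn:cond1} and \eqref{eqn:cond2} exactly when $\Delta^{\mathrm{src}}_s+\sum_{j\in A}\Delta_j\le b$. The $+\infty$ convention handles both attesters with no residual capacity and the excluded case $j=s$. The hypothesis $0\le\beta_i\le\mathrm{cap}_i$ guarantees that validators outside the route remain feasible. Because the route order is source-major, the smallest feasible $y\ge x$ can be found by examining sources $s_0,s_0+1,\dots$ in turn and returning the first feasible route found for the first source that has one. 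A skipped source (with $\beta_s+k+1>\mathrm{cap}_s$) has no feasible route, so skipping it is correct.

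For a source $s>s_0$, every route with that source comes after $R^x$. The task is therefore to find the lexicographically first $k$-subset $A$ of $[m]$ with $\sum_{j\in A}\Delta_j\le b-\Delta^{\mathrm{src}}_s$. Such a subset exists if and only if $D[1,k]\le b-\Delta^{\mathrm{src}}_s$, since $D$ is the exact minimum established in the first part of the proof of \Cref{lem:complete-correct}. When it exists, \Cref{lem:complete-correct} shows that $\textsc{Complete}(1,k,b-\Delta^{\mathrm{src}}_s)$ returns it.

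The main obstacle is the source $s=s_0$, where only attester sets $A'\succeq A=(a_1<\dots<a_k)$ are allowed. I would first handle the case $A'=A$ by the direct test. Every $A'\succ A$ then has a unique longest common prefix with $A$, of some length $t-1$, and satisfies $a'_t=j>a_t$ with the remaining elements drawn from $\{j+1,\dots,m\}$. Lexicographically, any $A'$ that branches at a larger position $t$ precedes any that branches at a smaller $t$. For a fixed $t$, a smaller $j$ precedes a larger one. For fixed $(t,j)$, the order is the lexicographic order of the tail. So the candidate sets are partitioned into blocks indexed by $(t,j)$, and these blocks appear in exactly the order in which $\textsc{Next}$ visits the pairs. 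The block for $(t,j)$ is nonempty and feasible exactly when~\eqref{eqn:prefix-test} holds, again by the minimality of $D$. The first element of that block is the prefix $\{a_1,\dots,a_{t-1},j\}$ together with the lexicographically first feasible $(k-t)$-subset of $\{j+1,\dots,m\}$ under the remaining budget. \Cref{lem:complete-correct} applies to produce that subset because~\eqref{eqn:prefix-test} is precisely its hypothesis.

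One point needs care. The set $\{j+1,\dots,m\}$ may contain $s_0$ or elements less than $a_t$. The first cannot be chosen because $\Delta_{s_0}=+\infty$. The second cannot occur, since every element of $\{j+1,\dots,m\}$ exceeds $j>a_t$. Finally, if no pair passes the test, then no route with source $s_0$ at or after $R^x$ is feasible, and moving on to source $s_0+1$ is correct. Combining the cases, $\textsc{Next}$ returns the first feasible route at or after $R^x$, and it returns $\bot$ exactly when every source fails.
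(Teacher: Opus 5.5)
Your proposal is correct and follows essentially the same argument as the paper. The structure matches: a source-major scan, \Cref{lem:complete-correct} with $(j,r)=(1,k)$ for sources $s>s_0$, and for $s=s_0$ a partition of the later attester sets into groups indexed by the first deviating position $t$ and the new element $j$ (ordered by decreasing $t$, then increasing $j$), with \eqref{eqn:prefix-test} characterizing the groups that contain a feasible set via the minimality of $D$. Your explicit checks are details the paper leaves implicit, and you handle them correctly: the hypothesis $\beta_i\le\mathrm{cap}_i$ keeps validators off the route feasible, and $\Delta_{s_0}=+\infty$ excludes $s_0$ from the completed tail.
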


\begin{proof}
In the route order, the routes at or after $R^x = (s_0;A)$ are: $R^x$, then the routes $(s_0;A')$ with $A'$ lexicographically after $A$, then all routes with source $s > s_0$ in increasing order of $s$. Sources without residual capacity have no route satisfying~\eqref{eqn:cond1}, so skipping them is safe. For $s > s_0$, \Cref{lem:complete-correct} with $j = 1$ and $r = k$ returns the first feasible route with source $s$ whenever one exists, which happens exactly when $\Delta^{\mathrm{src}}_s + D[1,k] \le b$.

For $s = s_0$, every $A' = \{a'_1 < \cdots < a'_k\}$ lexicographically after $A$ has a unique first position $t$ at which it differs from $A$, and $a'_t = j > a_t$. Call $(t,j)$ the group of $A'$. The group consists of exactly the sets $\{a_1,\dots,a_{t-1}, j\} \cup T$ with $T \subseteq \{j+1,\dots,m\}$ and $|T| = k-t$. If $t > t'$, every set in group $(t,j)$ precedes every set in group $(t',j')$: both agree with $A$ before position $t'$, and at position $t'$ the former has $a_{t'}$ while the latter has $j' > a_{t'}$. If $t = t'$ and $j < j'$, the former again precedes the latter. Within one group, the order is the lexicographic order of $T$. By the definition of $D$, group $(t,j)$ contains a set satisfying~\eqref{eqn:route-feasible} if and only if~\eqref{eqn:prefix-test} holds. So the first feasible set after $A$ lies in the first group passing~\eqref{eqn:prefix-test} in the examined order, and by \Cref{lem:complete-correct} it is the set returned.
\end{proof}

\begin{theorem}
\label{thm:polytime-greedy}
The implementation above produces the same assignment $R$ and the same final workloads as the greedy-route allocation rule. For $N$ transactions already in priority order, it runs in $O(Nm^2k)$ time with $O(mk)$ working space, assuming constant-time evaluation of each $g_i$ and constant-time arithmetic and comparisons.
\end{theorem}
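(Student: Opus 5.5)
The proof has two parts: correctness by induction over the accepted transactions, and a complexity count per call of \textsc{Next}.

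\textbf{Correctness.} The invariant is that after the first $u$ eligible transactions in priority order are processed, the implementation and the greedy-route rule agree on three things: the assignments $R(1),\dots,R(u)$, the workload vector $(\beta_i)_i$, and the route pointer (the index $x$ in the original scan). The base case holds because both start with $\beta_i=0$, pointer $R^1$, and all assignments $(\bot;\bot)$, after setting $\mathrm{cap}_i=0$ for $c_i\ge\tau$. Ineligible transactions ($b_u<b$) are skipped by both and change nothing.

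For an eligible transaction, the greedy rule's \texttt{while} loop advances $x$ to the smallest $y\ge x$ such that $R^y$ satisfies \eqref{eqn:cond1} and \eqref{eqn:cond2}, or breaks if $y>L$. Both procedures only ever add copies that satisfy \eqref{eqn:cond1}, so the workloads satisfy $0\le\beta_i\le\mathrm{cap}_i$. The hypothesis of \Cref{lem:next-correct} therefore holds, and $\textsc{Next}(R^x)$ returns exactly that $R^y$, or $\bot$ exactly when greedy breaks.

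Both rules then assign $R^y$ and add $w_i(R^y)$ to each $\beta_i$, and both leave the pointer at $y$. The greedy loop does not increment $x$ after an acceptance, and the implementation moves the pointer to the returned route without advancing it. So the invariant is preserved. On $\bot$, both leave all remaining transactions unassigned. This gives identical assignments and identical final workloads.

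\textbf{Complexity.} Each of the at most $N$ calls to \textsc{Next} scans at most $m$ sources. For each source, the cost is as follows.
\begin{itemize}
\item Computing $\Delta^{\rm src}_s$ and the vector $(\Delta_j)_j$ takes $O(m)$ evaluations of the $g_i$.
\item Filling the table $D$ by \eqref{eqn:dp} takes $O(mk)$.
\item For $s>s_0$, a single test and at most one \textsc{Complete} call cost $O(m)$.
\item For $s=s_0$, the prefix sums cost $O(k)$ and the pair scan over $(t,j)$ costs $O(km)$ tests, each $O(1)$ given $D$. It is followed by one \textsc{Complete} call of cost $O(m)$.
\end{itemize}
The per-source cost is therefore $O(mk)$, each call costs $O(m^2k)$, and the total is $O(Nm^2k)$.

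\textbf{Space.} Only the current table $D$ (size $O(mk)$), the $\Delta$ vector, the workloads, the prefix sums and the pointer are held at any time. Tables are discarded between sources. This gives $O(mk)$ working space, excluding the input and the output assignment.

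\textbf{Main obstacle.} The delicate step is the pointer semantics. The invariant must identify the implementation's pointer with the greedy index $x$ itself, not with the next feasible index. Otherwise the ``do not advance after acceptance'' rule could appear to diverge from greedy when a route stays feasible for several copies. This identification is exactly what lets \Cref{lem:next-correct} be reapplied at the updated workloads. Those workloads stay within capacity, so its hypothesis holds at every step.
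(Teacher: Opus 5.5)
Your proof is correct and follows the paper's argument essentially step for step. Both use induction over processed transactions with the pointer and workloads as the invariant, invoke \Cref{lem:next-correct} for each eligible transaction, and derive the same per-source $O(mk)$ cost, which gives $O(m^2k)$ per call and $O(mk)$ working space; your explicit check that the workloads stay in $[0,\mathrm{cap}_i]$, so the lemma's hypothesis holds at every step, is a detail the paper leaves implicit.
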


\begin{proof}
We show by induction over processed transactions that the pointer and the workloads coincide with those of the original algorithm. Both algorithms skip ineligible transactions. For an eligible transaction, the original while-loop advances the pointer from $R^x$ to the smallest $y \ge x$ for which $R^y$ satisfies~\eqref{eqn:cond1} and~\eqref{eqn:cond2} under the current workloads, or terminates if none exists. By \Cref{lem:next-correct}, $\textsc{Next}$ returns the same route or $\bot$. Both algorithms then assign that route, keep the pointer at it, and apply the same workload update, or both terminate.

For the running time, each source examined by $\textsc{Next}$ costs $O(m)$ to compute the marginal rewards, $O(mk)$ to fill $D$, $O(k)$ for the prefix sums, $O(km)$ for the prefix tests at $s_0$, and $O(m)$ for the completion, for $O(mk)$ in total. A call examines at most $m$ sources, giving $O(m^2 k)$ per eligible transaction. The table is reused across sources, so the working space is $O(mk)$ beyond the inputs and the stored assignment. Sorting into priority order adds $O(N \log N)$ when needed. Under planner-receipt order, the implementation works in a streaming fashion without the need for sorting.
\end{proof}

Note that no convexity assumption on the public shape functions $g_i$ is required. If some $g_i$ is concave, a route rejected for budget reasons may become feasible again once workloads grow. The greedy-route mechanism never revisits such a route because its pointer only moves forward, and the implementation reproduces exactly this forward-only behavior.

%% file: Sections/multi-blocks.tex
\section{Multi-Block Price Dynamics}
\label{sec:multi_block_dynamics}

In this section, we present a posted price update rule of our greedy-route BFM and study the local behavior of the price under honest validator behavior. For this section, we assume that the validators report their true bandwidth capacity and cost once in round $t = 0$. 
Then, in the following rounds, they follow the assignment and implement the dissemination honestly. 
Throughout, we assume the reward parameter $\tau$ remains the same.
We focus on the capacity-limited regime in which the route-level budget constraint is slack in a neighborhood of the operating point.
Therefore, the maximum available bandwidth $\Theta^{\max}$ is determined based on the validators' types and $\tau$.
Let $q:=2k+1$ be the total bandwidth consumed by one successfully disseminated transaction.  If $\mathcal R^\infty$ denotes the multiset of accepted routes returned by $\assign(\boldsymbol{f},\boldsymbol{\band},\boldsymbol{b})$ when $\boldsymbol{b}$ contains infinitely many eligible transactions, then the maximum transaction throughput is $N^{\max}:=|\mathcal R^\infty|$ and the maximum aggregate bandwidth is
\[
    \Theta^{\max}:=qN^{\max}=q|\mathcal R^\infty|.
\]
Thus $\Theta^{\max}$ is measured in bandwidth units, not in transactions.

We adopt a price-update rationale similar to that of EIP-1559. Specifically, in each round, there is a target bandwidth usage. 
The price decreases if usage in the previous round was below the target, to admit more transactions, and increases otherwise.
However, in our multi-block setting, the available supply changes over time: validators consume bandwidth when transactions are disseminated and gradually recover bandwidth between rounds. Thus the posted price and the available bandwidth evolve jointly.

We will define the exact stochastic process and introduce a deterministic fluid approximation. 
Under mild conditions on $\lambda$ and the step size $d$ (made precise in Theorems~\ref{thm:fluid_fixed_point} and~\ref{thm:fluid_local_stability}), the fluid system has a unique, locally stable, positive-price fixed point, which implies that, over any fixed finite horizon and from nearby initial states, the normalized stochastic process tracks the fluid dynamics in large markets with high probability.

\subsection{The Stochastic Supply-Price Process}
Let $\Theta^{\max}$ denote the system's maximum aggregate bandwidth and let $q = 2k+1$ denote the amount of bandwidth consumed by each committed transaction: $k+1$ units from the source validator and one unit from each of the $k$ attesters.
Let $\Theta_t\in[0,\Theta^{\max}]$ denote the available bandwidth at the beginning of round $t$, and let $B_t>0$ denote the base fee in round $t$.
For simplicity, we allow $\Theta_t$ to be real-valued for the dynamic process analysis.
Define the normalized available supply and stochastic state by
\[
    X_t:=\frac{\Theta_t}{\Theta^{\max}},
    \qquad
    Z_t:=(X_t,B_t).
\]
\paragraph{Assumptions.} We will make the following assumptions on users.
\begin{assumption}[Arrival Process]
\label{ass:arrivals}
In each round $t$, the number of arriving users satisfies \[N_t\sim \mathrm{Poisson}(\lambda\Theta^{\max}),\]
where $\lambda>0$ is the arrival rate relative to the maximum supply. The random variables $N_t$ are independent across rounds.
\end{assumption}

\begin{assumption}[Valuation Distribution]
\label{ass:valuations}
Each arriving user $i$ in round $t$ has valuation $v_i^t$, drawn independently from a common continuous distribution with CDF $F$ whose support is a subset of $\mathbb{R}_{\geq 0}$. 
We assume that $F(0) = 0$.

A user is eligible at base fee $B_t$ if $v_i^t\ge B_t$. Let \[
    S_t := \sum_{i=1}^{N_t}\mathbf 1_{\{v_i^t\ge B_t\}}
\]
be the number of eligible users in round $t$, where $\mathbf 1_{\{v_i^t\ge B_t\}}$ is the indicator of whether user $i$ is eligible.
Then $S_t\sim \mathrm{Poisson}\left(\lambda\Theta^{\max}(1-F(B_t))\right)$.
\end{assumption}

\paragraph{Bandwidth dynamics.} In each round, the system consumes bandwidth to disseminate transactions while the previously used bandwidth slowly recovers with a recovery rate $\alpha$.
Given state $(\Theta_t,B_t)$, the real-valued analytical process consumes bandwidth $\min\{\Theta_t,qS_t\}$ in round $t$.
The random normalized consumption is
\begin{equation}
\label{eqn:norm-g}
    G_t
    :=
    \min\left\{
    X_t,
    \frac{qS_t}{\Theta^{\max}}
    \right\}.
\end{equation}
This real-valued relaxation is convenient for analysis.  The transaction-level simulations in \Cref{sec:simulation} use the indivisible refinement $q\min\{\lfloor\Theta_t/q\rfloor,S_t\}$, whose normalized one-step discrepancy is less than $q/\Theta^{\max}$.
Fix a recovery rate $\alpha\in(0,1]$. 
The bandwidth then evolves according to $\Theta_{t+1}=\Theta_t-G_t\cdot \Theta^{\max}+\alpha(\Theta^{\max}-\Theta_t)$.
Equivalently,
\begin{equation}
\label{eqn:normalized-band}
    X_{t+1}=X_t-G_t+\alpha(1-X_t).
\end{equation}

\paragraph{Base-fee update.}

The base fee is updated in the style of EIP-1559. We use half of the currently available supply $\Theta_t/2$ as a per-round target:
\[
  B_{t+1} =
  \begin{cases}
    \displaystyle
    B_t\!\left(1 + d\cdot\frac{G_t - X_t/2}{X_t/2}\right) & \text{if } X_t > 0,\\[6pt]
    B_t & \text{if } X_t = 0,
  \end{cases}
\]
where $d\in(0,1)$ is the step size. Because $d<1$, the multiplicative factor $1 + d \cdot (\cdots)$ satisfies $1-d \le 1+d\cdot \frac{G_t-X_t/2}{X_t/2}\le 1+d$, so $B_{t+1}>0$ whenever $B_t>0$. When more than half of the available supply is consumed, the base fee increases. When less than half is consumed, it decreases. When $X_t = 0$, no transactions are disseminated, so $G_t = 0$ and the ratio $\frac{G_t}{\frac{X_t}{2}}$ is undefined, so we keep the price unchanged.

\begin{lemma}[Markov property]
\label{lem:markov}
The process $\{Z_t\}_{t\ge 0}$ is a time-homogeneous Markov process.
\end{lemma}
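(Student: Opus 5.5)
The plan is to write $Z_{t+1}$ as a fixed measurable function of the current state and of fresh randomness that is independent of the past and has the same law every round. Concretely, I will exhibit a deterministic map $\Phi:[0,1]\times(0,\infty)\times\mathcal{W}\to[0,1]\times(0,\infty)$ and i.i.d.\ random elements $W_0,W_1,\ldots$ with
\[
Z_{t+1}=\Phi(Z_t,W_t), \qquad W_t \text{ independent of } (Z_0,\ldots,Z_t).
\]
The standard random-mapping argument then gives both the Markov property and time-homogeneity. Indeed, for any measurable set $E$,
\[
\Pr[Z_{t+1}\in E\mid Z_0,\ldots,Z_t]=\Pr[\Phi(z,W)\in E]\big|_{z=Z_t},
\]
and the right-hand side depends neither on the history nor on $t$.

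The first step is to fix the innovation. I take $W_t=(N_t,(v^t_i)_{i\ge1})$: the Poisson arrival count from \Cref{ass:arrivals} together with an infinite i.i.d.\ sequence of valuations drawn from $F$ as in \Cref{ass:valuations}. Only the first $N_t$ valuations are used. By the two assumptions, the $W_t$ are i.i.d.\ across rounds, and each $W_t$ is independent of $Z_0$ and of $W_0,\ldots,W_{t-1}$. Since $Z_0,\ldots,Z_t$ are by construction functions of $(Z_0,W_0,\ldots,W_{t-1})$, this gives the required independence of $W_t$ from the history.

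Next I write $\Phi$ explicitly, following the stated dynamics. Given $z=(x,p)$ and $w=(n,(v_i))$, set
\[
s=\sum_{i=1}^{n}\mathbf 1_{\{v_i\ge p\}}, \qquad g=\min\{x,\,qs/\Theta^{\max}\}.
\]
The supply update is $x'=x-g+\alpha(1-x)$, by \eqref{eqn:normalized-band}. The price update is $p'=p(1+d(g-x/2)/(x/2))$ if $x>0$, and $p'=p$ if $x=0$. The parameters $q,\Theta^{\max},\alpha,d$ are all fixed constants.

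I also need $\Phi$ to map the state space into itself:
\begin{itemize}
\item Since $0\le g\le x$, we get $x'\in[x-x+\alpha(1-x),\,x+\alpha(1-x)]\subseteq[0,1]$.
\item Positivity $p'>0$ follows from $d<1$, as already noted in the text.
\item $\Phi$ is measurable because it is built from indicators, a countable sum truncated at $n$, a $\min$, and a piecewise rational map split on the Borel set $\{x=0\}$.
\end{itemize}
It follows that $G_t$, and hence $Z_{t+1}$, equals $\Phi(Z_t,W_t)$.

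I expect the main obstacle to be technical rather than conceptual. The paper's definition of the Markov property is phrased with point probabilities $\Pr[Z_{t+1}=z\mid\cdots]$, but $B_t$ takes values in a continuum, so those conditioning events may have probability zero. I will handle this by proving the regular-conditional-distribution version via the functional representation above, noting that it implies the stated definition whenever the conditioning events have positive probability. Beyond that, the only care needed is the $X_t=0$ branch, which is covered by defining $\Phi$ piecewise.
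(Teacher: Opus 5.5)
Your proposal is correct and follows essentially the paper's argument: both write $Z_{t+1}$ as a fixed deterministic function of $Z_t$ and fresh round-$t$ randomness whose law depends neither on $t$ nor on the earlier history. Your version is somewhat more careful, because you take the raw innovation $(N_t,(v_i^t)_{i\ge 1})$ as the i.i.d.\ noise rather than $S_t$ (whose law depends on $B_t$), and you check that the update map preserves $[0,1]\times\mathbb{R}_{>0}$; the probability-zero worry is also milder than you suggest, since from a fixed $Z_0$ each $G_t$ takes countably many values, so $Z_t$ has countable support and the point-probability definition already applies.
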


\begin{proof}
Let $Z_0,\ldots,Z_t$ be the history up to round $t$, and suppose $Z_t=(u,p)$. By Assumption ~\ref{ass:arrivals}, the number of arrivals $N_t$ is independent across rounds. By Assumption ~\ref{ass:valuations}, conditional on $Z_t=(u,p)$, the distribution of $S_t$ depends only on $p$. The consumed bandwidth $G_t$ depends only on the current state $Z_t$ and fresh round-$t$ randomness $S_t$. Once $G_t$ is realized, both $X_{t+1}$ and $B_{t+1}$ are deterministic functions of $(X_t,B_t,G_t)$, with $B_{t+1}=B_t$ when $X_t=0$. So for every measurable set $A\subseteq[0,1] \times \mathbb{R}_{>0}$:
\[
  \Pr[Z_{t+1}\in A\mid Z_0,\ldots,Z_t] = \Pr[Z_{t+1}\in A\mid Z_t].
\]
The transition rule and the distribution of fresh randomness are identical every round, so the process is time-homogeneous.
\end{proof}

\subsection{Fluid Approximation}

We consider the following deterministic fluid approximation. Let
\[
    g(u,p):=\min\{u,q\lambda(1-F(p))\}
\]
denote the normalized fluid consumption at state $(u,p)$.
Let $z_t:=(x_t,b_t)$ denote the fluid state. Its evolution is given by
\begin{equation}
\label{eq:fluid_system}
\begin{aligned}
    x_{t+1}
    &=
    x_t-g(x_t,b_t)+\alpha(1-x_t),\\
    b_{t+1}
    &=
    \begin{cases}
    b_t\left(
        1+d\cdot
        \dfrac{g(x_t,b_t)-x_t/2}{x_t/2}
    \right),
    & x_t>0,\\[2ex]
    b_t,
    & x_t=0.
    \end{cases}
\end{aligned}
\end{equation}

Equivalently, define the fluid update map $H: [0,1]\times\mathbb R_{>0} \mapsto [0,1]\times\mathbb R_{>0}$ as
\begin{equation}
\label{eqn:fluid-update}
    H(u,p):=
    \begin{cases}
    \left(u-g(u,p)+\alpha(1-u),\;p\left(1+d\cdot\dfrac{g(u,p)-u/2}{u/2}\right)\right),
    & u>0,\\
    (\alpha,p),
    & u=0.
    \end{cases}
\end{equation}
Then $z_{t+1}=H(z_t)$.
We stress that $\mathbb E[G_t\mid Z_t=(u,p)]$
is not exactly equal to $g(u,p)$, because of the minimum truncation in the definition of $G_t$.
Instead, the fluid approximation replaces the random number of eligible users $S_t$ by its conditional mean $\mu(p):=\lambda\Theta^{\max}(1-F(p))$
before applying the capacity constraint. 

The following lemma quantifies the resulting one-step approximation error.

\begin{lemma}[One-step fluid approximation error]
\label{lem:fluid_error_onestep}
For every state $(u,p)\in[0,1]\times\mathbb R_{>0}$,
\begin{equation}
\label{eq:one_step_expectation_error}
    0
    \le
    g(u,p)-\mathbb E[G_t\mid Z_t=(u,p)]
    \le
    q\sqrt{\lambda/\Theta^{\max}}.
\end{equation}
Moreover, for every $\varepsilon>0$,
\begin{equation}
\label{eq:one_step_tail_bound}
\Pr\left[|G_t-g(u,p)|\ge\varepsilon\mid Z_t=(u,p)\right]\le
    2\exp\left(-\frac{\Theta^{\max}\varepsilon^2}{2q(q\lambda+\varepsilon)}\right).
\end{equation}
\end{lemma}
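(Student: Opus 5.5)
Fix a state $(u,p)$ and write $\mu:=\mu(p)=\lambda\Theta^{\max}(1-F(p))$, so that $S:=S_t\sim\mathrm{Poisson}(\mu)$ conditionally on $Z_t=(u,p)$. Put $Y:=qS/\Theta^{\max}$. Then $G_t=\min\{u,Y\}$, $g(u,p)=\min\{u,\mathbb E Y\}$, and $\mathbb E Y=q\mu/\Theta^{\max}=q\lambda(1-F(p))\le q\lambda$. The whole proof reduces to elementary facts about the map $y\mapsto\min\{u,y\}$ and Poisson concentration. The case $u=0$ is trivial, since then $G_t=g=0$.

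\emph{Expectation bound.} Since $y\mapsto\min\{u,y\}$ is concave, Jensen's inequality gives $\mathbb E[G_t]\le\min\{u,\mathbb E Y\}=g(u,p)$, which is the lower inequality in \eqref{eq:one_step_expectation_error}. For the upper inequality, use that $\min\{u,\cdot\}$ is $1$-Lipschitz:
\[
g(u,p)-\mathbb E[G_t]=\mathbb E\bigl[\min\{u,\mathbb E Y\}-\min\{u,Y\}\bigr]\le\mathbb E|Y-\mathbb EY|\le\sqrt{\mathrm{Var}(Y)}.
\]
Here $\mathrm{Var}(Y)=q^2\mu/(\Theta^{\max})^2=q^2\lambda(1-F(p))/\Theta^{\max}\le q^2\lambda/\Theta^{\max}$, so the gap is at most $q\sqrt{\lambda/\Theta^{\max}}$, exactly the stated constant.

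\emph{Tail bound.} The Lipschitz property gives $|G_t-g(u,p)|\le|Y-\mathbb EY|$, so it suffices to bound $\Pr[|S-\mu|\ge s]$ with $s:=\varepsilon\Theta^{\max}/q$. I would apply the standard Bernstein (Bennett) bound for a Poisson variable, treating it as a sum of infinitely divisible unit-jump increments with variance $\mu$. For each side this gives $\Pr[S-\mu\ge s]\le\exp\bigl(-s^2/(2(\mu+s/3))\bigr)$, and the same bound holds for the lower tail, which is even sub-Gaussian with variance $\mu$. Next, use $\mu\le\lambda\Theta^{\max}$ and $s/3\le s$ to get
\[
\frac{s^2}{2(\mu+s)}\ge\frac{\varepsilon^2(\Theta^{\max})^2/q^2}{2\Theta^{\max}(\lambda+\varepsilon/q)}=\frac{\Theta^{\max}\varepsilon^2}{2q(q\lambda+\varepsilon)}.
\]
Adding the two tails gives the factor $2$ in \eqref{eq:one_step_tail_bound}.

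The only step needing care is getting the Poisson Bernstein constants to match the stated form. The target exponent is weaker than Bernstein's, since it uses $\mu+s$ in place of $\mu+s/3$, so there is room to spare. If the reference form is in doubt, I would derive it directly from the Chernoff bound, using the moment generating function $\mathbb E e^{\theta(S-\mu)}=\exp(\mu(e^\theta-1-\theta))$ together with $e^\theta-1-\theta\le\theta^2/(2(1-\theta/3))$ for $0<\theta<3$, and optimizing over $\theta$.
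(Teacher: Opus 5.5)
Your proposal is correct and follows the paper's proof essentially step for step. It uses Jensen's inequality on the concave truncation $y\mapsto\min\{u,y\}$ for the lower bound, the $1$-Lipschitz reduction $|G_t-g(u,p)|\le \frac{q}{\Theta^{\max}}|S_t-\mu(p)|$ together with Cauchy--Schwarz for the upper bound, and a Poisson tail inequality with $r=\varepsilon\Theta^{\max}/q$ and $\mu(p)\le\lambda\Theta^{\max}$ for the tail bound. The only difference is that you derive the Poisson tail from Bernstein's bound and then weaken $\mu+s/3$ to $\mu+s$, whereas the paper quotes $\Pr[|S-\mu|\ge r]\le 2\exp\bigl(-r^2/(2(\mu+r))\bigr)$ directly.
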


\begin{proof}
When $u=0$, both $G_t$ and $g(u,p)$ are zero, so both claims
hold trivially. In the rest of the proof we focus on the case where $u > 0$.
By Jensen's inequality 
\begin{align*}
    \mathbb E[G_t\mid Z_t=(u,p)]
    &=
    \mathbb E\left[
        \min\left\{
            u,\frac{qS_t}{\Theta^{\max}}
        \right\}
        \,\middle|\,
        Z_t=(u,p)
    \right]\\
    &\le
    \min\left\{
        u,
        \frac{
            q\mathbb E[S_t\mid Z_t=(u,p)]
        }{
            \Theta^{\max}
        }
    \right\}=
    g(u,p).
\end{align*}
Moreover, we have
\[|G_t-g(u,p)|=
    \left|\min\left\{u,\frac{qS_t}{\Theta^{\max}}\right\}-\min\left\{
            u,\frac{q\mu(p)}{\Theta^{\max}}\right\}\right|\le
    \frac{q}{\Theta^{\max}}|S_t-\mu(p)|.\]
Taking conditional expectations and applying Cauchy--Schwarz,
\begin{align*}
    g(u,p)-\mathbb E[G_t\mid Z_t=(u,p)]
    &\le
    \frac{q}{\Theta^{\max}}
    \mathbb E\left[
        |S_t-\mu(p)|
        \,\middle|\,
        Z_t=(u,p)
    \right]\\
    &\le
    \frac{q}{\Theta^{\max}}
    \sqrt{
        \operatorname{Var}
        (S_t\mid Z_t=(u,p))
    }\\
    &\le
    q\sqrt{\lambda/\Theta^{\max}},
\end{align*}
which proves \eqref{eq:one_step_expectation_error}.

For the tail bound, if $S\sim\operatorname{Poisson}(\mu)$, then for every
$r>0$, we have $\Pr[|S-\mu|\ge r]\le2\exp\left(-\frac{r^2}{2(\mu+r)}\right).$
Taking $r = \frac{\Theta^{\max}\varepsilon}{q}$, we get
\begin{align*}
    &\Pr\left[
        |G_t-g(u,p)|\ge\varepsilon
        \,\middle|\,
        Z_t=(u,p)
    \right]\\
    &\quad\le
    2\exp\left(
        -
        \frac{
            (\Theta^{\max})^2\varepsilon^2/q^2
        }{
            2\left(
                \mu(p)+\Theta^{\max}\varepsilon/q
            \right)
        }
    \right)\\
    &\quad\le
    2\exp\left(
        -
        \frac{
            \Theta^{\max}\varepsilon^2
        }{
            2q(q\lambda+\varepsilon)
        }
    \right),
\end{align*}
where the last inequality comes from $1-F(p)\le 1$.
\end{proof}

The bound in~\eqref{eq:one_step_expectation_error} is one-sided. $g(u,p)$ is always a (weak) overestimate of the true expected consumption. This follows directly from Jensen's inequality applied to the concave minimum truncation $s \mapsto \min\{u, qs/\Theta^{\max}\}$.

\begin{claim}
\label{lem:fluid_map_lipschitz}
Suppose $F$ is Lipschitz on a compact interval $I=[p_{\min},p_{\max}]
    \subset\mathbb R_{>0}$.
Then the fluid update map $H$ is Lipschitz on every compact rectangle $K=[u_{\min},u_{\max}]\times I\subset(0,1]\times\mathbb R_{>0}$.
\end{claim}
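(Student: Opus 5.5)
We show that each coordinate of $H$ is Lipschitz on $K$ and then combine them. On $K$ we have $u\ge u_{\min}>0$, so only the first branch of \eqref{eqn:fluid-update} is active, and
\[
H(u,p)=\Bigl(u-g(u,p)+\alpha(1-u),\; p\bigl(1-d\bigr)+\tfrac{2d\,p\,g(u,p)}{u}\Bigr).
\]
Everything reduces to Lipschitz and boundedness properties of the building blocks $g$, $p$, and $1/u$ on the compact rectangle. Those properties then propagate through sums, products, and quotients with denominators bounded away from zero.

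\textbf{Step 1: $g$ is Lipschitz and bounded on $K$.} Let $L_F$ be the Lipschitz constant of $F$ on $I$. The map $p\mapsto q\lambda(1-F(p))$ is then $q\lambda L_F$-Lipschitz on $I$, and $(u,p)\mapsto u$ is $1$-Lipschitz. We use $|\min\{a,b\}-\min\{a',b'\}|\le\max\{|a-a'|,|b-b'|\}$ to get
\[
|g(u,p)-g(u',p')|\le |u-u'|+q\lambda L_F|p-p'|.
\]
We also have $0\le g\le u\le 1$.

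\textbf{Step 2: first coordinate.} The first coordinate $H_1(u,p)=(1-\alpha)u+\alpha-g(u,p)$ is an affine function minus $g$. It is therefore Lipschitz with constant at most $(1-\alpha)+1+q\lambda L_F$ in the $\ell_1$ sense.

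\textbf{Step 3: second coordinate.} The second coordinate is $H_2=p(1-d)+2d\,\phi(u,p)$ with $\phi:=p\,g(u,p)/u$. The first term is linear. For $\phi$, we write $\phi=p\cdot g\cdot h$ where $h(u)=1/u$. On $[u_{\min},u_{\max}]$, $h$ is bounded by $1/u_{\min}$ and is $1/u_{\min}^2$-Lipschitz. The three factors are bounded on $K$: $|p|\le p_{\max}$, $|g|\le1$, $|h|\le1/u_{\min}$. Each factor is Lipschitz. The product rule for bounded Lipschitz functions,
\[
|f_1f_2-f_1'f_2'|\le \|f_1\|_\infty|f_2-f_2'|+\|f_2\|_\infty|f_1-f_1'|,
\]
applied twice gives an explicit constant. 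For example,
\[
\mathrm{Lip}(\phi)\le \frac{1}{u_{\min}}+\frac{p_{\max}}{u_{\min}}\bigl(1+q\lambda L_F\bigr)+\frac{p_{\max}}{u_{\min}^2}.
\]

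\textbf{Step 4: combine.} We take the maximum, or sum, of the coordinate constants, and norm equivalence in $\mathbb R^2$ finishes the argument.

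The only delicate point is the factor $1/u$ in the price update. It is singular at $u=0$, and that is exactly why the claim restricts to rectangles with $u_{\min}>0$ and why the $u=0$ branch of $H$ (which is discontinuous there in general) never enters. Compactness of $I$ ensures both that $p$ is bounded and that $F$'s Lipschitz constant is available. The kink of the $\min$ in $g$ causes no trouble, since Lipschitz continuity, unlike differentiability, survives taking minima.
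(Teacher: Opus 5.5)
Your proof is correct and follows essentially the same route as the paper: you show $g$ is Lipschitz through the $\min$, treat the supply coordinate as affine minus $g$, and control the price coordinate using $u\ge u_{\min}>0$ and $p\le p_{\max}$. The only cosmetic difference is that the paper writes the price coordinate as $p\cdot R(u,p)$ with $R=1-d+2dg/u$ and uses $g\le u$ to bound the quotient term with a $1/u_{\min}$ factor, where your generic product rule gives $1/u_{\min}^2$; this does not matter for the claim.
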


\begin{proof}
Let $L_F$ be a Lipschitz constant for $F$ on $I$. For any $(u,p),(u',p')\in K$,
\[
  \vert g(u,p) - g(u',p')\vert = \bigl \vert \min\{u, q\lambda(1-F(p))\} - \min\{u',q\lambda(1-F(p'))\}\bigr \vert \le \vert u - u' \vert + q \lambda L_F \vert p - p'\vert.
\]
Thus $g$ is Lipschitz on $K$. Therefore the supply component $H_x(u,p) = u - g(u,p) + \alpha(1-u)$ satisfies:
\[
  \vert H_x(u,p) - H_x(u',p')\vert \le \vert u-u'\vert + \vert g(u,p) - g(u',p') \vert + \alpha \vert u-u'\vert \le C_1 \bigl(\vert u-u'\vert + \vert p-p'\vert \bigr)
\]
for some constant $C_1$ depending only on $\alpha$ and $L_F$.

For the price component, let $R(u,p) := 1 - d + \frac{2dg(u,p)}{u}$, which is well-defined for $u>0$. Since $0 \le g(u,p) \le u$:
\[
  1-d \le R(u,p)\le 1+d.
\]
Because $u,u' \ge u_{\min} > 0$, we have:
\[
  \left \vert \frac{g(u,p)}{u} - \frac{g(u',p')}{u'}\right \vert \le \frac{\vert g(u,p)-g(u',p')\vert}{u_{\min}} +\frac{\vert u-u' \vert }{u_{\min}},
\]
so $R$ is Lipschitz on $K$. Since $p,p'\le p_{\max}$ on $K$,
\begin{align*}
  \vert H_y(u,p) - H_y(u',p') \vert &= |p R(u,p) - p' R(u',p')\vert\\
  &\le (1+d) \vert p-p'\vert + p_{\max} \vert R(u,p) - R(u',p')\vert
\end{align*}
which is Lipschitz. So $H$ is Lipschitz on $K$.
\end{proof}

We next show that the stochastic process is a small-noise perturbation of the fluid system over every fixed finite horizon, provided that the fluid trajectory stays away from the boundary $x=0$.

\begin{lemma}[Finite-horizon tracking]
\label{lem:tracking}
Fix a finite horizon $T \in \mathbb{N}$, which is held constant as $\Theta^{\max} \to \infty$, and suppose $F$ is locally Lipschitz. Let $z_t =(x_t,b_t)$ be a fluid sequence with $x_t>0$ for all $t=0,\ldots,T$. Starting from the same state $Z_0 = z_0$, we have:
\[
  \max_{0 \le t \le T}\|Z_t - z_t\| = O_p\left(\frac{1}{\sqrt{\Theta^{\max}}}\right),
\]
and $\max_{0 \le t \le T}\|Z_t-z_t\| \xrightarrow{p}0$ as $\Theta^{\max} \to \infty$.
\end{lemma}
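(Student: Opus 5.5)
We argue by a discrete Gronwall-type induction over the fixed horizon, combining the one-step error bounds of \Cref{lem:fluid_error_onestep} with the Lipschitz property of the fluid map from \Cref{lem:fluid_map_lipschitz}.

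\textbf{Step 1 (compact tube).} Since $x_t>0$ for all $t\le T$ and $T$ is finite, set $\delta:=\min_{t\le T}x_t/2>0$. The fluid prices satisfy $b_t\in[(1-d)^T b_0,(1+d)^T b_0]$. We therefore fix the compact rectangle
\[
K=[\delta,1]\times[(1-d)^T b_0/2,\;2(1+d)^T b_0],
\]
which contains a neighborhood of every $z_t$, $t\le T$. By local Lipschitzness of $F$ and \Cref{lem:fluid_map_lipschitz}, $H$ is Lipschitz on $K$ with some constant $L_K$. Any $\rho>0$ with $\rho<\delta$, small enough that the $\rho$-ball around each $z_t$ lies in $K$, then defines a tube inside $K$.

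\textbf{Step 2 (one-step decomposition).} Write $Z_{t+1}=\widetilde H(Z_t,G_t)$, where $\widetilde H$ is the exact update with $G_t$ in place of $g$. Then
\[
Z_{t+1}-z_{t+1}=\bigl(\widetilde H(Z_t,G_t)-H(Z_t)\bigr)+\bigl(H(Z_t)-H(z_t)\bigr).
\]
The second term is bounded by $L_K\|Z_t-z_t\|$ whenever $Z_t\in K$. For the first term, the supply coordinate differs by exactly $|G_t-g(Z_t)|$. The price coordinate differs by $B_t\cdot 2d|G_t-g(Z_t)|/X_t\le C_K|G_t-g(Z_t)|$ on $K$, where $C_K$ depends only on $X_t\ge\delta$ and the price bound. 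Hence, on $\{Z_t\in K\}$,
\[
\|Z_{t+1}-z_{t+1}\|\le L_K\|Z_t-z_t\|+C'_K\,\xi_t,\qquad \xi_t:=|G_t-g(Z_t)|.
\]

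\textbf{Step 3 (noise control).} By the proof of \Cref{lem:fluid_error_onestep}, $\xi_t\le (q/\Theta^{\max})|S_t-\mu(B_t)|$. Conditionally on $Z_t$, this has second moment at most $q^2\lambda/\Theta^{\max}$. Chebyshev (or Markov) gives $\Pr[\xi_t\ge M/\sqrt{\Theta^{\max}}\mid Z_t]\le q^2\lambda/M^2$, uniformly in the state. A union bound over the $T+1$ rounds shows that, for any $\eta>0$, there is $M$ with
\[
\Pr\Bigl[\max_{t<T}\xi_t\ge M/\sqrt{\Theta^{\max}}\Bigr]\le\eta .
\]

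\textbf{Step 4 (stopping argument).} On the event in Step 3, we induct on $t$. Iterating the Step 2 recursion gives
\[
\|Z_t-z_t\|\le C'_K M\sum_{s<t}L_K^{s}/\sqrt{\Theta^{\max}}=:C_T M/\sqrt{\Theta^{\max}} .
\]
This is valid as long as $Z_s\in K$ for $s<t$. Once $\Theta^{\max}$ is large enough that $C_TM/\sqrt{\Theta^{\max}}<\rho$, the bound itself keeps $Z_t$ inside the tube, so the induction closes. Formally, we introduce the exit time from the tube and show it exceeds $T$ on this event.

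Therefore $\Pr[\max_t\|Z_t-z_t\|>C_TM/\sqrt{\Theta^{\max}}]\le\eta$ for all large $\Theta^{\max}$, which is exactly the $O_p(1/\sqrt{\Theta^{\max}})$ statement. Convergence in probability to $0$ follows immediately.

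\textbf{Main obstacle.} The delicate part is Steps 1 and 4: the Lipschitz constants are only local. We must keep $X_t$ bounded away from $0$, since the price update divides by $X_t$, and keep $B_t$ bounded. The stopping-time argument handles this by confining the stochastic path to the compact tube $K$ around the fluid trajectory.
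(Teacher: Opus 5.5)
Your proposal is correct and takes essentially the same route as the paper. Both arguments use a compact neighborhood of the fluid trajectory on which $H$ is Lipschitz, the decomposition $Z_{t+1}=H(Z_t)+\Delta_t$ with $\|\Delta_t\|\le C_K|G_t-g(Z_t)|$, a Gronwall-type iteration, and an exit-time argument showing the stochastic path stays in the compact set on a high-probability event. The only cosmetic difference is that you bound $\max_{t<T}|G_t-g(Z_t)|$ via Chebyshev on the Poisson variance instead of the exponential tail bound of \Cref{lem:fluid_error_onestep}, which gives the same $O_p(1/\sqrt{\Theta^{\max}})$ rate.
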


\begin{proof}
There exists an $\varepsilon>0$ and a compact set $K\subset(0,1]\times\mathbb R_{>0}$ such that the closed $\varepsilon$-neighborhood of $\{z_0,...,z_T\}$ is contained in $K$.
By \Cref{lem:fluid_map_lipschitz}, $H$ is Lipschitz on $K$ with some Lipschitz constant $L_K$.

Define the one-step consumption error by $\delta_t:= G_t-g(X_t,B_t).$
Whenever $Z_t=(X_t,B_t)\in K$, the stochastic update can be written as $Z_{t+1}=H(Z_t)+\Delta_t$,
where
\[
    \Delta_t
    =
    \left(
        -\delta_t,\;
        \frac{2dB_t}{X_t}\delta_t
    \right).
\]
Since $X_t$ is bounded away from zero and $B_t$ is bounded above on $K$, whenever $Z_t \in K$, there exists a finite constant $C_K$ such that:
\[
    \|\Delta_t\|\le C_K|\delta_t|
\]

Let $\sigma:=\min\{t\le T:Z_t\notin K\}$ be the first time that $Z_t$ does not fall inside $K$, or $\sigma = T+1$ if the process never leaves $K$ by time $T$.
Then for every $t<\sigma$,
\[
    \|Z_{t+1}-z_{t+1}\|=\|H(Z_t)+\Delta_t-H(z_t)\|\le
    L_K\|Z_t-z_t\|+C_K|\delta_t|.
\]
Starting from $Z_0 = z_0$, we get for every $t\leq \sigma$,
\begin{equation}
\label{eqn:horizon-error}
    \|Z_t-z_t\| \le A_t \max_{0\le s<t}|\delta_s|,
\end{equation}
where $A_t = C_K \left( \sum_{j=0}^{t-1}L_K^j \right)$.

Consider the event $E_{\varepsilon}$ that $A_T\underset{0\le s<T}{\max}|\delta_s|<\varepsilon$.
\begin{claim}
    Conditioned on event $E_{\varepsilon}$, we have $\sigma = T+1$, i.e., the process does not leave $K$ before time $T$. 
\end{claim}
\begin{proof}
    Suppose for the sake of contradiction that $\sigma \leq T$. By definition of $\sigma$, we have 
    \[ \|Z_\sigma - z_\sigma\| \le A_T\max_{0\le s<T}|\delta_s| < \varepsilon.\]
    Thus $Z_\sigma$ belongs to the closed $\varepsilon$-neighborhood of $z_\sigma$, which is contained in $K$. This contradicts the definition of $\sigma$.
\end{proof}
By the tail bound in Lemma~\ref{lem:fluid_error_onestep} and a union bound over $t = 0,...,T$, we have 
\[\max_{0\le s<T}|\delta_s| = O_p\left(
        \frac{1}{\sqrt{\Theta^{\max}}}
    \right).\]
Since $A_T$ is a constant independent of $\Theta^{\max}$, it follows that 
$\Pr[E_{\varepsilon}]\rightarrow 1$, and therefore, $\Pr[\sigma = T+1]\rightarrow 1$.
In this case, \Cref{eqn:horizon-error} holds for any $t\leq T$.
The lemma thus follows as $A_T$ is a constant independent of $\Theta^{\rm max}$.
\end{proof}

\subsection{Fixed Point and Local Stability of the Fluid Approximation}
For our fluid system~\eqref{eqn:fluid-update}, a fixed point $(x^*,b^*)\in(0,1)\times\mathbb{R}_{>0}$ satisfies $x_{t+1}=x^*$ and $b_{t+1}=b^*$ whenever $x_t=x^*$ and $b_t=b^*$.

\begin{theorem}[Unique fixed point]
\label{thm:fluid_fixed_point}
If $\lambda > \frac{\alpha}{q(2\alpha+1)}$,
the fluid system \eqref{eq:fluid_system} has a positive-price fixed point $(x^*,b^*)$ satisfying
\begin{equation}
    \label{eq:fixed-point}
    x^*=\frac{2\alpha}{2\alpha+1}, \qquad F(b^*) = 1-\frac{\alpha}{q\lambda(2\alpha+1)}.
\end{equation}
If $F$ is strictly increasing, this positive-price fixed point is unique.
\end{theorem}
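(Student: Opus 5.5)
The plan is to write out the two fixed-point equations of the map $H$ in \eqref{eqn:fluid-update} and solve them directly. I first rule out $x^*=0$: then $H(0,p)=(\alpha,p)$, and since $\alpha>0$ the supply coordinate is not fixed. So any fixed point has $x^*>0$, and the nondegenerate branch of \eqref{eq:fluid_system} applies. The price equation is $b^* = b^*\bigl(1+d\cdot\frac{g(x^*,b^*)-x^*/2}{x^*/2}\bigr)$. Because $b^*>0$ and $d>0$, this is equivalent to $g(x^*,b^*)=x^*/2$. The supply equation is $x^* = x^*-g(x^*,b^*)+\alpha(1-x^*)$, which is equivalent to $g(x^*,b^*)=\alpha(1-x^*)$. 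Combining the two gives $x^*/2=\alpha(1-x^*)$, hence $x^*=\frac{2\alpha}{2\alpha+1}\in(0,1)$. Thus $x^*$ is determined uniquely, independently of $F$.

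The next step is to unpack $g(x^*,b^*)=\min\{x^*,q\lambda(1-F(b^*))\}=x^*/2$. Since $x^*/2<x^*$, the minimum must be attained by the second argument. This forces $q\lambda(1-F(b^*))=x^*/2=\frac{\alpha}{2\alpha+1}$, which is equivalent to $F(b^*)=1-\frac{\alpha}{q\lambda(2\alpha+1)}$. Conversely, any pair satisfying \eqref{eq:fixed-point} satisfies both equations, so it is a fixed point. This establishes the characterization.

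For existence, the hypothesis $\lambda>\frac{\alpha}{q(2\alpha+1)}$ gives that the target level $y^*:=1-\frac{\alpha}{q\lambda(2\alpha+1)}$ lies strictly in $(0,1)$. The distribution $F$ is continuous with $F(0)=0$ and $F(p)\to 1$ as $p\to\infty$. By the intermediate value theorem there is therefore some $b^*\ge 0$ with $F(b^*)=y^*$. I also need $b^*>0$. Since $F(0)=0<y^*$, any solution satisfies $b^*\neq 0$, and because $b^*\ge 0$ this gives $b^*>0$. This is the one slightly delicate point, and it is handled by the assumption $F(0)=0$ from Assumption~\ref{ass:valuations}.

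For uniqueness, $x^*$ is already unique. If $F$ is strictly increasing, it is injective, so at most one $b^*$ satisfies $F(b^*)=y^*$. I do not expect any real obstacle here. The main care points are excluding the $x=0$ branch and confirming that the $\min$ in $g$ is not attained at $x^*$.
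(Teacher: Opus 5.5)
Your proposal is correct and follows the paper's proof step for step. Both arguments take $g(x^*,b^*)=x^*/2$ from the price update and $g(x^*,b^*)=\alpha(1-x^*)$ from the supply update, solve for $x^*$, use slackness of the capacity term in $g$ to pin down $F(b^*)$, and then use continuity with $F(0)=0$ for existence and strict monotonicity for uniqueness. You also supply three details the paper leaves implicit: ruling out the $x=0$ branch via $H(0,p)=(\alpha,p)$, using $F(p)\to 1$ in the intermediate value argument, and checking $b^*>0$.
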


\begin{proof}
At a positive-price fixed point with $x^*>0$, the price update leaves the price unchanged only if the multiplicative update factor is one, i.e., $g(x^*,b^*)=x^*/2$.
The supply update leaves supply unchanged only if $x^* = x^*-g(x^*,b^*)+\alpha(1-x^*)$. Combining both conditions, we have $x^* = 2\alpha/(2\alpha+1)$ and $g(x^*,b^*) = x^*/2$.

Since capacity constraint is slack at the fixed point, it must be that $g(x^*,b^*)=\frac{\alpha}{2\alpha+1}= q\lambda(1-F(b^*))$.
Equivalently,
\[
    F(b^*)
    =
    1-\frac{\alpha}{q\lambda(2\alpha+1)}.
\]
By $\lambda > \frac{\alpha}{q(2\alpha+1)}$, we have $1-\frac{\alpha}{q\lambda(2\alpha+1)}\in (0,1)$.
Since $F(0) = 0$ and $F$ is continuous, such a $b^*$ exists. If $F$ is strictly increasing, this $b^*$ is unique. Since $x^*$ is also uniquely determined, the positive-price fixed point is unique.
\end{proof}

\begin{theorem}[Exact local stability boundary]
\label{thm:fluid_local_stability}
Assume the conditions of \Cref{thm:fluid_fixed_point}. Suppose moreover that $F$ is continuously differentiable in a neighborhood of $b^*$, with density $f(b^*)>0$. Define relative-deviation coordinates
\[
    u_t:=\frac{x_t-x^*}{x^*},
    \qquad
    v_t:=\frac{b_t-b^*}{b^*},
\]
and let
\[
    \phi:=q\lambda f(b^*),
    \qquad
    \kappa:=\frac{\phi b^*}{x^*}
    =
    \frac{q\lambda b^* f(b^*)}{x^*}.
\]
In a neighborhood where the capacity constraint is slack, the Jacobian in $(u_t,v_t)$ coordinates is
\[
    J(d)
    =
    \begin{pmatrix}
        1-\alpha & \kappa\\
        -d & 1-2d\kappa
    \end{pmatrix}.
\]
Let
\[
    d_{\mathrm{crit}}
    =
    \frac{1}{\kappa}
    \begin{cases}
    \dfrac{4-2\alpha}{3-2\alpha},
    &0<\alpha\leq \frac12,\\[1.2ex]
    \min\left\{
    \dfrac{4-2\alpha}{3-2\alpha},
    \dfrac{\alpha}{2\alpha-1}
    \right\},
    &\frac12<\alpha\leq1.
    \end{cases}
\]
Then the positive-price fixed point $(x^*,b^*)$ is locally asymptotically stable for $0 < d < \min \{1, d_{\mathrm{crit}}\}$. At $d = d_{\mathrm{crit}}$, the linearization has spectral radius exactly one. For $d_{\mathrm{crit}} < d < 1$, the fixed point is locally linearly unstable, subject to the same differentiability and non-binding-capacity neighborhood assumptions.
\end{theorem}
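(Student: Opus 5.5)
The plan is to linearize the fluid map $H$ from \eqref{eqn:fluid-update} at $(x^*,b^*)$, compute its Jacobian in relative coordinates, and read off the stability region from the Jury conditions for a $2\times 2$ real matrix. Local asymptotic stability will then follow from the standard fact that a fixed point of a $C^1$ map whose Jacobian has spectral radius $<1$ is locally asymptotically stable.

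\textbf{Step 1: $H$ is $C^1$ near the fixed point.} By \Cref{thm:fluid_fixed_point}, $q\lambda(1-F(b^*))=x^*/2<x^*$. By continuity of $F$, the strict inequality $q\lambda(1-F(p))<u$ persists on a neighborhood of $(x^*,b^*)$. On that neighborhood the capacity constraint is slack, so $g(u,p)=q\lambda(1-F(p))$. Since $F$ is $C^1$ near $b^*$, the map
\[
H(u,p)=\bigl(u-q\lambda(1-F(p))+\alpha(1-u),\; p(1-d+2d\,g(u,p)/u)\bigr)
\]
is $C^1$ there.

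\textbf{Step 2: the Jacobian.} At the fixed point we use $g/x^*=1/2$.
\begin{itemize}
\item For the supply component, $\partial_u H_x=1-\alpha$ and $\partial_p H_x=\phi$.
\item For the price component, $\partial_u H_b=-2db^*g/(x^*)^2=-db^*/x^*$ and $\partial_p H_b=1-2db^*\phi/x^*$.
\end{itemize}
Rescaling by $\mathrm{diag}(1/x^*,1/b^*)$ on the left and $\mathrm{diag}(x^*,b^*)$ on the right gives exactly $J(d)$ with $\kappa=\phi b^*/x^*$. This is a similarity transform, so the spectrum is unchanged.

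\textbf{Step 3: Jury conditions.} The trace and determinant are
\[
\mathrm{tr}\,J=2-\alpha-2d\kappa,\qquad \det J=1-\alpha-d\kappa(1-2\alpha).
\]
Both eigenvalues lie strictly inside the unit disk iff three conditions hold.
\begin{itemize}
\item $1-\mathrm{tr}+\det=d\kappa(1+2\alpha)>0$. This always holds.
\item $1+\mathrm{tr}+\det=4-2\alpha-d\kappa(3-2\alpha)>0$. This is equivalent to $d\kappa<\frac{4-2\alpha}{3-2\alpha}$.
\item $|\det J|<1$. The upper inequality $\det J<1$ reads $-\alpha<d\kappa(1-2\alpha)$. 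It is automatic for $\alpha\le\frac12$, and for $\alpha>\frac12$ it is $d\kappa<\frac{\alpha}{2\alpha-1}$. The lower inequality $\det J>-1$ reads $2-\alpha>d\kappa(1-2\alpha)$. It is automatic for $\alpha\ge\frac12$. For $\alpha<\frac12$ it requires $d\kappa<\frac{2-\alpha}{1-2\alpha}$, which is implied by the bound from the second condition, since $(2-\alpha)(3-2\alpha)>(4-2\alpha)(1-2\alpha)$ reduces to $3-2\alpha>2-4\alpha$.
\end{itemize}
So the conditions are equivalent to $d<d_{\mathrm{crit}}$ as stated. Together with $d<1$, which is needed for the price to stay positive, this gives local asymptotic stability by Step 1 and the linearization theorem.

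\textbf{Step 4: the boundary and beyond.} At $d=d_{\mathrm{crit}}$ one binding inequality becomes an equality, while the others stay strict by continuity.
\begin{itemize}
\item If $1+\mathrm{tr}+\det=0$, then $-1$ is an eigenvalue.
\item If $\det J=1$ with $|\mathrm{tr}\,J|<2$, the eigenvalues form a complex pair of modulus one.
\end{itemize}
In either case the spectral radius is exactly one. For $d>d_{\mathrm{crit}}$, that inequality is reversed. By the converse direction of the Jury criterion, an eigenvalue then has modulus $>1$, which is linear instability.

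\textbf{Main obstacle.} I expect the delicate point to be the case bookkeeping in Step 3: checking that the $\det J>-1$ constraint is never the binding one, and that the minimum in $d_{\mathrm{crit}}$ is correctly identified in each $\alpha$-regime. A secondary subtlety is the claim at the boundary that the non-binding conditions remain strict, so that the spectral radius is exactly one rather than larger.
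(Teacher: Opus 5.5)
Your proposal is correct and follows the same route as the paper: restrict to the slack-capacity neighborhood, linearize the fluid map, and conjugate by $\mathrm{diag}(x^*,b^*)$ to obtain $J(d)$. You then read off $d_{\mathrm{crit}}$ from the Jury conditions $1-\mathrm{tr}\,J+\det J>0$, $1+\mathrm{tr}\,J+\det J>0$, $1-\det J>0$. Your separate check of $\det J>-1$ is harmless but redundant, since adding the first two conditions gives $2+2\det J>0$. Your explicit treatment of the boundary case and of strict instability beyond $d_{\mathrm{crit}}$ fills in details the paper only cites as standard.
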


\begin{proof}
Since $g(x^*,b^*)=x^*/2<x^*$, the capacity constraint remains slack in a sufficiently small neighborhood of the fixed point.  In that neighborhood,
\[
    \frac{\partial g}{\partial x}(x^*,b^*)=0,
    \qquad
    \frac{\partial g}{\partial b}(x^*,b^*)=-q\lambda f(b^*)=-\phi.
\]
Linearizing the normalized fluid map in the relative-deviation coordinates $(u_t,v_t)$ gives the displayed matrix $J(d)$.  Its trace and determinant are
\[
    \operatorname{tr}J=2-\alpha-2d\kappa,
    \qquad
    \det J=(1-\alpha)+d\kappa(2\alpha-1).
\]
For a two-dimensional discrete-time linear system with characteristic polynomial $z^2-(\operatorname{tr}J)z+\det J$, Schur stability is equivalent to the three Jury conditions
\[
    1-\operatorname{tr}J+\det J>0,
    \qquad
    1+\operatorname{tr}J+\det J>0,
    \qquad
    1-\det J>0.
\]
Substitution gives:
\begin{align*}
  1 - \operatorname{tr}J + \det J = (1+2\alpha)d\kappa > 0,
\end{align*}
so the first Jury condition imposes no upper bound for $d > 0$. The second condition gives
\[
  1 + \operatorname{tr}J + \det J = 4 - 2\alpha - d\kappa(3-2\alpha) > 0,
\]
The third condition gives
\[
  1-\det J = \alpha - d\kappa(2\alpha-1) > 0.
\]
When $0< \alpha \le \frac{1}{2}$ this is automatically satisfied for $d>0$. When $\alpha > \frac{1}{2}$ it gives the additional bound $d\kappa < \frac{\alpha}{2\alpha-1}$. Combining the binding upper bounds yields $d_{\mathrm{crit}}$. The standard linearization theorem gives local asymptotic stability for $d<d_{\mathrm{crit}}$, spectral radius one at the boundary, and local linear instability beyond the relevant boundary.
\end{proof}

\subsection{Implications for the Stochastic Process}

The local stability result of Theorem~\ref{thm:fluid_local_stability} applies to the deterministic fluid approximation.
We next discuss its implication for the actual stochastic process $Z_t$, which is a small-noise perturbation of the fluid system.

\begin{lemma}[Stochastic implication of fluid stability]
\label{cor:stochastic_fluid_stability}
Assume the conditions of Theorem~\ref{thm:fluid_local_stability}. Let $U$ be a sufficiently small neighborhood of the fluid fixed point $(x^*,b^*)$, and fix a finite horizon $T$. Then there exists a smaller neighborhood $V\subseteq U$ such that, if $Z_0=z_0\in V$, then
\[
    \Pr\left[
    Z_t\in U
    \text{ for all }0\le t\le T
    \right]
    \to 1
\]
as $\Theta^{\max}\to\infty$.
\end{lemma}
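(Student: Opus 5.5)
The plan is to combine the local asymptotic stability of the fluid fixed point (\Cref{thm:fluid_local_stability}) with the finite-horizon tracking result (\Cref{lem:tracking}). First, I will choose $U$ small enough to satisfy three conditions. It lies in the region where the capacity constraint is slack and $F$ is $C^1$ with $f>0$. Its closure is a compact rectangle contained in $(0,1]\times\mathbb R_{>0}$, so $x$ is bounded away from $0$ and $b$ lies in a compact interval on which $F$ is Lipschitz. Finally, it is contained in the basin neighborhood given by local asymptotic stability. Because the fixed point is Lyapunov stable for $0<d<\min\{1,d_{\mathrm{crit}}\}$, there is an $\eta>0$ such that the closed $\eta$-neighborhood $U_\eta$ of $z^*$ lies inside $U$. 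Lyapunov stability then gives a neighborhood $V$ of $z^*$ such that every fluid trajectory started in $V$ stays in the $\eta/2$-neighborhood of $z^*$ for all $t\ge 0$. For concreteness, I would obtain $V$ from the linearization: $J(d)$ is Schur, so there is a norm $\|\cdot\|_*$ in which $H$ is a contraction near $z^*$ (with rate $\rho<1$ up to a small error), and $V$ can be taken as a small $\|\cdot\|_*$-ball.

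Second, I will fix $z_0\in V$ and let $z_t=H^t(z_0)$ be the fluid trajectory. For $t\le T$ it satisfies $x_t\ge \min_{U}x>0$, so the hypothesis of \Cref{lem:tracking} holds. That lemma requires $F$ to be locally Lipschitz, which follows from continuous differentiability near $b^*$. It then gives $\max_{t\le T}\|Z_t-z_t\|\xrightarrow{p}0$. In particular, $\Pr\bigl[\max_{t\le T}\|Z_t-z_t\|<\eta/2\bigr]\to1$. On this event, the triangle inequality yields $\|Z_t-z^*\|<\eta$ for every $t\le T$, so $Z_t\in U$, and the claim follows.

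The main obstacle is a subtlety in \Cref{lem:tracking}: it was proved using the Lipschitz constant of $H$ on a compact set containing a neighborhood of the trajectory. That set must stay within the region where $F$ is Lipschitz and $x$ is bounded below. I will handle this by taking the compact set $K$ in that proof to be $U_\eta$ itself, which is legitimate because the $\eta/2$-neighborhood of every $z_t$ lies in $U_\eta$. A second point is whether $V$ must be independent of $z_0$ while the convergence is only pointwise in $z_0$. The statement fixes $z_0$, so pointwise convergence suffices. If uniformity were desired, the constants $A_T$ and $C_K$ in \Cref{lem:tracking} depend only on $K=U_\eta$ and $T$, and the tail bound of \Cref{lem:fluid_error_onestep} is uniform in the state, so the convergence is in fact uniform over $z_0\in V$.
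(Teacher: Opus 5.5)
Your proposal is correct and follows essentially the same argument as the paper. The paper picks an open $W\ni(x^*,b^*)$ with $\overline W\subset U$ and sets $\eta=\operatorname{dist}(\overline W,U^c)$. It then uses local asymptotic stability to get $V$ whose fluid trajectories stay in $W$ up to time $T$, and applies \Cref{lem:tracking} on the event $\max_{t\le T}\|Z_t-z_t\|<\eta$; your $\eta/2$-ball plus triangle inequality is the same argument in slightly different packaging. Your choice of the compact set $K$ inside the region where $F$ is Lipschitz (the theorem assumes $F$ is only $C^1$ near $b^*$) and your uniformity-in-$z_0$ remark are valid refinements that the paper's proof leaves unstated.
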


\begin{proof}
First we will choose an open set $W$ with $(x^*,b^*) \in W$ and $\overline{W} \subset U$. Define
\[
  \eta := \operatorname{dist} \bigl(\overline{W}, U^c \bigr) > 0.
\]
By Theorem~\ref{thm:fluid_local_stability}, $(x^*,b^*)$ is locally asymptotically stable for the fluid map $H$. So there exists an open neighborhood $V \subseteq W$ of $(x^*, b^*)$ such that every fluid trajectory $z_0 \in V$ satisfies $z_t\in W$ for all $t = 0, \ldots, T$. Now fix $z_0 \in V$. The fluid sequence $z_0, z_1, \ldots, z_T$ lies in $W$ and is therefore bounded away from the boundary $x = 0$. By Lemma~\ref{lem:tracking} (applied with the current fixed $T$) we have:
\begin{equation}
    \label{eq:stochastic-limit}
    \max_{0\le t\le T}\|Z_t-z_t\| = O_p\left(\tfrac{1}{\sqrt{\Theta^{\max}}}\right)
    \xrightarrow{p} 0 \quad\text{as }\Theta^{\max}\to\infty.
\end{equation}

Let $\mathcal{E}:=\bigl\{\max_{0\le t\le T}\|Z_t-z_t\|<\eta\bigr\}$. On $\mathcal{E}$, for each $t\le T$, since $z_t\in\overline{W}$,
\[
  \operatorname{dist}(Z_t,U^c) \ge\operatorname{dist}(z_t,U^c)-\|Z_t-z_t\| > \eta - \eta=0,
\]
so $Z_t\in U$. Since $\Pr[\mathcal{E}] \to 1$ by~\eqref{eq:stochastic-limit}, the claim follows.

Note that for any fixed starting point $z_0 \in V$ and fixed horizon $T$, the stochastic process stays near the fluid fixed point with high
probability when the market is large. The lemma does not reflect on long-run behavior beyond this fixed horizon $T$.

\end{proof}

The fluid fixed point should be interpreted as the center of the large-market stochastic dynamics. When $\Theta^{\max}$ is large, the normalized stochastic process tracks the stable fluid system over finite horizons, and near the locally stable fixed point its typical one-step normalized fluctuations are of order $1/\sqrt{\Theta^{\max}}$.

%% file: simulation.tex
\section{Numerical Evaluation}
\label{sec:simulation}
In this section, we perform some numerical simulations for the dynamic bandwidth-pricing model. The experiments are organized around three questions raised by the analysis in Sections~\ref{sec:impossibility}--~\ref{sec:multi_block_dynamics}. First, does the indivisible, transaction-level stochastic system track the real-valued fluid approximation once the initial transient is removed, and does the fixed point correctly predict steady-state levels as the recovery rate and market size vary (Sections~\ref{subsec:sim-baseline} and~\ref{subsec:sim-sensitivity}). Second, is the local stability threshold identified in Theorem~\ref{thm:fluid_local_stability} visible as an operational threshold in simulated price trajectories, and not just as an asymptotic guarantee (Section~\ref{subsec:sim-stepsize}). Third, does the discretization introduced by indivisible transactions and finite capacity produce a detectable gap relative to the real-valued model, and under what conditions does that gap become relevant (Section~\ref{subsec:sim-jensen}). Section~\ref{subsec:sim-route} ends with a concrete greedy-route instance that connects the aggregate bandwidth value $\Theta^{\max}$ used throughout the simulations back to the route-level mechanism of Section~\ref{sec:greedy}.

\subsection{Experimental setup}
\label{subsec:sim-setup}

Recall that $q:=2k+1$ is the total bandwidth consumed by one successfully disseminated transaction and that $\Theta^{\max}=q|\mathcal R^\infty|$ is measured in bandwidth units. Thus $\Theta_t$ is measured in bandwidth units, while $X_t=\Theta_t/\Theta^{\max}$ is dimensionless. The arrival parameter $\lambda$ is kept in the normalization of \Cref{ass:arrivals}: at posted price $b$, the eligible-arrival mean is $\lambda\Theta^{\max}(1-F(b))$. In the baseline experiments, $k=2$, so $q=5$, and $\Theta^{\max}=10000$ bandwidth units.

The theory in \Cref{sec:multi_block_dynamics} allows $\Theta_t$ to be real-valued. The experiments simulate the natural indivisible refinement in which transactions are accepted in integer units. If $S_t$ denotes eligible arrivals, the real-valued normalized load in \Cref{eqn:norm-g} is
\[
    G_t
    =
    \min\left\{
        X_t,
        \frac{qS_t}{\Theta^{\max}}
    \right\},
\]
whereas the simulation uses transaction capacity
\[
    C_t=\left\lfloor \frac{\Theta_t}{q}\right\rfloor
\]
and normalized realized load
\[
    \widetilde G_t
    =
    \frac{q}{\Theta^{\max}}\min\{C_t,S_t\}.
\]
Equivalently, the simulated unnormalized bandwidth load is
\[
    g_t=q\min\left\{\left\lfloor \frac{\Theta_t}{q}\right\rfloor,S_t\right\}.
\]
The deterministic discretization discrepancy is bounded by one transaction:
\[
    0
    \le
    \min\{\Theta_t,qS_t\}
    -
    q\min\left\{
        \left\lfloor\Theta_t/q\right\rfloor,
        S_t
    \right\}
    <q,
\]
so the normalized one-step discrepancy is less than $q/\Theta^{\max}$. This error vanishes in the large-market normalization used by the finite-horizon fluid approximation; we do not use this observation as a stronger stochastic convergence theorem.

Unless otherwise stated, the experiments use $F=\mathrm{Uniform}[1,2]$, $\lambda=0.15$, $\alpha=0.25$, $d=0.05$, $T=20000$ rounds, and a burn-in of $5000$ rounds. Aggregate sweeps use $50$ independent seeds after burn-in, except the step-size stress test uses $30$ seeds and the exact conditional-load diagnostic uses $5$ seeds. All trajectories start from $\Theta_0=\Theta^{\max}$ and $b_0=0.72b^\star$. For the uniform baseline, \Cref{thm:fluid_fixed_point} gives
\[
    X^\star=1/3,
    \qquad
    \Theta^\star\approx3333.3,
    \qquad
    g^\star\approx1666.7,
    \qquad
    b^\star\approx1.778.
\]
Reported RMSE values are normalized by the corresponding fixed-point value. Confidence intervals are normal-approximation $95\%$ intervals across independent seeds.

\subsection{Baseline tracking and fixed-point behavior}
\label{subsec:sim-baseline}

\Cref{fig:sim-baseline} plots three stochastic trajectories together with the deterministic real-valued fluid trajectory from the same initial state, zoomed after the short initial transient. The stochastic paths fluctuate around the fixed point, while the fluid path approaches it smoothly. This supports interpreting the fixed point as the operating point of the stochastic system rather than as an absorbing state. Together with the discretization bound in Section~\ref{subsec:sim-setup}, this baseline check confirms that the indivisible simulation is an accurate implementation of the fluid model at the parameter scale used throughout the remaining experiments, rather than an independent source of drift that we would need to control for separately.

\begin{figure}[H]
    \centering
    \includegraphics[width=.86\linewidth]{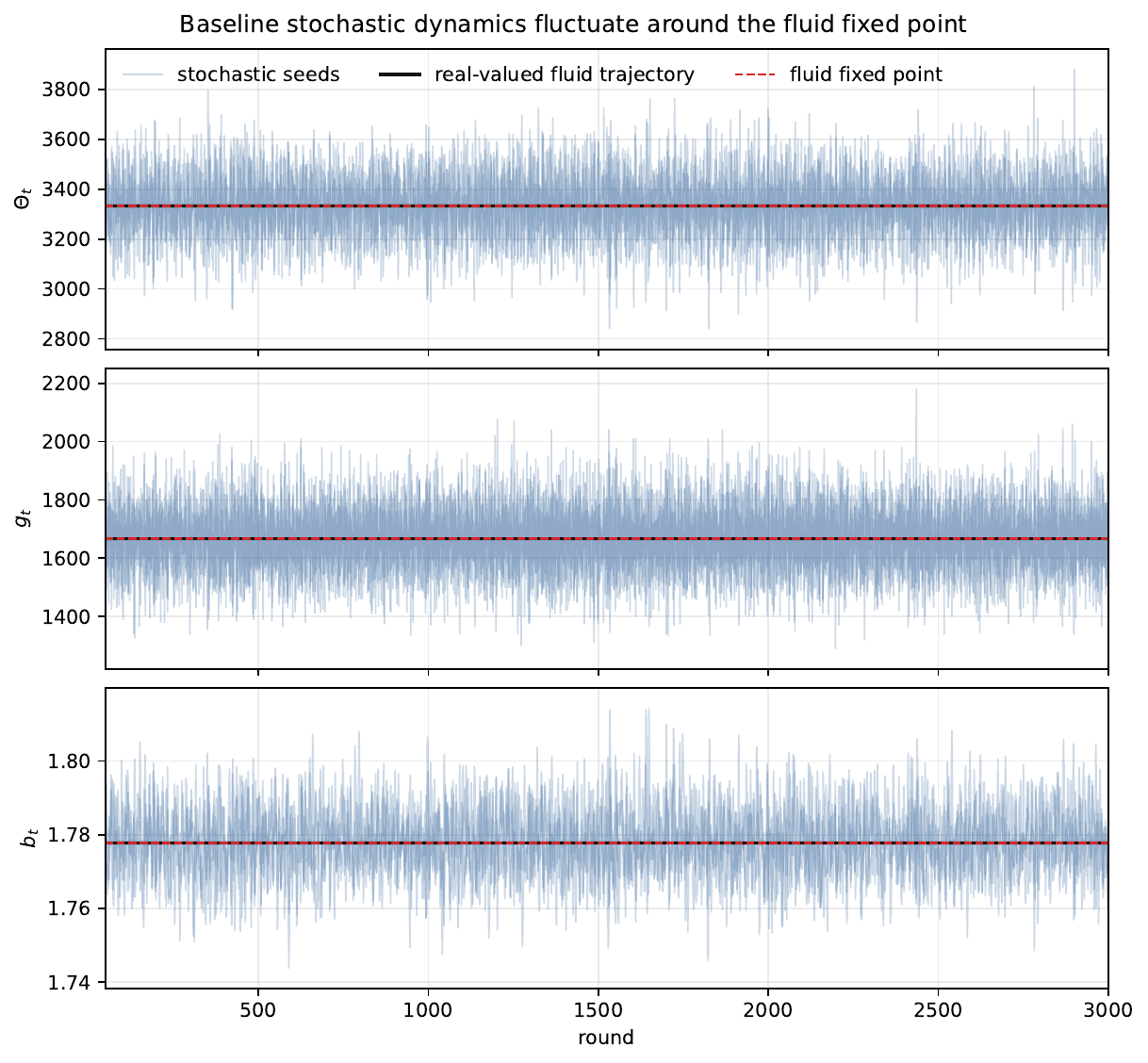}
    \caption{Baseline dynamics after the short initial transient. The plotted stochastic trajectories use indivisible transactions; the black curve is the real-valued fluid trajectory.}
    \label{fig:sim-baseline}
\end{figure}

\subsection{Recovery-rate and market-size sensitivity}
\label{subsec:sim-sensitivity}

The fixed point predicts that long-run available bandwidth increases with the recovery rate as
\[
    \Theta^\star(\alpha)=\frac{2\alpha}{1+2\alpha}\Theta^{\max}.
\]
\Cref{tab:sim-recovery,fig:sim-recovery} compare this prediction with post-burn-in averages for five recovery rates. The simulated averages closely track the closed-form predictions. Faster recovery supports higher steady-state throughput, while the target-load rule keeps consumption near one half of current available bandwidth.

\begin{table}[H]
    \centering
    \input{tables/table_recovery_sweep.tex}
    \caption{Recovery-rate sweep. Means are computed over $50$ seeds after burn-in and are reported in bandwidth units for $\Theta$ and $g$.}
    \label{tab:sim-recovery}
\end{table}

\begin{figure}[H]
    \centering
    \includegraphics[width=.78\linewidth]{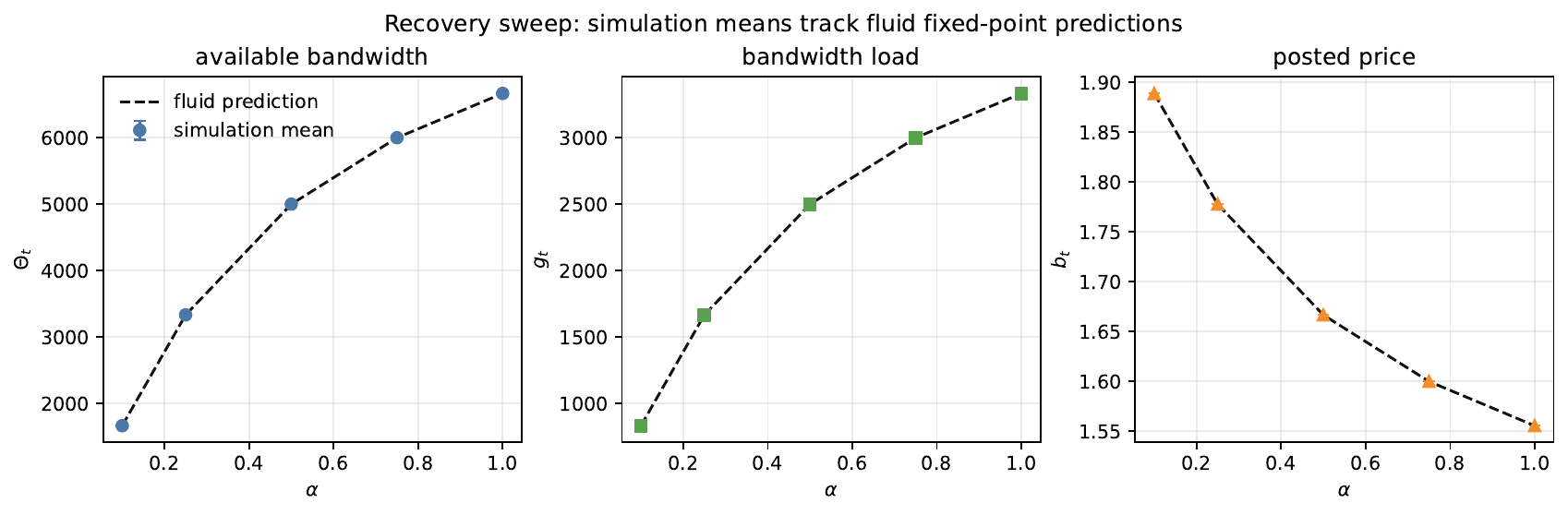}
    \caption{Recovery sweep in absolute units. Dashed curves show the fluid fixed-point predictions and markers show post-burn-in simulation means with $95\%$ confidence intervals.}
    \label{fig:sim-recovery}
\end{figure}

Having confirmed that the fixed point correctly predicts steady-state levels across recovery rates, we next ask how quickly the stochastic system concentrates around this fixed point as the market grows. \Cref{fig:sim-market-size} varies $\Theta^{\max}\in\{500,1000,5000,10000,50000\}$ while holding $k,\lambda,\alpha,d$ fixed. The normalized post-burn-in errors shrink as market size grows, which is empirically consistent with the finite-horizon large-market scaling in \Cref{lem:tracking}; it is not a separate proof of the limiting result.

\begin{figure}[H]
    \centering
    \includegraphics[width=.78\linewidth]{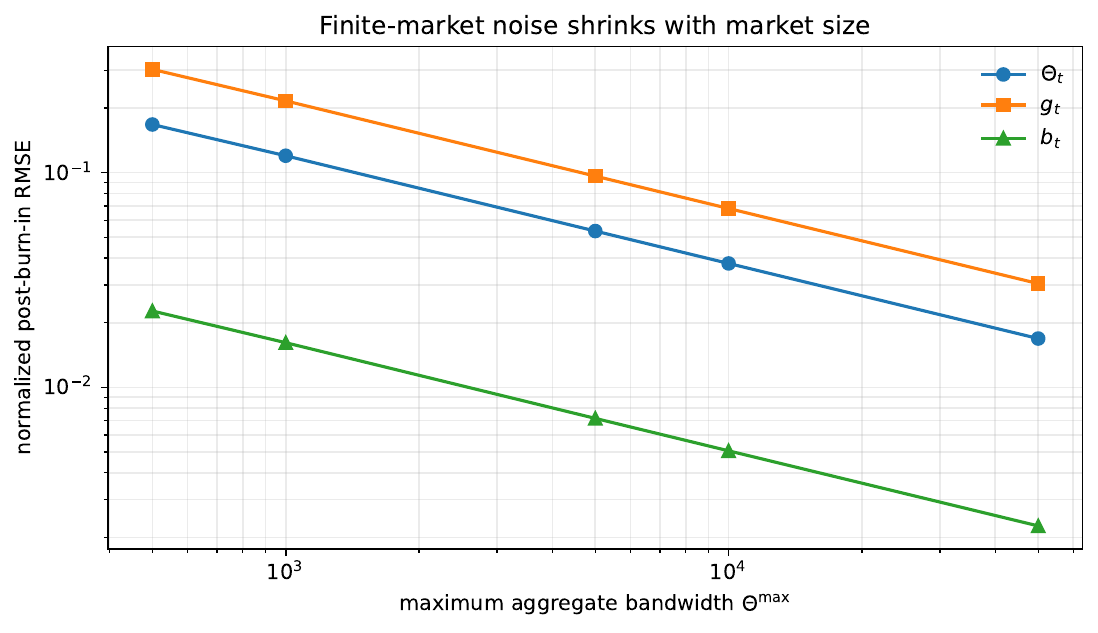}
    \caption{Finite-market concentration. Normalized stochastic deviations shrink as the maximum aggregate bandwidth $\Theta^{\max}$ increases.}
    \label{fig:sim-market-size}
\end{figure}

\subsection{Step size and local stability}
\label{subsec:sim-stepsize}

The simulations use the exact local Schur-stability boundary in \Cref{thm:fluid_local_stability}. For the uniform baseline,
\[
    \kappa=\frac{q\lambda b^\star f(b^\star)}{x^\star}=4,
    \qquad
    d_{\mathrm{crit}}=0.35.
\]
\Cref{fig:sim-stepsize,tab:sim-stepsize} vary $d$ below, at, and above this threshold. The RMSE statistics use the full post-burn-in horizon specified in the setup. The trace panel plots a downsampled $5000$-round representative path for readability.

\begin{figure}[H]
    \centering
    \includegraphics[width=.98\linewidth]{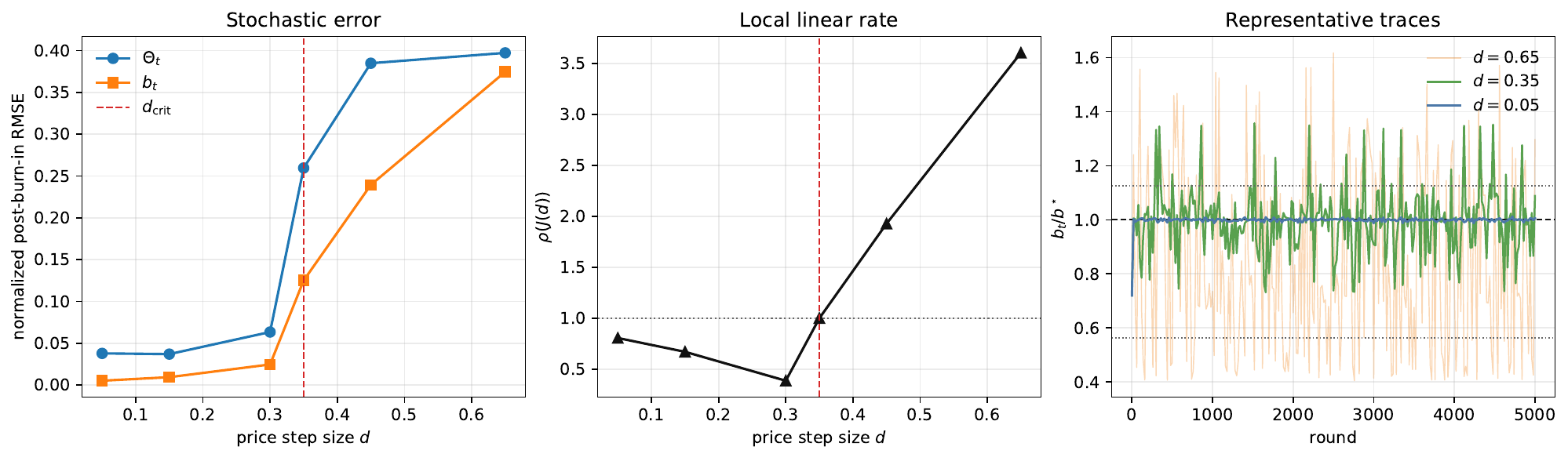}
    \caption{Step-size stress test. Left: post-burn-in stochastic RMSE over the full simulation horizon. Middle: deterministic spectral radius of the local Jacobian. Right: downsampled representative price traces over $5000$ plotted rounds.}
    \label{fig:sim-stepsize}
\end{figure}

\begin{table}[H]
    \centering
    \input{tables/table_step_size_stability.tex}
    \caption{Step-size diagnostics. The boundary row has $\rho(J(d))=1$ up to numerical precision.}
    \label{tab:sim-stepsize}
\end{table}

Table~\ref{tab:sim-stepsize} makes the content of Theorem~\ref{thm:fluid_local_stability} visible at the level of sample statistics. For $d$ below $d_{\mathrm{crit}}$, the spectral radius $\rho(J(d))$ stays below one and the normalized RMSE for both $\Theta_t$ and $b_t$ remains flat and small. At $d=d_{\mathrm{crit}}$ the boundary row already shows a visible jump in RMSE despite $\rho(J(d))=1$ only up to numerical precision. For $d>d_{\mathrm{crit}}$ the local linearization is unstable and the RMSE roughly doubles to triples, depending on the series, relative to the boundary case. The representative traces in the right panel of Figure~\ref{fig:sim-stepsize} show this transition directly. The $d=0.05$ and $d=0.35$ paths stay close to $b^\star$, while the $d=0.65$ path shows the oscillations characteristic of a locally unstable linearization.

The experiments so far treat $\Theta_t$ as real-valued, consistent with the fluid model of Section~\ref{sec:multi_block_dynamics}. We now examine the indivisibility assumption introduced in Section~\ref{subsec:sim-setup}, checking whether the resulting discretization produces a measurable gap between the fluid load and the exact conditional mean load of the discrete simulation, and identifying the conditions where this gap becomes non-negligible.

\subsection{Capacity binding and Jensen gap}
\label{subsec:sim-jensen}

For transaction capacity $C=\lfloor\Theta/q\rfloor$ and $A\sim\mathrm{Poisson}(\mu)$, the exact conditional mean load of the discrete simulation is
\[
    \overline g(\Theta,b)
    =
    q\,\mathbb E[\min\{C,A\}],
    \qquad
    \mu=\lambda\Theta^{\max}(1-F(b)).
\]
The real-valued fluid load is $\widehat g(\Theta,b)=\min\{\Theta,q\mu\}$. The comparison is
\[
    q\,\mathbb E[\min\{C,A\}]
    \leq
    q\min\{C,\mathbb E[A]\}
    \leq
    \min\{\Theta,q\mathbb E[A]\}.
\]
The first inequality follows from concavity and Jensen's inequality; the second follows from $qC\leq\Theta$.

\Cref{tab:sim-diagnostics} reports an implementation sanity check for the exact conditional mean. In the baseline regime the fixed-point cap is about $18.2$ Poisson standard deviations above the mean, so capacity binding is negligible and the Jensen gap is numerically zero. A deliberately small tight-cap case with $\Theta^{\max}=100$ makes truncation visible while keeping the same valuation and update-rule parameters.

\begin{table}[H]
    \centering
    \input{tables/table_diagnostics.tex}
    \caption{Diagnostics for cap binding, the Jensen gap, and the concrete route instance. The tight-cap column is a stress case, not the baseline regime.}
    \label{tab:sim-diagnostics}
\end{table}

\subsection{Concrete route instance}
\label{subsec:sim-route}

Finally, we instantiate a concrete linear-cost route instance in which the budget constraint is slack. There are $m=20$ validators and $k=2$, hence $q=5$. The capacity vector is $(497,540,\ldots,540)$. We use
\[
    g_i(x)=x,
    \qquad
    f_i(x)=c_i x,
\]
with $\tau=1$, $c_i=0.5$ for every validator, and route budget $b=5$. Since $q\tau=5\leq b$ and $c_i\leq\tau$ for every used validator, every capacity-feasible route is also budget-feasible. The capacity-only greedy scan therefore matches the full greedy mechanism on this instance.

Scanning routes lexicographically and accepting as many copies as possible yields $|\mathcal R^\infty|=2000$ accepted transactions. Because each transaction consumes $q=5$ bandwidth units,
\[
    q|\mathcal R^\infty|=5\cdot2000=10000=\Theta^{\max}.
\]
This check is only an aggregate-realizability witness for the baseline bandwidth value; it is not a universal validation of all greedy-route instances or a load-balancing benchmark.

%% file: tables/table_recovery_sweep.tex
\begin{tabular}{rrrrrr}
\toprule
$\alpha$ & $\Theta^\star$ pred. & $\Theta$ mean & $g^\star$ pred. & $g$ mean & $b$ mean \\
\midrule
0.10 & 1666.7 & 1665.6 & 833.3 & 833.4 & 1.889 \\
0.25 & 3333.3 & 3334.7 & 1666.7 & 1666.3 & 1.778 \\
0.50 & 5000.0 & 5001.2 & 2500.0 & 2499.4 & 1.667 \\
0.75 & 6000.0 & 6000.9 & 3000.0 & 2999.3 & 1.600 \\
1.00 & 6666.7 & 6667.4 & 3333.3 & 3332.6 & 1.556 \\
\bottomrule
\end{tabular}

%% file: tables/table_step_size_stability.tex
\begin{tabular}{rrrrr}
\toprule
$d$ & $\rho(J(d))$ & Stable? & $\Theta$ RMSE & $b$ RMSE \\
\midrule
0.05 & 0.806 & yes & 0.038 & 0.005 \\
0.15 & 0.671 & yes & 0.037 & 0.009 \\
0.30 & 0.387 & yes & 0.063 & 0.025 \\
0.35 & 1.000 & boundary & 0.259 & 0.125 \\
0.45 & 1.928 & no & 0.385 & 0.239 \\
0.65 & 3.603 & no & 0.397 & 0.374 \\
\bottomrule
\end{tabular}

%% file: tables/table_diagnostics.tex
\begin{tabular}{lrr}
\toprule
Diagnostic & Baseline & Tight cap \\
\midrule
$\Theta^{\max}$ & 10000 & 100 \\
Cap-binding frequency & 0.00\% & 10.40\% \\
Fixed-point $P(\mathrm{bind})$ & 2.92e-58 & 5.32e-02 \\
Fixed-point cap slack & 18.2$\sigma$ & 1.5$\sigma$ \\
Realized mean load & 1666.4 & 16.6 \\
Exact conditional mean load & 1666.2 & 16.5 \\
Fluid mean load & 1666.2 & 17.5 \\
Fluid-minus-exact gap & 0.000 & 0.948 \\
Gap as share of fluid mean & 0.000\% & 5.432\% \\
\midrule
Route cost model & \multicolumn{2}{r}{$g_i(x)=x$, $f_i(x)=c_i x$} \\
Route $\tau$, $b$, max $c_i$ & \multicolumn{2}{r}{1.0, 5.0, 0.5} \\
Greedy accepted transactions & \multicolumn{2}{r}{2000} \\
Route-implied $\Theta^{\max}=q|\mathcal R^\infty|$ & \multicolumn{2}{r}{10000} \\
Greedy min/max utilization & \multicolumn{2}{r}{20.0\% / 100.0\%} \\
\bottomrule
\end{tabular}

%% file: Sections/mpc-tfm.tex
\section{MPC-Assisted Transaction Fee Mechanisms}
\label{app:mpc-assisted-tfm}

This section recalls the MPC-assisted TFM model of \cite{shi2023_tfm} for the finite block-size setting.  The model abstracts an MPC protocol among validators (originally miners in \cite{shi2023_tfm}) as an ideal functionality $\mathcal F_{\rm MPC}$ that implements the TFM faithfully.
\subsection{The MPC-Assisted TFM Game}

Fix a block-size parameter $\Theta$.  In the MPC-assisted model, a TFM is specified by three possibly randomized rules: an allocation rule, a payment rule, and a validator-revenue rule.  The game proceeds as follows.

\begin{enumerate}
    \item Each player may act through zero or more identities.  Each identity submits one bid to $\mathcal F_{\rm MPC}$; let $\boldsymbol b$ be the resulting bid vector.
    \item The ideal functionality $\mathcal F_{\rm MPC}$ runs the allocation rule on $\boldsymbol b$ and selects at most $\Theta$ bids to confirm.
    \item The functionality runs the payment rule and outputs a payment vector $(p_1,...,p_n)$, where unconfirmed bids pay $0$.
    \item The functionality runs the validator-revenue rule and outputs total validator revenue $\mu(\boldsymbol b)$, which is distributed among the validators.  Budget feasibility requires that the total validator revenue is no larger than the total payments collected from confirmed bids.
\end{enumerate}

The important difference from the plain TFM model is that validators do not choose the block contents after observing the bids.  The allocation, payment, and revenue rules are enforced by $\mathcal F_{\rm MPC}$.  Thus the validator-side strategic power that remains in the ideal model is the ability to participate through identities and submit fake bids, not the ability to override the allocation rule.

\subsection{Strategy Space and Utility}

An honest user with value $v_i$ participates through a single identity and submits the bid $v_i$.  A strategic user may instead submit no bid, one non-truthful bid, or multiple bids under multiple identities.  Strategic users may not impersonate honest users' identities.

An honest validator does not submit bids and simply receives its share of the validator revenue.  A strategic validator, or a coalition controlling a $\rho$ fraction of validators, may submit fake bids through arbitrary identities.  In the $\mathcal F_{\rm MPC}$-hybrid model used by \cite{shi2023_tfm}, such a validator coalition cannot choose which bids are allocated once the bid vector is fixed.

If user $i$'s transaction is confirmed and pays $p_i$, its utility is $v_i - p_i$.
If the transaction is not confirmed, its utility is $0$. A validator's utility is its revenue.  The utility of a coalition $\C$ is the sum of the utilities of its members.

The incentive notions used by \cite{shi2023_tfm} are the following.  UIC requires that no individual user can increase its expected utility by deviating from truthful bidding.  MIC requires that no coalition controlling up to a $\rho$ fraction of validators can increase its validator revenue by deviating.  $(\rho,c)$-SCP requires that no coalition consisting of up to a $\rho$ fraction of validators and up to $c$ users can increase its joint utility by deviating.  We state the ex-post form needed for our impossibility results; it follows from the Bayesian impossibilities in \cite{shi2023_tfm} by applying them to point-mass distributions over fixed bid vectors.

\subsection{MPC-Assisted Impossibility Theorems}

We use the following theorem.

\begin{theorem}[MPC-assisted zero validator revenue; Theorem C.5 of \cite{shi2023_tfm}]
\label{thm:mpc-zero-validator-revenue}
For any possibly randomized MPC-assisted TFM, if the mechanism satisfies ex-post UIC, ex-post MIC against a $\rho$-sized validator coalition, and ex-post SCP against a $(\rho,1)$-sized validator-user coalition, then for every bid vector $\boldsymbol b$,
\[
    \mu(\boldsymbol b) = 0.
\]
\end{theorem}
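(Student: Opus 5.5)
The plan is to derive this ex-post statement from the Bayesian impossibility of \cite{shi2023_tfm}, as the paragraph before the theorem suggests, and to keep a direct ex-post argument in reserve in case the Bayesian hypotheses do not transfer cleanly.

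\emph{Reduction.} Fix an arbitrary bid vector $\boldsymbol b$ and take the product of point-mass distributions at its coordinates. Ex-post UIC, MIC and $(\rho,1)$-SCP each bound a coalition's utility for every realized type profile. So they hold in particular at the realization $\boldsymbol b$, and hence in expectation under the point-mass prior. This means the Bayesian versions of the three properties hold for that prior, and the Bayesian theorem gives zero expected validator revenue there, i.e.\ $\mu(\boldsymbol b)=0$. Since $\boldsymbol b$ was arbitrary, the claim follows. The one thing to check carefully is that the Bayesian theorem in \cite{shi2023_tfm} allows arbitrary (including degenerate) independent priors. If it needs full-support or continuous priors, I would use the direct argument below instead.

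\emph{Direct argument.} The goal is to show that $\mu$ is unchanged when any single bid is removed, and then to induct down to the empty bid vector. There nothing is confirmed, so budget feasibility forces $\mu(\emptyset)=0$. Fix $\boldsymbol b=(b_i,\boldsymbol b_{-i})$ and proceed in three steps.
\begin{enumerate}
\item \emph{Upper bound via MIC.} Starting from $\boldsymbol b_{-i}$, a $\rho$-coalition of validators can inject the fake bid $b_i$. Ex-post MIC then yields $\rho\mu(\boldsymbol b)-p_i(\boldsymbol b)\le \rho\mu(\boldsymbol b_{-i})$.
\item \emph{Lower bound via SCP.} Consider the coalition of user $i$ (true value $b_i$) with the same validators, and the deviation in which the user abstains. $(\rho,1)$-SCP yields $\rho\mu(\boldsymbol b_{-i})\le u_i(\boldsymbol b)+\rho\mu(\boldsymbol b)$.
\item \emph{Closing the gap.} Use UIC, via Myerson's characterization of the user's allocation and payment as functions of $b_i$, to relate $p_i$ and $u_i$ along the path that lowers $b_i$ to $0$. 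Apply the two inequalities above with user $i$'s true value varied continuously.
\end{enumerate}
The intended outcome of the third step is that $\rho\mu(b_i',\boldsymbol b_{-i})$ is constant in $b_i'$. At $b_i'=0$ the bid is equivalent to absence, so $\mu(\boldsymbol b)=\mu(\boldsymbol b_{-i})$.

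I expect the third step to be the main obstacle. UIC pins down payments only up to the allocation curve, and the SCP inequality, taken over all deviations $b_i'$ and all true values, must be shown to force the derivative of $\rho\mu$ in $b_i$ to vanish. My first attempt is to write the SCP constraint as joint utility $u_i(b_i';b_i)+\rho\mu(b_i')$ being maximized at $b_i'=b_i$, and to use the envelope theorem. Together with the Myerson identity, this makes $\rho\mu$ satisfy the first-order condition at every truthful point, so it is constant. If this route fails, I would fall back on citing \cite{shi2023_tfm} directly for the theorem, as the paper does.
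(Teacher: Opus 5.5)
Your primary route is the paper's own. The paper gives no argument beyond citing \cite{shi2023_tfm} and remarking that the ex-post form follows by applying the Bayesian theorem to point-mass distributions over fixed bid vectors, which is exactly your reduction: ex-post properties hold realization by realization, hence under every prior.

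You are right to flag that this needs the cited theorem to admit degenerate, non-identical independent priors. Your fallback also goes through: steps 1 and 2 at a zero bid, where IR forces $p_i=u_i=0$, equate a zero bid with absence; and the envelope step is rigorous because revenue is the difference of two convex functions of the true value (the SCP value function and the UIC utility) whose a.e.\ derivatives both equal $x_i$, so revenue is constant.
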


Thus, if the induced MPC-assisted mechanism satisfies the corresponding ex-post UIC, MIC, and $1$-SCP properties on every fixed input, its validator revenue must be zero on every fixed input.

Here non-trivial utility means that, for some input distribution and some participant, the mechanism gives positive expected utility.  Intuitively, once the validator revenue is forced to zero, a two-user coalition can exploit congestion: one colluding user can drop or alter its bid to improve the other colluding user's confirmation probability, and strict two-user SCP rules out any mechanism that creates positive utility under finite block space.